\crefname{section}{Section}{Sections} 
\crefname{subsection}{Subsection}{Subsections} 
\definecolor{darkred}{rgb}{0.75,0,0}
\definecolor{slgray}{rgb}{0.92,0.92,0.92} 
\newcommand{\signature}{\sigma}
\newcommand{\shape}{\bar{\sigma}}
\newcommand{\cw}{\mathsf{cw}} 
\newcommand{\tw}{\mathsf{tw}} 
\newcommand{\td}{\mathsf{td}} 
\newcommand{\vc}{\mathsf{vc}} 
\newcommand{\nd}{\mathsf{nd}} 
\newcommand{\cd}{\mathsf{cd}} 
\newcommand{\tc}{\mathsf{tc}} 
\newcommand{\dist}{\operatorname{dist}}
\newcommand{\bigO}{\mathcal{O}}
\newcommand*{\bottop}{%
  \mathpalette\@bottop{}%
}
\newcommand*{\@bottop}[2]{%
  \rlap{$#1\bot\m@th$}
  \top
}
\newcommand{\mso}[1]{$\mathsf{MSO}_{#1}$}
\newcommand{\MSORfull}[1]{\mso{#1}-\textsc{Reconfiguration}}
\newcommand{\MSOR}[1]{\mso{#1}-$\mathrm{R}$}
\newcommand{\TJ}{\mathsf{TJ}}
\newcommand{\numq}{\mathsf{q}}
\newenvironment{listing}[1]{%
  \begin{list}{*}{%
      \settowidth{\labelwidth}{#1}%
      \setlength{\leftmargin}{\labelwidth}%
      \advance \leftmargin by 12pt
      \setlength{\itemsep}{0pt}%
      \setlength{\parsep}{0pt}%
      \setlength{\topsep}{0pt}%
      \setlength{\parskip}{0pt}%
    }%
  }{%
  \end{list}
}
\newcommand{\problemtitle}[1]{\gdef\@problemtitle{#1}}
\newcommand{\probleminput}[1]{\gdef\@probleminput{#1}}
\newcommand{\problemquestion}[1]{\gdef\@problemquestion{#1}}
  \par\addvspace{.5\baselineskip}
  \par\addvspace{.5\baselineskip}
\title{Algorithmic Meta-Theorems for Combinatorial Reconfiguration Revisited\thanks{%
Submitted to the editors \today.
A preliminary version appeared in the proceedings of 
the 30th European Symposium on Algorithms (ESA 2022),
Leibniz International Proceedings in Informatics 244 (2022) 61:1--61:15.
\funding{This work was partially supported by JSPS KAKENHI Grant Numbers
  JP18H04091, 
  JP19K11814, 
  JP20H00595, 
  JP20H05793, 
  JP20K19742,
  JP21K11752, 
  JP22H00513. 
}}}
\author{%
  Tatsuya Gima\thanks{%
    Nagoya University, Nagoya, Japan 
    (\email{gima@nagoya-u.jp}, \email{otachi@nagoya-u.jp}).
  }
  \and
  Takehiro Ito\thanks{%
    Graduate School of Information Sciences, Tohoku University, Sendai, Japan
    (\email{takehiro@tohoku.ac.jp}).
  }
  \and
  Yasuaki Kobayashi\thanks{%
    Hokkaido University, Sapporo, Japan
    (\email{koba@ist.hokudai.ac.jp}).
  }
  \and
  Yota Otachi\footnotemark[2]
}
\begin{document}

\maketitle

\begin{abstract}
Given a graph and two vertex sets satisfying a certain feasibility condition,
a reconfiguration problem asks whether we can reach one vertex set from the other
by repeating prescribed modification steps while maintaining feasibility.
In this setting, Mouawad et al.~[IPEC 2014] presented
an algorithmic meta-theorem for reconfiguration problems that says
if the feasibility can be expressed in monadic second-order logic (MSO),
then the problem is fixed-parameter tractable parameterized by $\textrm{treewidth} + \ell$,
where $\ell$ is the number of steps allowed to reach the target set.
On the other hand, it is shown by Wrochna~[J.\ Comput.\ Syst.\ Sci.\ 2018]
that if $\ell$ is not part of the parameter,
then the problem is PSPACE-complete even on graphs of constant bandwidth.

In this paper, we present the first algorithmic meta-theorems for the case where $\ell$ is not part of the parameter,
using some structural graph parameters incomparable with bandwidth.
We show that if the feasibility is defined in MSO,
then the reconfiguration problem under the so-called token jumping rule is
fixed-parameter tractable parameterized by neighborhood diversity.
We also show that the problem is fixed-parameter tractable parameterized by $\textrm{treedepth} + k$,
where $k$ is the size of sets being transformed.
We finally complement the positive result for treedepth by showing that
the problem is PSPACE-complete on forests of depth~$3$.
\end{abstract}

\begin{keywords}
combinatorial reconfiguration, 
fixed-parameter tractability, 
monadic second-order logic, 
neighborhood diversity,
treedepth
\end{keywords}

\begin{MSCcodes}
68Q27, 
05C85, 
05C69, 
68Q25, 
68W40  
\end{MSCcodes}


\section{Introduction}
\label{sec:intro}

A reconfiguration problem asks, given two feasible solutions $S$ and $S'$ of a combinatorial problem,
whether there is a step-by-step transformation from $S$ to $S'$ without losing the feasibility~\cite{ItoDHPSUU11}.
The field studying such problems, called \emph{combinatorial reconfiguration}, is growing rapidly.
The source combinatorial problems in reconfiguration problems have spread in many subareas of theoretical computer science
(see surveys~\cite{Heuvel13,Nishimura18}).
In this work, we focus on reconfiguration problems on graphs,
especially the ones considering some vertex subsets as feasible solutions.
Such problems involve classic properties like
independent sets~\cite{KaminskiMM12},
vertex covers~\cite{MouawadNRS18},
dominating sets~\cite{SuzukiMN16},
and some connected variants~\cite{LokshtanovMPS22}.
Restrictions to some important graph classes 
such as bipartite graphs~\cite{LokshtanovM19}, split graphs~\cite{BelmonteKLMOS21}, and sparse graphs~\cite{LokshtanovMPRS18}
are also studied.

Since many problems are studied under many settings in combinatorial reconfiguration,
one may ask for a unified method, or an \emph{algorithmic meta-theorem}, for handling reconfiguration problems like Courcelle's theorem 
for classic (non-reconfiguration) problems~\cite{Courcelle90mso1,Courcelle92mso3,ArnborgLS91,BoriePT92,CourcelleMR00}.
Since reconfiguration problems are hard in general (often PSPACE-complete~\cite{ItoDHPSUU11}),
we need to consider some special cases or introduce some additional parameters to consider fixed-parameter tractability.
One successful approach in this direction was taken by Mouawad et al.~\cite{MouawadNRW14}, who showed that
if the feasible solutions in a graph can be expressed in monadic second-order logic,
then the reconfiguration problem (under reasonable transformation rules)
is fixed-parameter tractable parameterized simultaneously by the treewidth of the underlying graph 
and the length of a transformation sequence. 
Their method is quite general and can be applied to several other settings.\footnote{%
We elaborate on this a little more in \cref{ssec:related}.}
On the other hand, Wrochna~\cite{Wrochna18} showed that if the length of a transformation sequence is not part of the parameter,
then some problems that fit in this framework are PSPACE-complete even on graphs of constant bandwidth.

The two results mentioned above
(the tractability parameterized by treewidth $+$ transformation length~\cite{MouawadNRW14} and 
the intractability parameterized solely by bandwidth~\cite{Wrochna18}) might be interpreted as that
if we have the length of a transformation sequence in the parameter, then we can do pretty much everything we expect,
and otherwise we can expect very little.
Thus, one might conclude that this line of research is complete and
the length of a transformation sequence is necessary and sufficient in some sense for having efficient algorithms.
Indeed, to the best of our knowledge, the study of algorithmic meta-theorems for reconfiguration problems was not extended after these results.

In this paper, we revisit the investigation of algorithmic meta-theorems for reconfiguration problems
and shed light on the settings where the length of a transformation sequence is \emph{not} part of the parameter.
In particular, we present fixed-parameter algorithms
for the reconfiguration problem of vertex sets defined by a monadic second-order formula
parameterized by vertex cover number or neighborhood diversity.
We also show that when combined with the solution set size,
treedepth can be used to obtain a fixed-parameter algorithm.
We then complement this result by showing that when the solution size is not part of the parameter,
the problem is PSPACE-complete on graphs of constant treedepth.

\subsection{Our results}
Now we give a little more precise description of our results.
Formal definitions not given here can be found in \cref{sec:pre}.

For a graph $G$, we denote its clique-width by $\cw(G)$, 
treewidth by $\tw(G)$,
treedepth by $\td(G)$, vertex cover number by $\vc(G)$,
neighborhood diversity by $\nd(G)$,
cluster deletion number by $\cd(G)$.
(We define some of these parameters in the last part of \cref{sec:pre}.)
See \cref{fig:parameters} for the hierarchy among the graph parameters studied in this paper and some related ones.
For a graph parameter $\mathsf{f}$, we often say informally that
a problem is fixed-parameter tractable ``parameterized by $\mathsf{f}$''
to mean ``parameterized by $\mathsf{f}(G)$, where $G$ is the input graph.''

Given a monadic second-order (\mso{1}) formula  $\phi$ with one free set variable,
a graph $G$, and two vertex subsets $S, S'$ of the same size,
\MSORfull{1} (\MSOR{1}) asks whether there exists a sequence of vertex subsets from $S$ to $S'$
such that each set in the sequence satisfies the property expressed by $\phi$
and each set in the sequence is obtained from the previous one by exchanging one vertex with another.
Note that this rule allows to exchange any pair of vertices.
Such a rule is well studied and called the \emph{token jumping} rule~\cite{KaminskiMM12}.
There is another well-studied rule called the \emph{token sliding} rule~\cite{HearnD05},
which requires that the exchanged vertices are adjacent in $G$.
We focus on the simpler rule token jumping in this paper
and comment on the token sliding counter parts in \cref{sec:token-sliding}.
\MSORfull{2} (\MSOR{2}) with more general \mso{2} formulas is defined analogously.

To show a concrete example of \MSOR{1}, 
let  $\phi(S) \coloneqq \forall u \, \forall v \colon (u \in S \land v \in S) \Rightarrow \lnot E(u,v)$.
This $\phi$ is an \mso{1} formula (see \cref{sec:pre}) expressing that $S$ is an independent set.
Thus, \MSOR{1} with this $\phi$ is exactly \textsc{Independent Set Reconfiguration} under the token jumping rule.

Now the main results in this paper can be summarized as follows.
\begin{enumerate}
  \item \MSOR{1} is fixed-parameter tractable parameterized by $\nd + |\phi|$.
  \begin{itemize}
    \item \MSOR{2} is fixed-parameter tractable parameterized by $\vc + |\phi|$,
    but not by $\nd + |\phi|$ unless $\mathrm{E} = \mathrm{NE}$.
    \item The positive results here strongly depend on the token jumping rule.
  \end{itemize}

  \item \MSOR{2} is fixed-parameter tractable parameterized by $\td + k + |\phi|$, where $k$ is the size of input sets $S$ and $S'$.
  \begin{itemize}
    \item This result holds also under the token sliding rule.
    \item As a by-product, we show that \MSOR{1} is fixed-parameter tractable parameterized by $\cd + k + |\phi|$.
  \end{itemize}

  \item For some fixed $\phi$, \MSOR{1} is PSPACE-complete even on forests of depth~$3$.
  \begin{itemize}
    \item A similar hardness result can be shown under the token sliding rule.
  \end{itemize}
\end{enumerate}
In all positive results, we can find a shortest sequence for transformation if any exists.

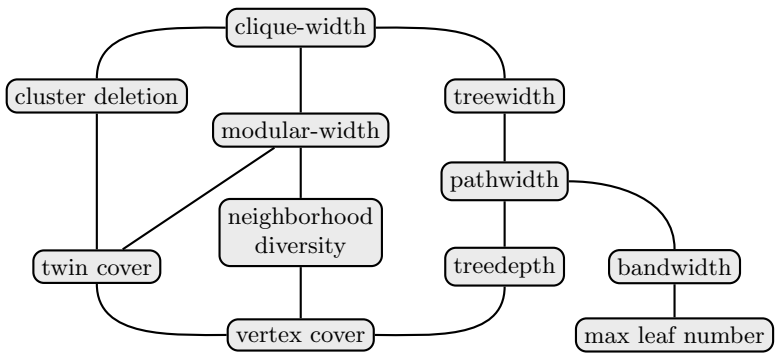
\begin{figure}[tb]
  \centering
\begin{tikzpicture}[every node/.style={draw, thick, rectangle, rounded corners,align=center,fill=slgray},scale=.90]
  \small

  \node (vc) at (0,   0) {vertex cover};
  \node (ml) at (5.5, 0) {max leaf number};

  \node (nd) at (0,   1.5) {neighborhood \\ diversity};
  \node (td) at (3,   1) {treedepth};
  \node (bw) at (5.5, 1) {bandwidth};
  \node (tc) at (-3,  1) {twin cover};

  \node (mw) at (0,  3) {modular-width};
  \node (pw) at (3,  2.25) {pathwidth};

  \node (cd) at (-3, 3.5) {cluster deletion};
  \node (tw) at (3,  3.5) {treewidth};

  \node (cw) at (0, 4.5) {clique-width};

  \draw[thick] (cw) -- (mw) -- (nd) -- (vc);
  \draw[thick] (tw) -- (pw) -- (td);

  \draw[thick] (cw) [out=0,in=90] to (tw);
  \draw[thick] (cw) [out=180,in=90] to (cd);

  \draw[thick] (td) [out=270,in=0] to (vc);
  \draw[thick] (tc) [out=270,in=180] to (vc);
  
  \draw[thick] (cd) -- (tc);

  \draw[thick] (tc) -- (mw);

  \draw[thick] (bw) -- (ml);
  \draw[thick] (pw) [out=0,in=90] to (bw);

\end{tikzpicture}
  \caption{The graph parameters studied in this paper (with some related ones).
    A connection between two parameters indicates the existence of a function in the one above 
    that lower-bounds the one below.}
  \label{fig:parameters}
\end{figure}

\subsection{Related work}
\label{ssec:related}
Wrochna~\cite{Wrochna18} showed that \MSOR{1} is PSPACE-complete
on graphs of constant bandwidth when $\phi$ expresses independent sets.
This implies the PSPACE-completeness of \MSOR{1} on graphs of
constant pathwidth, treewidth, and clique-width (see \cref{fig:parameters}).
This result was later strengthened by van der Zanden~\cite{Zanden15},
who showed that the same problem is PSPACE-complete on planar graphs having constant bandwidth and maximum degree~$3$.

To cope with this intractability,
Mouawad et al.~\cite{MouawadNRW14} considered a variant with the additional restriction that 
the length of a transformation sequence cannot exceed some upper bound $\ell$.
They showed that this variant of \MSOR{2} is fixed-parameter tractable parameterized by $\ell + \tw + |\phi|$.
They reduce the reconfiguration problem 
to the model-checking problem of a single \mso{2} formula 
by expressing the existence of fewer than $\ell$ intermediate sets that satisfy $\phi$
and also expressing that the change from a set to the next one obeys the transformation rule.
Their framework is quite general and can be used in several other settings
such as vertex sets defined by an \mso{1} formula with the parameter $\ell + \cw + |\phi|$,
or size-$k$ vertex sets defined by a first-order formula with the parameter $\ell + k + |\phi|$ on a nowhere dense graph class (as observed also in \cite{LokshtanovMPRS18}).
Also, since the step-by-step modification can be defined by a formula, 
the results apply not only for the token jumping rule but also for several other rules including the token sliding rule.\footnote{%
Actually, the rule used in~\cite{MouawadNRW14} was another one called ``token addition and removal,'' which is trickier to handle.}

Another important line of studies on parameterized complexity of reconfiguration problems
take the input set size $k$ as the main parameter instead of a graph structural parameter.
This line was initiated by Mouawad et al.~\cite{MouawadN0SS17},
who showed several results parameterized solely by $k$ and also by $k + \ell$.
See the recent survey by Bousquet et al.~\cite{BousquetMNS22arxiv}.
Recently, Bodlaender et al.~\cite{BodlaenderGNS21,BodlaenderGS21} 
further extended this line by showing that
depending on whether and how the parameter depends on $\ell$,
the problem becomes complete to XL, XNL, or XNLP\@.


\section{Preliminaries}
\label{sec:pre}
We assume that the reader is familiar with the parameterized complexity theory.
See a standard textbook (e.g., \cite{CyganFKLMPPS15,Niedermeier06,FlumG06,DowneyF13}) for basic definitions.

Let $G = (V,E)$ be a graph.
For $X \subseteq V$, we denote by $G[X]$ and $G-X$ the graphs induced by $X$ and $V \setminus X$, respectively.
We sometimes denote the vertex set of $G$ by $V(G)$ and the edge set by $E(G)$.
For a digraph $D$, we denote by $A(D)$ its arc set.

For a non-negative integer $d$, let $[d]$ denote the set $\{i \in \mathbb{Z} \mid 1 \le i \le d\}$.
For two non-negative integers $a,b$ with $a \le b$, let $[a,b]$ denote the set $\{i \in \mathbb{Z} \mid a \le i \le b\}$.

\subparagraph*{Colored graphs}

In this paper, we consider graphs in which each vertex has a (possibly empty) set of colors.
We call them \emph{colored graphs}.
Formally, a colored graph $G$ is a tuple $(V,E,\mathcal{C})$
such that the vertex set is $V$,
the edge set $E \subseteq \binom{V}{2}$ is a set of unordered pairs of vertices,
and
$\mathcal{C} = \langle C_{1}, \dots, C_{c} \rangle$
is a tuple of subsets of $V$, where each $C_{i}$ is called a \emph{color}.
For $v \in V$, let $\mathcal{C}(v)$ denote the set of colors that $v$ belongs to.
When $\mathcal{C}(v) = \emptyset$ for all $v \in V$, then the graph is \emph{uncolored}.
As we describe later, monadic second-order formulas treat
the edge set as a symmetric binary relation on $V$ and each color as a unary relation on $V$.
As the number of colors a formula $\phi$ can access is bounded by $|\phi|$,
which is always considered as a parameter or a constant in this paper,
we can assume that the number of colors $c$ is a parameter as well.
We omit the information of colors and say $G = (V,E)$ when colors do not matter.

Two colored graphs $G = (V, E, \langle C_{1}, \dots, C_{c} \rangle)$ and $G' = (V', E', \langle C'_{1}, \dots, C'_{c} \rangle)$ are \emph{isomorphic}
if there is a color-preserving isomorphism $f \colon V \to V'$; that is,
\begin{itemize}
  \item for all $u,v \in V$, $\{u,v\} \in E$ if and only if $\{f(u),f(v)\} \in E'$, and
  \item for all $v \in V$ and $1 \le i \le c$, $v \in C_{i}$ if and only if $f(v) \in C'_{i}$.
\end{itemize}
We also say that 
$\langle G, S \rangle$ and $\langle G', S' \rangle$ are isomorphic for sets $S \subseteq V$ and $S' \subseteq V'$ if the colored graphs 
$(V, E, \langle C_{1}, \dots, C_{c}, S \rangle)$ and $(V', E', \langle C'_{1}, \dots, C'_{c}, S' \rangle)$ are isomorphic.

\subparagraph*{Monadic second-order logic}
In the monadic second-order logic on colored graphs, denoted \mso{1},
we can use vertex variables and vertex-set variables.
The atomic formulas are 
the equality $x=y$ of vertex variables,
the adjacency relation $E(x,y)$ which means $\{x,y\} \in E$,
the color predicate $C_{i}(x)$ for each color $C_{i}$ which means $x \in C_{i}$,
and the inclusion predicate $X(x)$ for a variable $x$ and a set variable $X$ which means $x \in X$.
The \mso{1} formulas are recursively defined from atomic formulas
using the usual Boolean connectives ($\lnot$, $\land$, $\lor$, $\Rightarrow$, $\Leftrightarrow$),
and quantification of variables ($\forall x$, $\exists x$, $\forall X$, $\exists X$).
For the sake of readability, we often use syntactic sugar in \mso{1} formulas 
(e.g., we write ``$\exists x \in X\colon \psi$'' to mean ``$\exists x \colon X(x) \land \psi$'').
As syntax sugars, we also use dotted quantifiers $\dot{\exists}$ and $\dot{\forall}$ to quantify distinct objects.
For example, $\dot{\exists} a,b,c\colon \psi$
means $\exists a,b,c\colon (a \ne b) \land (b \ne c) \land (c \ne a) \land \psi$ and
$\dot{\forall} a,b,c\colon \psi$
means $\forall a,b,c\colon ((a \ne b) \land (b \ne c) \land (c \ne a)) \Rightarrow \psi$.

\mso{2} is an extension of \mso{1} that additionally allows edge variables, edge-set variables,
and an atomic formula $I(e,x)$ that represents the edge-vertex incidence relation.
It is known that \mso{2} is strictly more powerful than \mso{1} in general~\cite{CourcelleE12}.

An \mso{1} (or \mso{2}) formula $\phi$ with free variables $X_{1}, \dots, X_{p}$
is denoted by $\phi(X_{1}, \dots, X_{p})$.
For a graph $G$ and vertex subsets $S_{1}, \dots, S_{p}$ of $G$,
we write $G \models \phi(S_{1}, \dots, S_{p})$
if $\phi$ is true for $G$ when the free variables $X_{1}, \dots, X_{p}$ are interpreted as $S_{1}, \dots, S_{p}$.
We call an \mso{1} (\mso{2}) formula without free variables an \mso{1} (\mso{2}, resp.) \emph{sentence}.
\begin{proposition}
[Folklore, see e.g., \cite{LampisM21}] 
\label{prop:isomorphic}
Let $G$ and $G'$ be colored graphs and $S$ and $S'$ be some vertex subsets of them
such that $\langle G, S \rangle$ and $\langle G', S' \rangle$ are isomorphic.
Then, for every \mso{1} (or \mso{2}) formula $\phi$ with one free set variable, 
$G \models \phi(S)$ if and only if $G' \models \phi(S')$.
\end{proposition}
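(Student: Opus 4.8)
The plan is to prove the invariance by structural induction on $\phi$, after generalizing the statement from a single free set variable to arbitrary variable assignments. First I would fix a color-preserving isomorphism $f \colon V \to V'$ witnessing that $\langle G, S\rangle$ and $\langle G', S'\rangle$ are isomorphic; by the definition of isomorphic pairs, $f$ is a bijection that preserves adjacency, preserves each color $C_i$, and satisfies $v \in S \iff f(v) \in S'$. I would extend $f$ to vertex subsets by $f(A) = \{f(a) \mid a \in A\}$; since $f$ is a bijection, $A \mapsto f(A)$ is a bijection between the subsets of $V$ and those of $V'$. For the \mso{2} case, the adjacency-preserving property lets me also define $\hat{f}(\{u,v\}) = \{f(u),f(v)\}$, a bijection between $E$ and $E'$, which in turn extends to a bijection between edge subsets.

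I would then establish the following stronger claim by induction on the structure of an \mso{1} (or \mso{2}) subformula $\psi$: for every assignment $\alpha$ of vertices and vertex sets (and, in the \mso{2} case, edges and edge sets) to the free variables of $\psi$, we have $(G,\alpha) \models \psi$ if and only if $(G', f \circ \alpha) \models \psi$, where $f \circ \alpha$ is the assignment obtained by applying $f$ (or its induced maps on sets and edges) to every value of $\alpha$. The base cases are precisely where the defining properties of $f$ are used: $x=y$ is preserved because $f$ is injective; $E(x,y)$ because $f$ preserves adjacency; $C_i(x)$ because $f$ preserves colors; $X(x)$ because $x \in A \iff f(x) \in f(A)$ as $f$ is a bijection; and, for \mso{2}, $I(e,x)$ because incidence is preserved under $\hat{f}$ by construction.

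The inductive step for the Boolean connectives is immediate from the induction hypothesis. For the quantifiers, the key point is that $f$ and its induced maps are bijections: for $\exists x \colon \psi$, every vertex of $G'$ equals $f(w)$ for a unique $w \in V$, so a witness on one side transfers to a witness on the other, and the converse is symmetric; the same argument handles $\exists X$ via the bijection on vertex subsets, and in the \mso{2} case the edge and edge-set quantifiers via $\hat{f}$ and its set extension. The universal quantifiers follow dually, or by rewriting $\forall$ as $\lnot \exists \lnot$ and invoking the connective cases.

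Finally, I would apply the claim to $\phi$ itself with the assignment that maps its single free set variable to $S$; since $f(S) = S'$, this gives $G \models \phi(S)$ if and only if $G' \models \phi(S')$, as required. The argument is entirely routine, and the only point demanding care is the bookkeeping that the maps induced by $f$ on vertex sets, edges, and edge sets are genuine bijections compatible with the atomic relations, which is exactly what makes the quantifier and incidence cases go through.
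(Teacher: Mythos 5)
Your proof is correct, and it is the standard structural-induction argument for isomorphism-invariance of \mso{1}/\mso{2} formulas; the paper itself gives no proof of this proposition, citing it as folklore, so your argument is precisely the omitted routine verification. The one definitional point worth noting is that the paper's notion of $\langle G,S\rangle\cong\langle G',S'\rangle$ already packages $S$ and $S'$ as additional colors, which is exactly what guarantees your witnessing isomorphism $f$ satisfies $f(S)=S'$.
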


\subparagraph*{Problem definitions}
For a colored graph $G$ and an \mso{1} (or \mso{2}) formula $\phi(X)$
a sequence $S_{0}, \dots, S_{\ell}$ of vertex subsets of $G$ is a \emph{$\TJ(\phi)$-sequence from $S_{0}$ to $S_{\ell}$} (of length $\ell$) if 
\begin{itemize}
  \item $|S_{i-1} \setminus S_{i}| = |S_{i} \setminus S_{i-1}| = 1$ for every $i \in [\ell]$, and
  \item $G \models \phi(S_{i})$ for $0 \le i \le \ell$.
\end{itemize}
We denote by $\dist_{\phi,G}(S,S')$ the minimum length of a $\TJ(\phi)$-sequence from $S$ to $S'$,
which is set to $\infty$ if there is no such sequence.
We call a $\TJ(\phi)$-sequence of length~$1$ a \emph{$\TJ(\phi)$-move}.
Now the main problem studied in this paper can be formalized as follows.
\begin{myproblem}
  \problemtitle{\MSORfull{1} (\MSOR{1})}
  \probleminput{An \mso{1} formula $\phi$, a colored graph $G = (V,E,\mathcal{C})$, and sets $S, S' \subseteq V$ such that
  $|S| = |S'|$, $G \models \phi(S)$, and $G \models \phi(S')$.}
  \problemquestion{Is there a $\TJ(\phi)$-sequence from $S$ to $S'$?}
\end{myproblem}
We also study \MSOR{2} that allows \mso{2} formulas having one free vertex-set variable as $\phi$.
Observe that \MSOR{1} is PSPACE-hard as it generalizes various PSPACE-complete reconfiguration problems such as 
\textsc{Independent Set Reconfiguration}.
On the other hand, it still belongs to PSPACE since we can non-deterministically find the next vertex set $R$ in the $\TJ(\phi)$-sequence
and test whether $G \models \phi(R)$ holds in PSPACE~\cite{Stockmeyer71phdthesis,Vardi82}.

When describing a $\TJ(\phi)$-move from $S_{i-1}$ to $S_{i}$, it is sometimes convenient to say that a \emph{token} on
the vertex $u \in S_{i-1} \setminus S_{i}$ is moved to the vertex $v \in S_{i} \setminus S_{i-1}$.
The intuition behind this is that a vertex set in a $\TJ(\phi)$-sequence is considered as the positions of tokens
and that in one $\TJ(\phi)$-move, one token on some vertex jumps to another vertex.
For simplicity, we write $S_{i-1} -u +v$ instead of $(S_{i-1} \setminus \{u\}) \cup \{v\}$.

\subparagraph*{Graph parameters}
For a graph $G = (V,E)$, a set $S \subseteq V$ is a \emph{vertex cover}
if each $e \in E$ has at least one endpoint in $S$.
The \emph{vertex cover number} of $G$, denoted $\vc(G)$, 
is the size of a minimum vertex cover of $G$.
A vertex cover of size $k$ of an $n$-vertex graph, if any exists,
can be found in time $\bigO(c^{k} \cdot n)$ for some small constant $c$ (see e.g.,~\cite{ChenKX10}). 
This implies that we can assume that a vertex cover of minimum size is given with the input
when $\vc(G)$ is part of the parameter.

Two vertices $u$ and $v$ are \emph{twins} if $N(u) = N(v)$ or $N[u] = N[v]$.
The \emph{neighborhood diversity} of a graph $G$, denoted $\nd(G)$, 
is the number of subsets $V_{i}$ in the unique partition $V_{1}, \dots, V_{p}$ of $V$
into maximal sets of twin vertices.
It is known that the neighborhood diversity and 
the corresponding partition can be computed in linear time~\cite{McConnellS99,TedderCHP08}.
From the definitions, we can see that $\nd(G) \le 2^{\vc(G)} + \vc(G)$ for every graph $G$~\cite{Lampis12}.

The \emph{treedepth} of a graph $G = (V,E)$, denoted $\td(G)$, is the minimum depth $d$ of 
a rooted forest $F$ on the vertex set $V$ such that each edge of $G$ connects an ancestor and a descendant in $F$,
where the depth of a forest is defined as the maximum distance between a root and a leaf plus~$1$.
We call such a forest a \emph{treedepth decomposition}.
It is known that a treedepth decomposition of depth $d$, if exists, can be found in time $2^{\bigO(d^{2})} \cdot n$~\cite{ReidlRVS14}.
Thus we may assume that a treedepth decomposition of depth $\td(G)$ is given with the input
when $\td(G)$ is part of the parameter.


\section{\MSOR{1} parameterized by neighborhood diversity}
\label{sec:vc-nd}

The main result of this section is the following theorem.
\begin{theorem}
\label{thm:msor1-nd}
\MSOR{1} parameterized by $\nd + |\phi|$ is fixed-parameter tractable.
Furthermore, for a yes instance of \MSOR{1}, finding a shortest $\TJ(\phi)$-sequence is fixed-parameter tractable with the same parameter.
\end{theorem}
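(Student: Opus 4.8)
The plan is to exploit the fact that, under neighborhood diversity, whether a vertex set satisfies $\phi$ and which $\TJ(\phi)$-moves are available depend only on \emph{how many} vertices the set uses from each group of mutually interchangeable vertices, rather than on \emph{which} vertices it uses. First I would refine the neighborhood-diversity partition by colors: call two vertices equivalent if they are twins (so that $N(u)=N(v)$ or $N[u]=N[v]$) and $\mathcal{C}(u)=\mathcal{C}(v)$. This yields a partition $W_1,\dots,W_q$ of $V$ into fully interchangeable classes with $q \le \nd(G)\cdot 2^{c}$ and $c \le |\phi|$, so $q$ is bounded by the parameter. Any permutation fixing each $W_j$ setwise is a color-preserving automorphism of $G$, so by \cref{prop:isomorphic} the truth of $\phi$ on $\langle G, S\rangle$ depends only on the \emph{profile} $\signature(S)=(\,|S\cap W_1|,\dots,|S\cap W_q|\,)$ of $S$. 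I would call a profile \emph{feasible} when the corresponding sets satisfy $\phi$; since $\nd$ upper-bounds clique-width, feasibility of a given profile can be tested in FPT time by Courcelle's theorem applied to a representative set.

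Next I would reduce reconfiguration to connectivity in the \emph{profile reconfiguration graph}, whose nodes are the feasible profiles of total weight $k=|S|$ and in which two profiles are adjacent when one is obtained from the other by decreasing one coordinate and increasing another by $1$ (an inter-class $\TJ(\phi)$-move). The reduction has two directions. Projecting any $\TJ(\phi)$-sequence to profiles yields a walk in this graph, so connectivity of $\signature(S)$ and $\signature(S')$ is necessary. For sufficiency I would use \emph{intra-class shuffling}: if two sets share the same feasible profile, then inside each class $W_j$ that is neither empty nor saturated there is a free slot, and single-token moves within $W_j$ can turn any $|S\cap W_j|$-subset into any other while the global profile --- hence feasibility --- is preserved. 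Thus sets with equal feasible profile are mutually reconfigurable, and a path in the profile graph can be realized by arbitrary valid moves and then corrected by intra-class shuffling to reach $S'$ exactly.

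The main obstacle is that $k$ is \emph{not} part of the parameter, so the profile graph may have $n^{\Omega(q)}$ nodes and a direct search is only XP\@. To obtain fixed-parameter tractability I would compress each profile to a \emph{signature} $\shape$ by capping every coordinate $|S\cap W_j|$ and its complement $|W_j\setminus S|$ at a threshold $\tau=\tau(|\phi|)$; a standard Ehrenfeucht--Fra\"iss\'e/composition argument for \mso{1} over interchangeable classes shows that feasibility depends only on $\shape$, so the number of feasibility types is bounded by the parameter and each type can be labelled by one Courcelle call on a small representative. The crux is to prove that profile-connectivity is faithfully captured by a bounded signature-level graph: classes whose count and co-count both exceed $\tau$ lie in a ``bulk'' regime and act as free buffers, in the sense that, as long as their counts stay in the bulk range, tokens can be routed through them without changing any signature, while the behaviour near the thresholds is confined to windows of bounded width.

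I expect the delicate point to be handling the threshold boundary correctly --- distinguishing a coordinate that equals $\tau$ exactly from one that is strictly larger, since a single move can push a bulk coordinate back across $\tau$ --- which forces me to track exact counts within these bounded windows while treating the interior as flexible. Once connectivity is reduced to reachability in this parameter-sized signature graph, I would recover a \emph{shortest} $\TJ(\phi)$-sequence by assigning to each signature edge its true move cost --- which, for bulk transfers across big classes, is an explicitly computable integer that may depend on $n$ --- and computing a shortest path under these arithmetic costs, together with the unavoidable intra-class correction moves, in FPT time; this yields the second statement of \cref{thm:msor1-nd}.
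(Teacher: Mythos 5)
Your treatment of the decision version follows the paper's route almost exactly: the color-refined twin partition is the paper's notion of \emph{types}, your capped profiles are its \emph{shapes} (with threshold $\numq(\phi)$, justified by \cref{prop:Lampis12}), the intra-class shuffling argument is \cref{lem:nd-sameshape}, and the reduction of reachability to connectivity in a parameter-sized shape graph is \cref{lem:nd-shape-graph}. The boundary subtlety you flag is real and is handled in the paper by building existence conditions on realizing sets directly into the definition of shape adjacency; that part of your plan is sound.

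The genuine gap is in the shortest-sequence argument. You propose to assign each signature-graph edge ``its true move cost'' and run a shortest-path computation, but no such per-edge cost exists: the number of moves needed to traverse a shape phase is $|S_{\mathrm{in}} \setminus S_{\mathrm{out}}|$, where $S_{\mathrm{in}}$ is the set with which you enter the phase and $S_{\mathrm{out}}$ the set with which you leave it, and the optimal choices of these sets are coupled along the \emph{entire} shape path (a token parked in type $V_i$ by one transition may or may not be reusable by a later transition). Minimizing the total therefore requires optimizing over the full signatures at every phase boundary, a state space of size $n^{\Theta(t)}$, so neither fixed edge costs nor a naive DP gives fixed-parameter tractability. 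The paper resolves this by first proving that some shortest $\TJ(\phi)$-sequence has a \emph{simple} path as its shape sequence (\cref{lem:nd-path}), guessing that path (boundedly many candidates), and then encoding the token-routing optimization for a fixed shape path as a \textsc{Minimum-Cost Circulation} instance on a layered digraph with one clique of type-nodes per shape, whose integrality and strongly polynomial solvability yield the FPT bound (\cref{lem:nd-shortest}). Some such global flow-type argument (or an equivalent exchange argument) is needed; as written, your last step would not go through.
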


We prove \cref{thm:msor1-nd} in \cref{sec:shortestND}
and then discuss the possibility of an extension to \MSOR{2} in \cref{mso2ND}.
To prove \cref{thm:msor1-nd},
we first partition the feasible sets into a small number of equivalence classes.
We show that the reachability between feasible sets can be 
checked by using an appropriately defined adjacency between the equivalence classes.
Then, we take a deeper look at the connections between the equivalence classes
and show that a shortest reconfiguration sequence can be found by finding 
some flow-like structure among the equivalence classes.

\subsection{Finding a shortest sequence}
\label{sec:shortestND}
In this section, we fix the input of \MSOR{1} as follows:
\begin{itemize}
  \item $\phi(X)$: an \mso{1} formula with one free set variable $X$;
  \item $G = (V,E,\mathcal{C})$: a colored graph;
  \item $S, S' \subseteq V$: the initial and target sets such that $G \models \phi(S)$, $G \models \phi(S')$, and $|S| = |S'| = k$.
\end{itemize}
We say that a set $X \subseteq V$ is \emph{feasible} if $G \models \phi(X)$.

We assume that the sets $S$ and $S'$ are colors in $G$;
that is, $\mathcal{C}$ is of the form like $\langle C_{1}, \dots, C_{c}, S, S' \rangle$.
If $S$ and $S'$ are originally not colors in $G$, we may add them and increase the number of colors only by~$2$.

Two vertices $u,v \in V$ in $G$ are of the same \emph{type}
if $u$ and $v$ are twins and $\mathcal{C}(u) = \mathcal{C}(v)$.
Let $\langle V_{1}, \dots, V_{t} \rangle$ be the partition of $V$ into the sets of vertices of the same type.
We call each $V_{i}$ a \emph{type}. 
Note that the type partition can be computed in polynomial time
and that $t$ depends only on the neighborhood diversity of $G$ and the number of colors in $\mathcal{C}$.

For an \mso{1} formula $\psi$, let $\numq(\psi) = 2^{\numq_\text{s}} \cdot \numq_{\text{v}}$,
where $\numq_{\text{s}}$ and $\numq_{\text{v}}$ are the numbers of set and vertex quantifies in $\psi$, respectively.
Lampis~\cite{Lampis12} proved the following fact, which is one of the main ingredients in our algorithm.\footnote{%
Note that \cref{prop:Lampis12} implies that,
when neighborhood diversity is part of the parameter,
the \mso{1} model-checking problem admits a small induced subgraph of the input graph
as a kernel~\cite{Lampis12}.
However, this does not directly show the fixed-parameter tractability of \MSOR{1} 
(let alone the stronger claim of \cref{thm:msor1-nd}).
In fact, as we will see later, an analogous result (\cref{prop:many-sets}) that implies
a ``natural'' kernel for the \mso{1} model-checking problem parameterized by treedepth is known,
while \MSOR{1} is PSPACE-complete on graphs of constant treedepth (\cref{thm:pspcae-c_td=3}).}
\begin{proposition}
[\cite{Lampis12}]
\label{prop:Lampis12}
Let $\psi$ be an \mso{1} sentence.
Assume that a graph $H$ has more than $\numq(\psi)$ vertices of the same type, and
$H'$ is the graph obtained from $H$ by removing a vertex in that type.
Then, $H \models \psi$ if and only if $H' \models \psi$.
\end{proposition}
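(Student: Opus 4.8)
The plan is to prove \cref{prop:Lampis12} via an Ehrenfeucht--Fra\"{\i}ss\'e-style argument tailored to \mso{1}. The statement quantifies over the parameter $\numq(\psi) = 2^{\numq_{\text{s}}} \cdot \numq_{\text{v}}$, which strongly suggests that the right bookkeeping device is a game with $\numq_{\text{s}}$ set-moves and $\numq_{\text{v}}$ vertex-moves, where the factor $2^{\numq_{\text{s}}}$ records how a single set-move can split each existing type into at most two parts (those in the chosen set and those outside it). First I would recall the standard fact that $H \models \psi$ iff $H' \models \psi$ for every \mso{1} sentence $\psi$ of a given quantifier profile, provided the \emph{duplicator} wins the corresponding \mso{1} Ehrenfeucht--Fra\"{\i}ss\'e game on $H$ and $H'$. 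So the whole task reduces to exhibiting a winning strategy for the duplicator in the game with $\numq_{\text{s}}$ set-rounds and $\numq_{\text{v}}$ vertex-rounds.

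The key structural idea is to maintain an invariant partition refinement. Since $H'$ is obtained from $H$ by deleting one vertex in a type $T$ that has more than $\numq(\psi)$ vertices, the natural correspondence between $H$ and $H'$ is the identity on $V(H')$, and both graphs share all types except that $T$ is one vertex larger in $H$. I would set up the invariant that after any partial play, the played elements and set-assignments induce a common refinement of the type partition on both sides, and that in every block of this refinement the two graphs have the \emph{same} number of still-unplayed vertices, \emph{unless} both have at least the number of rounds remaining. The duplicator's strategy for a vertex-move is then obvious: match the opponent's vertex by the identity whenever possible, and when the opponent plays the unique deleted vertex on the $H$-side (which has no counterpart in $H'$), respond with any unplayed vertex of the same type on the $H'$-side --- this is where the size surplus on $T$ is spent. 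For a set-move, the duplicator simply copies the chosen set verbatim across the identity on $V(H')$ and, on the $H$-side, decides membership of the one extra $T$-vertex so as to keep the block counts balanced.

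The main obstacle, and the part that requires genuine care, is the counting that justifies ``there is always enough room.'' I need to show that each set-round at most doubles the number of blocks of the working partition and that each vertex-round consumes at most one vertex from one block, so that a block that started with more than $\numq(\psi)$ unplayed vertices never gets exhausted before the game ends; this is precisely why the threshold is $2^{\numq_{\text{s}}} \cdot \numq_{\text{v}}$ rather than something smaller. I would track, round by round, the quantity (remaining budget in each block) versus (rounds remaining), prove by induction on the number of played rounds that the duplicator can always preserve the block-count-matching invariant, and observe that the boundary type $T$ never runs dry because it began with a surplus of more than $\numq(\psi)$. Once the invariant survives all rounds, the final positions agree on all atomic formulas --- equality, adjacency, color membership, and set membership --- because the correspondence is a partial isomorphism of the colored graphs extended by the set-assignments, and \cref{prop:isomorphic} (or directly the definition of the game) then yields $H \models \psi \iff H' \models \psi$.

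I expect the routine verifications --- that the duplicator's partial map respects $E$, the colors $C_i$, and the chosen sets at the atomic level --- to be mechanical consequences of the fact that matched vertices always share a type, so I would state these briefly and concentrate the write-up on the inductive counting argument and the exact accounting of how set-moves interact with the block refinement.
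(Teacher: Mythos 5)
The paper does not prove \cref{prop:Lampis12} at all---it is imported verbatim from Lampis~\cite{Lampis12}---so there is no in-paper argument to compare against; your Ehrenfeucht--Fra\"{\i}ss\'e-game proof is correct and is essentially the standard argument behind the cited result (Lampis phrases the same counting as an induction on the formula, with set quantifiers doubling the number of relevant sub-types and vertex quantifiers consuming one representative each, which is exactly your block-refinement invariant with threshold $2^{r_{\text{s}}}\cdot r_{\text{v}}$ for the remaining budget). The only point to spell out in a full write-up is the set-move response when the discrepancy-carrying block is split: the duplicator must park the $\pm 1$ surplus in a part that still has at least $2^{r_{\text{s}}-1}\cdot r_{\text{v}}$ vertices on both sides, which your ``main obstacle'' paragraph already identifies.
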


We need the concept of ``shapes'' of vertex subsets that was used with \cref{prop:Lampis12}
in the context of extended \mso{1} model-checking problems~\cite{KnopKMT19}.
Here we introduce it in the following simplified form, which is sufficient for our purpose.
The \emph{signature} of $X \subseteq V$ is the mapping $\signature_{X} \colon [t] \to \mathbb{Z}_{\ge 0}$
such that $\signature_{X}(i) = |V_{i} \cap X|$.
A \emph{shape} is a mapping from $[t]$ to $\mathbb{Z}_{\ge 0} \cup \{\bottop\}$
that maps each $i \in [t]$ to an element of $[0,\numq(\phi)-1] \cup \{\bottop\} \cup [|V_{i}| - \numq(\phi) + 1, |V_{i}|]$.
Note that the number of shapes is $(2\numq(\phi)+1)^{t}$.
A set $X \subseteq V$ \emph{has shape} $\shape$ if for every $i \in [t]$,
\[
  \shape(i) = 
  \begin{cases}
    \bottop  & \numq(\phi) \le \signature_{X}(i) \le |V_{i}| - \numq(\phi), \\
    \signature_{X}(i) & \text{otherwise}.
  \end{cases}
\]
We say that a shape $\shape$ is \emph{$k$-feasible}
if there is a feasible set $X \subseteq V$ of size $k$ that has $\shape$ as its shape.

Let $\shape_{S}$ and $\shape_{S'}$ be the shapes of the input sets $S$ and $S'$, respectively.
Since $S$ is a color of $G$, each type $V_{i}$ either is a subset of $S$ or has no intersection with $S$,
that is,
\[
  \shape_{S}(i) = 
  \begin{cases}
    |V_{i}| & V_{i} \subseteq S, \\
    0       & \text{otherwise}.
  \end{cases}
\]
This implies that a set $R \subseteq V$ has shape $\shape_{S}$ if and only if $R = S$.
This applies to $S'$ as well.
\begin{observation}
\label{obs:nd-rshape}
$R \subseteq V$ has shape $\shape_{S}$ ($\shape_{S'}$)
if and only if $R = S$ ($R = S'$, resp.).
\end{observation}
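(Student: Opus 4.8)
The plan is to reduce the statement to a one-line computation of $\shape_{S}$ and then read everything off from the definition of ``has shape.'' I would treat only $S$; the argument for $S'$ is word-for-word identical, since $S'$ is a color as well.

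First I would pin down $\shape_{S}$ exactly. Because $S$ is one of the colors of $G$, any two vertices of the same type agree on membership in $S$, so each type $V_{i}$ is either contained in $S$ or disjoint from it; hence the signature $\signature_{S}(i) = |V_{i} \cap S|$ is \emph{extremal}, equal to either $|V_{i}|$ or $0$. The key point I would stress is that such an extremal value can never lie in the interior interval $[\numq(\phi),\, |V_{i}| - \numq(\phi)]$ that the definition collapses to $\bottop$: landing in that interval would force $\numq(\phi) \le 0$, which is impossible since $\numq(\phi) \ge 1$. Consequently $\shape_{S}$ never takes the symbol $\bottop$, and in fact $\shape_{S}(i) = \signature_{S}(i) \in \{0, |V_{i}|\}$ for every $i \in [t]$.

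With this in hand the forward implication is immediate. Suppose $R$ has shape $\shape_{S}$ and fix a type index $i$. Since $\shape_{S}(i) \ne \bottop$, the case distinction defining ``$R$ has shape $\shape_{S}$'' must fall into the ``otherwise'' branch, giving $\signature_{R}(i) = \shape_{S}(i) = \signature_{S}(i)$. Thus $R$ and $S$ have identical signatures on every type: $V_{i} \subseteq R$ exactly when $V_{i} \subseteq S$, and $V_{i} \cap R = \emptyset$ exactly when $V_{i} \cap S = \emptyset$. As the types partition $V$, this forces $R = S$. The reverse implication is trivial, since $R = S$ has the same signature, hence the same shape, as $S$.

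I expect no genuine obstacle here; the entire content is the observation that extremal signatures escape the $\bottop$ collapse. The only point requiring care is the standing assumption $\numq(\phi) \ge 1$ (equivalently, that $\phi$ contains a vertex quantifier), without which every set would have the all-$\bottop$ shape and the statement would fail. I would make this assumption explicit, noting that a dummy vertex quantifier can be added to $\phi$ without altering the property it expresses.
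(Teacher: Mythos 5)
Your proof is correct and follows essentially the same route as the paper: the paper likewise computes $\shape_{S}(i) \in \{0, |V_{i}|\}$ from the fact that $S$ is a color of $G$ and reads the observation off the definition of ``has shape.'' Your explicit remark that $\numq(\phi) \ge 1$ is needed to keep the extremal signatures $0$ and $|V_{i}|$ out of the $\bottop$ range is a detail the paper leaves implicit, but it does not change the argument.
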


\cref{prop:Lampis12} and the definition of shapes together give the following fact,
which is known in more general forms in the previous studies (see e.g., \cite{KnopKMT19}).
This less general one is sufficient in our setting.
We present a full proof here to be self contained.
\begin{lemma}
\label{lem:nd-shape}
If $R, R' \subseteq V$ have the same shape,
then $G \models \phi(R)$ if and only if $G \models \phi(R')$.
\end{lemma}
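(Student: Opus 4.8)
The plan is to reduce the statement to \cref{prop:Lampis12} by building an \mso{1} sentence that internally reconstructs the free-variable interpretation from the coloring, and then observing that sets of the same shape give rise to isomorphic colored graphs up to the bounded thresholds that \cref{prop:Lampis12} cares about. Concretely, suppose $R, R' \subseteq V$ have the same shape $\shape$. First I would encode the free set variable $X$ as an additional color: let $G_R$ denote the colored graph obtained from $G$ by appending $R$ as a new color $C_{c+1}$, and similarly let $G_{R'}$ append $R'$. Since $\phi(X)$ has one free set variable, one may rewrite $\phi$ as an \mso{1} sentence $\psi$ over the extended color set in which every occurrence of the predicate $X(\cdot)$ is replaced by the color predicate $C_{c+1}(\cdot)$; then $G \models \phi(R)$ if and only if $G_R \models \psi$, and likewise for $R'$. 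Crucially, $\numq(\psi) \le \numq(\phi)$ because adding a color does not introduce new quantifiers, so the thresholds in the definition of shape (which use $\numq(\phi)$) dominate those relevant to $\psi$.

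Next I would use the shape equality to pass from $G_R$ and $G_{R'}$ to a common representative by repeatedly invoking \cref{prop:Lampis12}. The key point is that in $G_R$, each original type $V_i$ splits into at most two new types: the part inside $R$ (carrying color $C_{c+1}$) and the part outside $R$ (not carrying it). The number of vertices of each such new type is determined by $\signature_R(i)$ and $|V_i|$. When the shape value $\shape(i)$ equals $\bottop$, both the inside-$R$ part and the outside-$R$ part have size at least $\numq(\phi) \ge \numq(\psi)$, so by \cref{prop:Lampis12} I can delete vertices from each part down to exactly $\numq(\psi)$ without changing whether $\psi$ holds; when $\shape(i)$ is an explicit integer, the relevant small part already has its exact size recorded by the shape and is preserved. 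Performing this truncation in $G_R$ and in $G_{R'}$ yields two colored graphs $H_R$ and $H_{R'}$ with $G_R \models \psi \iff H_R \models \psi$ and $G_{R'} \models \psi \iff H_{R'} \models \psi$.

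Finally I would argue that $H_R$ and $H_{R'}$ are isomorphic as colored graphs. Because $R$ and $R'$ have the same shape, for every original type $V_i$ the truncated graphs agree on the number of vertices carrying color $C_{c+1}$ and the number not carrying it within that type: when $\shape(i) = \bottop$ both counts are exactly $\numq(\psi)$ on each side, and when $\shape(i)$ is an integer the counts coincide because the signatures coincide on the recorded threshold regime. Since vertices within a common type are interchangeable (same neighborhoods and same original colors) and the new color splits each type identically on both sides, a type-respecting, color-preserving bijection between $H_R$ and $H_{R'}$ exists and is an isomorphism. Then $H_R \models \psi \iff H_{R'} \models \psi$, and chaining the equivalences gives $G \models \phi(R) \iff G \models \phi(R')$, as desired.

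The main obstacle I anticipate is bookkeeping the threshold boundaries correctly: I must verify that in the $\bottop$ case \emph{both} the inside-$R$ and outside-$R$ portions of a type genuinely exceed $\numq(\phi)$ so that \cref{prop:Lampis12} applies to each independently, and that in the non-$\bottop$ case the exact counts stored by the shape leave no ambiguity. The definition of shape, which records values only in $[0,\numq(\phi)-1] \cup [|V_i|-\numq(\phi)+1,|V_i|]$ and collapses the middle band to $\bottop$, is tailored precisely so that $\signature_R(i)$ and $\signature_{R'}(i)$ are forced to agree whenever they matter for the post-truncation isomorphism; making this tailoring explicit and checking the inequalities $\numq(\psi) \le \numq(\phi)$ against the interval endpoints is where the care lies, though no deep idea beyond \cref{prop:Lampis12} is needed.
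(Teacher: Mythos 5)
Your proposal is correct and follows essentially the same route as the paper's proof: convert the free set variable into a new color, observe that each type splits into its inside-$R$ and outside-$R$ parts whose post-truncation sizes are controlled by the common shape, shrink large parts via \cref{prop:Lampis12}, and conclude by exhibiting a color-preserving isomorphism between the truncated graphs and invoking \cref{prop:isomorphic}. The only cosmetic difference is that you write $\numq(\psi)\le\numq(\phi)$ where the paper notes equality; this does not affect the argument.
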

\begin{proof}
Let $\phi_{X}$ be the \mso{1} sentence obtained from $\phi(X)$ by considering the free variable $X$ in $\phi(X)$ as a new color.
Note that $\numq(\phi_{X})= \numq(\phi)$.
Let $G_{R}$ and $G_{R'}$ be the graphs obtained from $G$ by considering $R$ and $R'$ as realizations of the new color $X$, respectively.
Now it suffices to show that $G_{R} \models \phi_{X}$ if and only if $G_{R'} \models \phi_{X}$.

Recall that $\langle V_{1}, \dots, V_{t} \rangle$ is the type partition of $G$.
Observe that as their type partitions,
$G_{R}$ has $\langle V_{1} \cap R, \dots, V_{t} \cap R, V_{1} \setminus R, \dots, V_{t} \setminus R \rangle$ and
$G_{R'}$ has $\langle V_{1} \cap R', \dots, V_{t} \cap R', V_{1} \setminus R', \dots, V_{t} \setminus R' \rangle$,
where some of the types may be empty.
Since $R$ and $R'$ have the same shape, it holds for each $i$ that
\begin{itemize}
  \item $|V_{i} \cap R| \ge \numq(\phi)$ if and only if $|V_{i} \cap R'| \ge \numq(\phi)$, and
  \item $|V_{i} \setminus R| \ge \numq(\phi)$ if and only if $|V_{i} \setminus R'| \ge \numq(\phi)$.
\end{itemize}

We remove vertices from $G_{R}$ and $G_{R'}$
by applying \cref{prop:Lampis12} repeatedly as long as some type has more than $\numq(\phi)$ vertices.
We call the resultant graphs $G_{R}^{*}$ and $G_{R'}^{*}$.
Observe that each type in $G_{R}^{*}$ is a subset of one in $G_{R}$.
If a type in $G_{R}$ has at most $\numq(\phi)$ vertices, then the type stays the same in $G_{R}^{*}$.
On the other hand, if a type in $G_{R}$ has more than $\numq(\phi)$ vertices,
then the type becomes smaller and has size exactly $\numq(\phi)$ in $G_{R}^{*}$.
The same holds for $G_{R'}$ and $G_{R'}^{*}$ as well.
Now the discussion in the previous paragraph implies that for each $i$,
\begin{itemize}
  \item $|(V_{i} \cap R) \cap V(G_{R}^{*})| = |(V_{i} \cap R') \cap V(G_{R'}^{*})|$ and
  \item $|(V_{i} \setminus R) \cap V(G_{R}^{*})| = |(V_{i} \setminus R') \cap V(G_{R'}^{*})|$.
\end{itemize}
This implies that $G_{R}^{*}$ and $G_{R'}^{*}$ are isomorphic.
Thus the lemma holds by \cref{prop:isomorphic}.
\end{proof}

\cref{lem:nd-shape} implies in particular that if a shape $\shape$ is $k$-feasible,
then every size-$k$ set of shape $\shape$ is feasible.
\begin{lemma}
\label{lem:nd-sameshape}
If feasible sets $R, R' \subseteq V$ have the same shape and size,
then there is a $\TJ(\phi)$-sequence of length $|R \setminus R'|$ from $R$ to $R'$ 
such that all sets in the sequence have the same shape.
\end{lemma}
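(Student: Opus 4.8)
The plan is to construct the sequence one $\TJ(\phi)$-move at a time, by induction on $m := |R \setminus R'|$, while maintaining the invariant that every set produced has exactly the same shape as $R$. Two facts make the invariant do all the work. First, a $\TJ(\phi)$-move preserves cardinality, so every set in the sequence has size $k$; hence, by \cref{lem:nd-shape}, any set sharing the shape of the feasible set $R$ is itself feasible, and the shape invariant alone certifies that the sequence is a legitimate $\TJ(\phi)$-sequence. Second, I will only ever remove a vertex of $R \setminus R'$ and add a vertex of $R' \setminus R$, so each move decreases $|R \setminus R'|$ by exactly one; iterating $m$ times therefore reaches $R'$ and yields a sequence of length exactly $m = |R \setminus R'|$, as required. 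The base case $m = 0$ is the empty sequence.

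For the inductive step I must find a single move $R \mapsto R - u + v$ with $u \in R \setminus R'$ and $v \in R' \setminus R$ that preserves the shape, and I split into two cases according to the type partition $\langle V_1, \dots, V_t \rangle$. If some type $V_i$ meets both $R \setminus R'$ and $R' \setminus R$, I do a within-type swap, taking $u \in (R \setminus R') \cap V_i$ and $v \in (R' \setminus R) \cap V_i$; since $u$ and $v$ share a type, the signature $\signature$ is left completely unchanged, so the shape is trivially preserved. Otherwise no type meets both sets, so each type is \emph{surplus-only} (it meets $R \setminus R'$ but not $R' \setminus R$), \emph{deficit-only}, or neither; because $|R \setminus R'| = |R' \setminus R| = m \ge 1$, at least one surplus-only type $V_i$ and one deficit-only type $V_j$ exist, and I move a token across them, removing $u \in (R \setminus R') \cap V_i$ and adding $v \in (R' \setminus R) \cap V_j$.

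The only real obstacle is checking that this cross-type move keeps the shape, and this is where the hypothesis that $R$ and $R'$ share a shape is used. For every type the difference $\signature_R(i) - \signature_{R'}(i)$ equals $|(R \setminus R') \cap V_i| - |(R' \setminus R) \cap V_i|$, so on a boundary type — one whose shape value is an integer rather than $\bottop$ — the equality $\signature_R(i) = \signature_{R'}(i)$ forces $|(R \setminus R') \cap V_i| = |(R' \setminus R) \cap V_i|$, so such a type can be neither surplus-only nor deficit-only. Hence the surplus-only type $V_i$ and the deficit-only type $V_j$ both have shape $\bottop$, meaning $\signature_R(i), \signature_{R'}(i) \in [\numq(\phi), |V_i| - \numq(\phi)]$ and similarly for $j$. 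On the surplus type $\signature_R(i) > \signature_{R'}(i) \ge \numq(\phi)$, so after the removal the value $\signature_R(i) - 1$ is still at least $\numq(\phi)$ and thus still in the $\bottop$ range; symmetrically, $\signature_R(j) < \signature_{R'}(j) \le |V_j| - \numq(\phi)$ leaves room for the insertion on the deficit type. All other types are untouched, so the shape is preserved, completing the induction and hence the proof.
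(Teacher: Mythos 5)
Your proof is correct and follows essentially the same strategy as the paper's: induction on $|R \setminus R'|$, performing one shape-preserving token jump per step (a within-type swap when possible, otherwise a cross-type move between two types whose shape value must be $\bottop$, which provides the slack needed to keep the signature in the $\bottop$ range), with feasibility of each intermediate set certified by the shape-invariance lemma. The only cosmetic difference is the case split (the paper distinguishes equal versus unequal signatures, while you distinguish whether some type meets both difference sets), but the moves and the key observation are the same.
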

\begin{proof}
We use induction on $|R \setminus R'|$.
If $|R \setminus R'| = 0$, then we are done.
Assume that $|R \setminus R'| \ge 1$
and that the statement of the lemma is true for sets with strictly smaller differences.

First assume that $R$ and $R'$ have the same signature.
Then, for some $i$, it holds that $|V_{i} \cap R| = |V_{i} \cap R'|$ and $V_{i} \cap R \ne V_{i} \cap R'$.
Let $u \in V_{i} \cap (R \setminus R')$ and $v \in V_{i} \cap (R' \setminus R)$.
Then $R - u + v$ has the same signature (and thus the same shape) as $R$ and $R'$.
By \cref{lem:nd-shape}, $R - u + v$ is feasible.
By the induction hypothesis, the lemma holds.

Next assume that $R$ and $R'$ have different signatures.
Let $\shape$ be the (common) shape of $R$ and $R'$.
Since $|R| = |R'|$, there exist two indices $i$ and $j$ such that 
$|V_{i} \cap R| > |V_{i} \cap R'|$ and $|V_{j} \cap R| < |V_{j} \cap R'|$.
This implies that $\shape(i) = \shape(j) = \bottop$, and thus
\begin{itemize}
  \item $|V_{i} \cap R'|$, $|V_{j} \cap R|$, $|V_{i} \setminus R|$, $|V_{j} \setminus R'| \ge \numq(\phi)$, and
  \item $|V_{i} \cap R|$, $|V_{j} \cap R'|$, $|V_{i} \setminus R'|$, $|V_{j} \setminus R| \ge \numq(\phi) + 1$.
\end{itemize}
Now let $u \in V_{i} \cap (R \setminus R')$ and $v \in V_{j} \cap (R' \setminus R)$.
Then $R - u + v$ has the shape $\shape$,
and thus it is feasible by \cref{lem:nd-shape}.
By the induction hypothesis, the lemma holds.
\end{proof}

We now introduce the \emph{adjacency} between shapes.
Intuitively, this concept captures how a single token jump connects different shapes.
Let $S_{1}$ and $S_{2}$ be sets having different shapes $\shape_{1}$ and $\shape_{2}$, respectively,
such that $S_{1} \setminus S_{2} = \{u\}$, $u \in V_{i}$,
$S_{2} \setminus S_{1} = \{v\}$, $v \in V_{j}$, and $i \ne j$.
For $h \in [t] \setminus \{i,j\}$, $\shape_{1}(h) = \shape_{2}(h)$ holds.
Since $\shape_{1} \ne \shape_{2}$, at least one of $\shape_{1}(i) \ne \shape_{2}(i)$ and $\shape_{1}(j) \ne \shape_{2}(j)$ holds.
If $\shape_{1}(i) \ne \shape_{2}(i)$, 
then $|V_{i} \cap S_{2}| = |V_{i} \cap S_{1}| - 1$ implies that one of the following holds:
\begin{itemize}
  \item A1: $\shape_{1}(i) \ne \bottop$, $\shape_{2}(i) \ne \bottop$, and $\shape_{2}(i) = \shape_{1}(i) - 1$;
  \item A2: $\shape_{1}(i) = \bottop$ and $\shape_{2}(i) = \numq(\phi) -1$;
  \hfill ($\signature_{S_{1}}(i)=\numq(\phi)$)
  \item A3: $\shape_{1}(i) = |V_{i}| - \numq(\phi) + 1$ and $\shape_{2}(i) = \bottop$.
  \hfill ($\signature_{S_{2}}(i)= |V_{i}| - \numq(\phi)$)
\end{itemize}
Similarly, if $\shape_{1}(j) \ne \shape_{2}(j)$, 
then $|V_{j} \cap S_{1}| = |V_{j} \cap S_{2}| - 1$ implies that one of the following holds:
\begin{itemize}
  \item B1: $\shape_{2}(j) \ne \bottop$, $\shape_{1}(j) \ne \bottop$, and $\shape_{1}(j) = \shape_{2}(j) - 1$;
  \item B2: $\shape_{2}(j) = \bottop$ and $\shape_{1}(j) = \numq(\phi) -1$;
  \hfill ($\signature_{S_{2}}(j)=\numq(\phi)$)
  \item B3: $\shape_{2}(j) = |V_{j}| - \numq(\phi) + 1$ and $\shape_{1}(j) = \bottop$.
  \hfill ($\signature_{S_{1}}(j)= |V_{j}| - \numq(\phi)$)
\end{itemize}

Given the observation above, we say that two $k$-feasible shapes 
$\shape_{1}$ and $\shape_{2}$ are \emph{adjacent} if and only if the following three conditions are satisfied.\\
(1) One of the following holds:
\begin{itemize}
  \item $\shape_{1}$ and $\shape_{2}$ disagree at exactly two indices $i$ and $j$
  such that $i$ satisfies one of A1, A2, A3
  and $j$ satisfies one of B1, B2, B3;

  \item $\shape_{1}$ and $\shape_{2}$ disagree at exactly one index $i$ satisfying one of A1, A2, A3,
  or $j$ satisfying one of B1, B2, B3.
\end{itemize}
(2) There exists a size-$k$ set $S_{1}$ of shape $\shape_{1}$ such that 
\begin{itemize}
  \item if $i$ is defined in (1) and $\shape_{1}(i) = \bottop$, then $\signature_{S_{1}}(i) = \numq(\phi)$;
  \item if $j$ is defined in (1) and $\shape_{1}(j) = \bottop$, then $\signature_{S_{1}}(j) = |V_{j}| - \numq(\phi)$.
\end{itemize}
(3) There exists a size-$k$ set $S_{2}$ of shape $\shape_{2}$ such that 
\begin{itemize}
  \item if $i$ is defined in (1) and $\shape_{2}(i) = \bottop$, then $\signature_{S_{2}}(i) = |V_{i}|-\numq(\phi)$;
  \item if $j$ is defined in (1) and $\shape_{2}(j) = \bottop$, then $\signature_{S_{2}}(j) = \numq(\phi)$.
\end{itemize}

The \emph{size-$k$ shape graph} $\mathcal{S}_{k}$ has 
the set of $k$-feasible shapes as its vertex set
and the adjacency between the vertices (shapes) is as defined above.
\begin{lemma}
\label{lem:nd-adjshape}
Let $\shape_{1}$ and $\shape_{2}$ be two different shapes that are $k$-feasible.
Then, $\shape_{1}$ and $\shape_{2}$ are adjacent in $\mathcal{S}_{k}$ if and only if
there exist size-$k$ feasible sets $S_{1}$ and $S_{2}$ of shapes $\shape_{1}$ and $\shape_{2}$, respectively,
with $|S_{1} \setminus S_{2}| = |S_{2} \setminus S_{1}| = 1$.
\end{lemma}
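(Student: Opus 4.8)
The plan is to prove the two implications separately, treating the backward direction (one-jump-apart sets $\Rightarrow$ adjacency) as essentially a reading-off of the definition, and investing the real work in the forward direction (adjacency $\Rightarrow$ existence of the sets).

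For the backward direction, suppose size-$k$ feasible sets $S_{1}, S_{2}$ of shapes $\shape_{1} \ne \shape_{2}$ satisfy $|S_{1} \setminus S_{2}| = |S_{2} \setminus S_{1}| = 1$, and write $S_{1} \setminus S_{2} = \{u\}$ with $u \in V_{i}$ and $S_{2} \setminus S_{1} = \{v\}$ with $v \in V_{j}$. First I would observe that $i \ne j$: otherwise $S_{1}$ and $S_{2}$ have identical signatures, hence the same shape, contradicting $\shape_{1} \ne \shape_{2}$. With $i \ne j$, the pair falls exactly into the situation analyzed just before the definition, so the case distinction A1--A3 at $i$ and B1--B3 at $j$ yields condition~(1), while $S_{1}$ and $S_{2}$ themselves serve as the witnesses for conditions~(2) and~(3); the parenthetical signature identities recorded in cases A2, A3, B2, B3 are precisely the boundary values $\signature(\cdot) = \numq(\phi)$ or $|V_{\cdot}| - \numq(\phi)$ demanded there.

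For the forward direction, I would start from a size-$k$ feasible set $S_{1}$ of shape $\shape_{1}$ supplied by condition~(2), so that $\signature_{S_{1}}$ takes the boundary value $\numq(\phi)$ (resp.\ $|V_{j}| - \numq(\phi)$) at any defined index where $\shape_{1}$ is $\bottop$, and aim to obtain $S_{2}$ as a single $\TJ(\phi)$-move $S_{1} - u + v$. When $\shape_{1}$ and $\shape_{2}$ disagree at two indices, the $A$-index $i$ and the $B$-index $j$, I would take $u \in V_{i} \cap S_{1}$ and $v \in V_{j} \setminus S_{1}$; a short check in each $A$- and $B$-subcase shows that lowering $\signature(i)$ by one lands on $\shape_{2}(i)$ and raising $\signature(j)$ by one lands on $\shape_{2}(j)$, with every other type untouched, so $S_{1} - u + v$ has shape $\shape_{2}$. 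Existence of $u$ and $v$ follows from the boundary values together with $|V_{i}|, |V_{j}| \ge 2\numq(\phi)$ whenever a $\bottop$ is involved, and $S_{1}$ itself is feasible by \cref{lem:nd-shape} since $\shape_{1}$ is $k$-feasible.

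The delicate case, which I expect to be the main obstacle, is when the shapes disagree at a single index, say an $A$-index $i$: a token must then leave $V_{i}$, but to preserve the size $k$ it must enter some other type $j$ without altering the shape there, which forces $\shape_{1}(j) = \shape_{2}(j) = \bottop$ and requires slack $\signature_{S_{1}}(j) \le |V_{j}| - \numq(\phi) - 1$. I would secure such a through-type by a counting argument driven by the $k$-feasibility of $\shape_{2}$: in every $A$-subcase the value $\shape_{2}(i)$ caps the $i$-entry of any size-$k$ feasible set $T$ of shape $\shape_{2}$ at $\signature_{S_{1}}(i) - 1$, so if all $\bottop$-types $j \ne i$ were saturated (i.e.\ $\signature_{S_{1}}(j) = |V_{j}| - \numq(\phi)$), then $k = \signature_{T}(i) + \sum_{h \ne i}\signature_{T}(h) \le (\signature_{S_{1}}(i) - 1) + (k - \signature_{S_{1}}(i)) = k - 1$, a contradiction; the symmetric argument handles a lone $B$-index. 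The resulting $S_{2}$ has shape $\shape_{2}$ and is feasible by \cref{lem:nd-shape}, while $|S_{1} \setminus S_{2}| = |S_{2} \setminus S_{1}| = 1$ by construction, which completes the proof.
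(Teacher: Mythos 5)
Your proposal is correct and follows essentially the same route as the paper: the backward direction is read off the case analysis preceding the definition of adjacency, and the forward direction builds $S_{2}$ as a single exchange from the set $S_{1}$ guaranteed by condition~(2), with the single-disagreement case resolved by the same counting argument (the paper's Claim~3.8: if every $\bottop$-type other than $i$ were saturated, comparing against a size-$k$ set of shape $\shape_{2}$ forces $k \le k-1$). No gaps.
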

\begin{proof}

The if direction is already observed along the definition above.

To prove the only-if direction, assume that $\shape_{1}$ and $\shape_{2}$ are adjacent.
If $\shape_{1}$ and $\shape_{2}$ disagree at exactly two indices,
then we set $i$ to the one satisfying one of A1, A2, A3,
and $j$ to the one satisfying one of B1, B2, B3 in the definition.
If $\shape_{1}$ and $\shape_{2}$ disagree at exactly one index, then we set $i$ to this index.
Since $\shape_{1}$ and $\shape_{2}$ have the symmetric role, we can assume that $i$ satisfies one of A1, A2, A3 in the definition.

The definition of the shape adjacency implies the existence of a size-$k$ set $S_{1}$ of shape $\shape_{1}$
with the following additional conditions:
\begin{itemize}
  \item if $\shape_{1}(i) = \bottop$, then $\signature_{S_{1}}(i) = \numq(\phi)$;
  \item if the index $j$ is defined and $\shape_{1}(j) = \bottop$, then $\signature_{S_{1}}(j) = |V_{j}| - \numq(\phi)$.
\end{itemize}

If $\shape_{1}$ and $\shape_{2}$ disagree exactly at $i$,
and thus $j$ is not defined so far, we set $j$ to the index in the next claim.
\begin{claim}
\label{clm:nd-adjshape}
If $\shape_{1}$ and $\shape_{2}$ disagree exactly at $i$,
then there exists an index $j$ such that
$\shape_{1}(j) = \shape_{2}(j)$ ($= \bottop$) and $\signature_{S_{1}}(j) < |V_{j}| - \numq(\phi)$.
\end{claim}
\begin{proof}[Proof of \cref{clm:nd-adjshape}]
Let $S_{2}$ be a size-$k$ set of shape $\shape_{2}$.
Suppose to the contrary that for every $h \in [t] \setminus \{i\}$, 
it holds that $\shape_{1}(h) \ne \bottop$ or $\signature_{S_{1}}(h) = |V_{h}| - \numq(\phi)$.
Then, $|S_{1} \cap V_{h}| \ge |S_{2} \cap V_{h}|$ for all $h \in [t] \setminus \{i\}$.
Since the shapes $\shape_{1}$ and $\shape_{2}$ ensure that $|S_{1} \cap V_{i}| > |S_{2} \cap V_{i}|$,
we have $|S_{1}| > |S_{2}|$, a contradiction.
\end{proof}

Let $S_{2}$ be a set obtained from $S_{1}$ by removing a vertex of $V_{i}$ and adding a vertex of $V_{j}$.
The construction of $S_{1}$ implies that $S_{2}$ has shape $\shape_{2}$.
This completes the proof since $|S_{1} \setminus S_{2}| = |S_{2} \setminus S_{1}| = 1$.
\end{proof}

Since $|S| = |S'|= k$, the reachability between them can be reduced to 
the reachability between their shapes in $\mathcal{S}_{k}$.
\begin{lemma}
\label{lem:nd-shape-graph}
Let $\shape$ and $\shape'$ be the shapes of $S$ and $S'$, respectively.
There is a $\TJ(\phi)$-sequence from $S$ to $S'$
if and only if
$\shape$ and $\shape'$ belong to the same connected component of $\mathcal{S}_{k}$.
\end{lemma}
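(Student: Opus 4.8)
The plan is to prove the equivalence by passing back and forth between token jumps on feasible sets and edges of $\mathcal{S}_{k}$, with \cref{lem:nd-adjshape} serving as the bridge in both directions. For the forward direction, I would take a $\TJ(\phi)$-sequence $S = R_{0}, R_{1}, \dots, R_{m} = S'$ and record the sequence of shapes $\shape_{R_{0}}, \dots, \shape_{R_{m}}$. Each $R_{i}$ is feasible and of size $k$ (token jumping preserves size), so every $\shape_{R_{i}}$ is a $k$-feasible shape, i.e., a vertex of $\mathcal{S}_{k}$. For each consecutive pair, either $\shape_{R_{i-1}} = \shape_{R_{i}}$, or the two shapes differ; in the latter case the feasible size-$k$ sets $R_{i-1}$ and $R_{i}$ have distinct shapes and satisfy $|R_{i-1} \setminus R_{i}| = |R_{i} \setminus R_{i-1}| = 1$, so \cref{lem:nd-adjshape} yields that $\shape_{R_{i-1}}$ and $\shape_{R_{i}}$ are adjacent in $\mathcal{S}_{k}$. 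Deleting consecutive repetitions then gives a walk in $\mathcal{S}_{k}$ from $\shape$ to $\shape'$, placing them in the same connected component.

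For the backward direction I would start from a path $\shape = \shape_{0}, \shape_{1}, \dots, \shape_{p} = \shape'$ in $\mathcal{S}_{k}$ and construct a $\TJ(\phi)$-sequence from $S$ by induction on $l$, maintaining the invariant that there is a $\TJ(\phi)$-sequence from $S$ to some feasible set of shape $\shape_{l}$. The base case $l = 0$ is witnessed by $S$ itself, which has shape $\shape$. For the inductive step, suppose we have reached a feasible set $Y$ of shape $\shape_{l-1}$. Since $\shape_{l-1}$ and $\shape_{l}$ are adjacent, \cref{lem:nd-adjshape} supplies feasible sets $A$ of shape $\shape_{l-1}$ and $B$ of shape $\shape_{l}$ with $|A \setminus B| = |B \setminus A| = 1$, so $A, B$ is a single $\TJ(\phi)$-move. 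The difficulty is that $A$ need not coincide with the set $Y$ we have actually reached. To reconcile them I would invoke \cref{lem:nd-sameshape}: as $Y$ and $A$ are feasible sets of the same shape and size, there is a $\TJ(\phi)$-sequence from $Y$ to $A$ (all of whose sets stay in that shape, hence remain feasible). Concatenating $S \to \cdots \to Y$, then $Y \to \cdots \to A$, then the move $A \to B$, extends the sequence to a feasible set $B$ of shape $\shape_{l}$, completing the induction. At $l = p$ we obtain a $\TJ(\phi)$-sequence from $S$ to a feasible set of shape $\shape'$, and by \cref{obs:nd-rshape} the only set of shape $\shape'$ is $S'$ itself, so this is the desired sequence from $S$ to $S'$.

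The hard part is precisely this gluing step in the backward direction: the witnesses that \cref{lem:nd-adjshape} attaches to consecutive edges of the path agree only on their shapes and may be genuinely different vertex subsets, so the single moves they provide do not automatically chain together. \cref{lem:nd-sameshape} is exactly what removes this obstruction, letting us traverse within a fixed shape at will; the forward direction and the endpoint identification via \cref{obs:nd-rshape} are then routine.
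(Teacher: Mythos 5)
Your proposal is correct and follows essentially the same route as the paper: the forward direction reads off a walk of shapes via \cref{lem:nd-adjshape}, and the backward direction glues the witness pairs from \cref{lem:nd-adjshape} using \cref{lem:nd-sameshape} to move within a fixed shape, with \cref{obs:nd-rshape} pinning down the endpoints. The paper phrases the gluing as a single concatenation rather than an induction, but the content is identical.
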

\begin{proof}
We first show the only-if direction. Assume that there is a $\TJ(\phi)$-sequence
$S_{0}, \dots, S_{\ell}$ from $S = S_{0}$ to $S' = S_{\ell}$.
For $0 \le i \le \ell$, let $\shape_{i}$ be the shape of $S_{i}$.
Since $S_{i}$ is a size-$k$ feasible set, $\shape_{i}$ belongs to $V(\mathcal{S}_{k})$.
By \cref{lem:nd-adjshape}, if $\shape_{i}$ and $\shape_{i+1}$ are different, then they are adjacent in $\mathcal{S}_{k}$.
Thus, $\mathcal{S}_{k}$ contains a walk from $\shape_{0} = \shape$ to $\shape_{\ell} = \shape'$.

To show the if direction, assume that there is a $\shape$--$\shape'$ path $\langle \shape_{0}, \dots, \shape_{\ell}\rangle$ in $\mathcal{S}_{k}$.
By \cref{lem:nd-adjshape},
for $0 \le i \le \ell-1$,
there exist size-$k$ sets $S'_{i}$ of shape $\shape_{i}$ and $S_{i+1}$ of shape $\shape_{i+1}$
with $|S'_{i} \setminus S_{i+1}| = |S_{i+1} \setminus S'_{i}| = 1$.
Furthermore, by \cref{lem:nd-sameshape}, for each $1 \le i \le \ell-1$,
there is a $\TJ(\phi)$-sequence from $S_{i}$ to $S'_{i}$.
Finally, observe that $S'_{0} = S$ and $S_{\ell} = S'$ by \cref{obs:nd-rshape}.
By combining these sequences, we obtain a $\TJ(\phi)$-sequence from $S$ to $S'$
because all intermediate sets are feasible by \cref{lem:nd-shape}.
\end{proof}

\cref{lem:nd-shape-graph} implies that \MSOR{1} can be solved by
checking that the shapes of the initial and target sets belong to the same connected component of $\mathcal{S}_{k}$.
We now show that $\mathcal{S}_{k}$ can be constructed efficiently.

\begin{lemma}
Constructing $\mathcal{S}_{k}$ can be done in time $\bigO(f(t + \numq(\phi)) \cdot n^{c})$
for some computable function $f$ and a constant $c$.
\end{lemma}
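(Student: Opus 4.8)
The plan is to build $\mathcal{S}_{k}$ in three stages: enumerate all shapes, discard those that are not $k$-feasible, and then test every remaining pair for adjacency. First I would compute the type partition $\langle V_{1}, \dots, V_{t} \rangle$ in polynomial time, which also records $t$ and the sizes $|V_{i}|$. Since a shape assigns to each $i \in [t]$ one of at most $2\numq(\phi)+1$ values, there are at most $(2\numq(\phi)+1)^{t}$ shapes in total, a quantity bounded by a function of $t + \numq(\phi)$; consequently there are also only a parameter-bounded number of ordered pairs of shapes. It therefore suffices to show that deciding $k$-feasibility of a single shape, and deciding adjacency of a single pair of $k$-feasible shapes, can each be carried out in time $f'(t+\numq(\phi)) \cdot n^{c'}$.

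For $k$-feasibility I would separate two concerns. First, whether \emph{some} size-$k$ set has shape $\shape$ is pure arithmetic: a set $X$ has shape $\shape$ exactly when $\signature_{X}(i) = \shape(i)$ at every index with $\shape(i) \neq \bottop$ and $\signature_{X}(i) \in [\numq(\phi),\, |V_{i}|-\numq(\phi)]$ at every index with $\shape(i) = \bottop$, so the attainable sizes $\sum_{i}\signature_{X}(i)$ range over a contiguous integer interval and I merely check whether $k$ lies in it. Second, feasibility depends on $\shape$ alone by \cref{lem:nd-shape}, so I pick any representative set $X$ of shape $\shape$ and test $G \models \phi(X)$. Here I reuse the reduction from the proof of \cref{lem:nd-shape}: regarding $X$ as a new color produces a graph $G_{X}$ with at most $2t$ types, and applying \cref{prop:Lampis12} until every type has at most $\numq(\phi)$ vertices yields a graph $G_{X}^{*}$ on at most $2t\,\numq(\phi)$ vertices with $G_{X} \models \phi_{X}$ if and only if $G_{X}^{*} \models \phi_{X}$ (via \cref{prop:isomorphic}). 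On the parameter-sized graph $G_{X}^{*}$ I decide $\phi_{X}$ by brute force over all assignments to its quantified variables; since $|V(G_{X}^{*})|$ and the numbers of set and vertex quantifiers are all bounded in terms of $t + \numq(\phi)$, this costs a function of $t + \numq(\phi)$ times a polynomial in $|\phi|$, hence a term of the claimed shape.

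For adjacency of two $k$-feasible shapes $\shape_{1}, \shape_{2}$ I would verify conditions (1)--(3) of the definition directly. Condition~(1) is an inspection of the at most two indices at which the shapes disagree, checking the arithmetic of A1--A3 and B1--B3, which takes $\bigO(t)$ time. Conditions~(2) and~(3) assert the existence of size-$k$ sets of the respective shapes whose signatures at the distinguished $\bottop$-indices attain the prescribed boundary values $\numq(\phi)$ or $|V_{j}|-\numq(\phi)$; each such assertion simply fixes the signature at those indices and asks whether the remaining flexible indices can be completed to total $k$, which is once more a membership test in a contiguous interval. Thus every adjacency test runs in time polynomial in $t$ and $\log n$.

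The only step that touches the logic is the feasibility test, and this is where I expect the main work to lie: the whole point is to avoid MSO model checking on the full graph $G$. The kernelization supplied by \cref{prop:Lampis12} shrinks each instance to a parameter-sized graph on which exhaustive model checking is affordable, while \cref{lem:nd-shape} guarantees that testing a single representative per shape suffices. Everything else reduces to elementary interval arithmetic over signatures, so listing the bounded number of vertices (the $k$-feasible shapes) and edges (the adjacent pairs) of $\mathcal{S}_{k}$ stays within the claimed $\bigO(f(t+\numq(\phi)) \cdot n^{c})$ bound.
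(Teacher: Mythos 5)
Your proposal is correct and follows essentially the same route as the paper: enumerate all $(2\numq(\phi)+1)^{t}$ shapes, decide realizability of a size-$k$ set by the interval check on signatures (the paper's $\shape_{\downarrow}$/$\shape^{\uparrow}$ sums), test feasibility of one representative per shape via \mso{1} model checking (the paper cites the known FPT result where you re-derive it from \cref{prop:Lampis12}, but this is the same mechanism), and check adjacency coordinate-wise with the same existence tests for conditions (2) and (3).
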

\begin{proof}
To enumerate the elements of $V(\mathcal{S}_{k})$ (i.e., the size-$k$ feasible shapes),
we first enumerate all $(2\numq(\phi)+1)^{t}$ shapes.
Then, for each shape $\shape$, we define $\shape_{\downarrow}$ and $\shape^{\uparrow}$ as follows:
\begin{align*}
  \shape_{\downarrow}(i) &= 
  \begin{cases}
    \numq(\phi) & \shape(i) = \bottop, \\
    \shape(i) & \text{otherwise},
  \end{cases}
  &
  \shape^{\uparrow}(i) &= 
  \begin{cases}
    |V_{i}| - \numq(\phi) & \shape(i) = \bottop, \\
    \shape(i) & \text{otherwise}.
  \end{cases}
\end{align*}
Observe that there is a set $S \subseteq V$ of size $k \le |V|$ with shape $\shape$
if and only if $\sum_{i\in [t]} \shape_{\downarrow}(i) \le k \le \sum_{i\in [t]} \shape^{\uparrow}(i)$.
If $\shape$ satisfies this condition, then by \cref{lem:nd-shape},
it suffices to check the feasibility of an arbitrary chosen set with shape $\shape$.
We construct such a set $S_{\shape}$ of shape $\shape$ by taking, say, $\shape_{\downarrow}(i)$ vertices from each $V_{i}$.
Testing whether $G \models \phi(S_{\shape})$ is fixed-parameter tractable parameterized by $\nd(G) + |\phi|$~\cite{CourcelleMR00,Lampis12}.

For each pair of vertices in $V(\mathcal{S}_{k})$,
the adjacency can be decided in $\bigO(t)$ time by checking each coordinate of the shapes
and testing the existence of size-$k$ sets of the given shapes with the additional conditions.
\end{proof}

The lemma above already implies that \MSOR{1} is fixed-parameter tractable parameterized by $\nd + |\phi|$.
To find a shortest $\TJ(\phi)$-sequence, we take a closer look at~$\mathcal{S}_{k}$.

A sequence $\shape_{0}, \dots, \shape_{q}$ of shapes with $\shape_{i} \ne \shape_{i+1}$ for $0 \le i < q $ is 
the \emph{shape sequence} of a $\TJ(\phi)$-sequence
if the $\TJ(\phi)$-sequence can be split into $q+1$ subsequences
such that all sets in the $i$th subsequence have shape $\shape_{i}$ for $0 \le i \le q$.

\begin{lemma}
\label{lem:nd-path}
If there is a $\TJ(\phi)$-sequence from $S$ to $S'$,
then there is a shortest one such that the corresponding shape sequence
forms a simple path in $\mathcal{S}_{k}$.
\end{lemma}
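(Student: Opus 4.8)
The plan is to argue by an extremal choice combined with a shortcutting (exchange) argument. The starting observation is a lower bound: any $\TJ(\phi)$-sequence from a set $R$ to a set $R'$ has length at least $|R \setminus R'|$, because a single token jump $X \mapsto X - u + v$ can increase $|X \cap R'|$ by at most one, so the number of misplaced tokens $|X \setminus R'|$ drops by at most one per move. Note that \cref{lem:nd-sameshape} shows this bound is attained whenever $R$ and $R'$ have the same shape. With this in hand, I would fix a shortest $\TJ(\phi)$-sequence from $S$ to $S'$ and, among all shortest ones, choose one whose shape sequence $\shape_{0}, \dots, \shape_{q}$ has the smallest number of blocks $q+1$.

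Suppose, for contradiction, that this shape sequence is not simple, so $\shape_{a} = \shape_{b}$ for some $a < b$ (hence $b \ge a+2$, since consecutive shapes differ). Let $F_{a}$ be the first set of the block indexed $a$ and $L_{b}$ the last set of the block indexed $b$; both sets have shape $\shape_{a}$. The contiguous portion of the original sequence from $F_{a}$ to $L_{b}$ is itself a $\TJ(\phi)$-sequence, hence has length at least $|F_{a} \setminus L_{b}|$ by the lower bound above. By \cref{lem:nd-sameshape} there is a $\TJ(\phi)$-sequence from $F_{a}$ to $L_{b}$ of length exactly $|F_{a} \setminus L_{b}|$ in which every set has shape $\shape_{a}$; replacing the original $F_{a}$-to-$L_{b}$ portion by this one yields a $\TJ(\phi)$-sequence no longer than the original. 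Since $F_{a}$ and $L_{b}$ are left unchanged, the spliced sequence agrees with the original before $F_{a}$ and after $L_{b}$, and every newly inserted set is feasible by \cref{lem:nd-shape} because $\shape_{a}$ is $k$-feasible; thus the result is a valid $\TJ(\phi)$-sequence from $S$ to $S'$, and it is again shortest.

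It remains to see that the new sequence has strictly fewer blocks. The set immediately preceding $F_{a}$ (if any) has shape $\shape_{a-1} \ne \shape_{a}$, and the set immediately following $L_{b}$ (if any) has shape $\shape_{b+1} \ne \shape_{b} = \shape_{a}$; the boundary cases $a = 0$ (so $F_{a} = S$) and $b = q$ (so $L_{b} = S'$) simply remove these junctions. Consequently the blocks indexed $a$ through $b$ collapse into a single block of shape $\shape_{a}$, while all remaining junctions stay between distinct shapes, so the new shape sequence is valid and has $b - a \ge 2$ fewer blocks, contradicting minimality. Therefore the shape sequence is simple, and since each pair of consecutive distinct shapes is adjacent in $\mathcal{S}_{k}$ by \cref{lem:nd-adjshape}, it forms a simple path in $\mathcal{S}_{k}$.

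The step I expect to be the only real obstacle is the length bookkeeping in the shortcut: I must make sure the direct same-shape replacement can never be longer than the detour it removes, which is exactly where the symmetric-difference lower bound and the tightness guaranteed by \cref{lem:nd-sameshape} are used in tandem. Everything else---the feasibility of inserted sets, the validity of the splice at the unchanged endpoints $F_{a}$ and $L_{b}$, and the strict decrease in the block count---should follow routinely, with only minor care needed for the boundary cases where the repeated shape touches $S$ or $S'$.
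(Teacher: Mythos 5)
Your proof is correct and follows essentially the same route as the paper's: take a shortest sequence minimizing the number of shape blocks, use the lower bound $|R\setminus R'|$ together with the tightness from \cref{lem:nd-sameshape} to splice out the detour between two occurrences of the same shape, and contradict the minimality of the block count. The only cosmetic difference is that the paper additionally maximizes the gap between the repeated indices, which your argument shows is unnecessary.
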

\begin{proof}
Let $S_{0}, \dots, S_{p}$ be a shortest $\TJ(\phi)$-sequence from $S$ to $S'$
and $\shape_{0}, \dots, \shape_{q}$ be its shape sequence.
By \cref{lem:nd-adjshape}, $\shape_{0}, \dots, \shape_{q}$ is a walk in $\mathcal{S}_{k}$.
Assume that $S_{0}, \dots, S_{p}$ has minimum $q$ among all shortest $\TJ(\phi)$-sequences from $S$ to $S'$.
Suppose that $\shape_{i} = \shape_{j}$ for some $i < j$.
We take $i$ and $j$ so that $j-i$ is maximized.
Let $h_{i}$ and $h_{j}$ be the smallest and largest indices such that 
$S_{h_{i}}$ and $S_{h_{j}}$ have the shape $\shape_{i}$ ($= \shape_{j}$).
By \cref{lem:nd-sameshape}, there is a $\TJ(\phi)$-sequence $R_{0}, \dots, R_{|S_{h_{i}} \setminus S_{h_{j}}|}$
of length $|S_{h_{i}} \setminus S_{h_{j}}|$ from $S_{h_{i}} = R_{0}$ to $S_{h_{j}} = R_{|S_{h_{i}} \setminus S_{h_{j}}|}$
such that all sets in the sequence have the same shape.
Since every $\TJ(\phi)$-sequence from $S_{h_{i}}$ to $S_{h_{j}}$ takes at least $|S_{h_{i}} \setminus S_{h_{j}}|$ steps,
$S_{0}, \dots, R_{0} = (S_{h_{i}}), \dots, R_{|S_{h_{i}} \setminus S_{h_{j}}|} (= S_{h_{j}}), \dots, S_{p}$
is also a shortest $\TJ(\phi)$-sequence from $S$ to $S'$.
This new sequence has $\shape_{0}, \dots, \shape_{i}, \shape_{j+1}, \dots, \shape_{q}$ as its shape sequence.
This contradicts the assumption that $q$ is minimum.
\end{proof}

\cref{lem:nd-path} implies that for finding a shortest $\TJ(\phi)$-sequence from $S$ to $S'$,
it suffices to first guess a path in $\mathcal{S}_{k}$ and then find a shortest $\TJ(\phi)$-sequence having the path as its shape sequence.
Note that $|V(\mathcal{S}_{k})| \le (2\numq(\phi)+1)^{t}$ and thus the number of candidates for such shape sequences
is upper bounded by a function depending only on $\numq(\phi)$ and $t$. (Recall that $t$ is the number of types in $G$.)
Therefore, the following lemma completes the proof of \cref{thm:msor1-nd}.

\begin{lemma}
\label{lem:nd-shortest}
Given a sequence $\shape_{0}, \dots, \shape_{q}$ of shapes such that $\shape_{0} = \shape_{S}$ and $\shape_{q} = \shape_{S'}$,
finding a shortest $\TJ(\phi)$-sequence with the shape sequence $\shape_{0}, \dots, \shape_{q}$ is
fixed-parameter tractable parameterized by $t + \numq(\phi)$.
\end{lemma}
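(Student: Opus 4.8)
The plan is to reduce the task to a single minimum-cost flow computation. A $\TJ(\phi)$-sequence with shape sequence $\shape_0, \dots, \shape_q$ splits into $q+1$ \emph{phases}, where phase $i$ consists of all sets of shape $\shape_i$, separated by $q$ \emph{transition} moves, each being the single $\TJ(\phi)$-move that changes the shape. Writing $A_i$ and $B_i$ for the first and last set of phase $i$ (so $A_0 = S$ and $B_q = S'$ by \cref{obs:nd-rshape}), \cref{lem:nd-sameshape} tells me phase $i$ can be realized in exactly $|A_i \setminus B_i|$ moves and that this is best possible; hence the total length equals $\sum_{i=0}^{q} |A_i \setminus B_i| + q$. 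First I would record that, once the per-type counts $\signature_{X}(h) = |X \cap V_h|$ of every checkpoint $X \in \{A_0,B_0,\dots,A_q,B_q\}$ are fixed, the minimum length is determined: since vertices in a type are interchangeable, for a fixed trajectory of counts one can, independently in each type $V_h$, choose the occupied vertices by a greedy ``keep a common core'' rule so that the number of vertices shared by consecutive checkpoints is exactly the minimum of the two counts. Letting $\sigma_0, \dots, \sigma_{2q+1}$ denote the signatures of $A_0,B_0,\dots$ in order, this yields the clean identity that the optimal length equals
\[
  (2q+1)k - \sum_{c=0}^{2q}\sum_{h=1}^{t} \min\bigl(\sigma_c(h),\sigma_{c+1}(h)\bigr),
\]
so minimizing the length is the same as choosing the signature trajectory that maximizes the double sum of minima.

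Next I would encode this optimization as a time-expanded minimum-cost flow. The plan is to build a layered network with a node $(h,c)$ for every type $h \in [t]$ and checkpoint $c \in \{0,\dots,2q+1\}$, route $k$ units (tokens) from a source through the layers to a sink, and read $\sigma_c(h)$ off as the flow through $(h,c)$. A token that stays in type $h$ from $c$ to $c+1$ uses a zero-cost ``stay'' edge $(h,c)\to(h,c+1)$, and a token that changes type uses a cost-$1$ ``move'' edge $(h,c)\to(h',c+1)$; by construction the total cost equals $\sum_c\bigl(k - \sum_h \min(\sigma_c(h),\sigma_{c+1}(h))\bigr)$, i.e.\ the optimal length. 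I would enforce the remaining constraints locally: the source/sink edges pin $\sigma_0 = \signature_S$ and $\sigma_{2q+1} = \signature_{S'}$ (forced by \cref{obs:nd-rshape}); splitting each node $(h,c)$ and capping its throughput forces $\sigma_c(h)$ into the range allowed by $\shape_i$ at that checkpoint ($\{\shape_i(h)\}$ if $\shape_i(h)\neq\bottop$, and $[\numq(\phi),|V_h|-\numq(\phi)]$ otherwise). For the within-phase steps (even to odd layer) I allow all move edges, which is legitimate because \cref{lem:nd-sameshape} realizes any redistribution of tokens inside a fixed shape. For a transition step (odd to even layer) I instead install only the move edges permitted by the shape adjacency between $\shape_i$ and $\shape_{i+1}$: the cases A1--A3 and B1--B3 determine which type(s) may change count and pin the boundary counts (e.g.\ forcing $\signature(h)=\numq(\phi)$ in case A2 via the node cap), leaving genuine freedom only in which $\bottop$-type absorbs or emits the moved token.

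With the network in place, correctness follows in two directions. For the lower bound, the signatures of any $\TJ(\phi)$-sequence with the given shape sequence form a feasible flow whose cost is at most its length, since $|X_c \cap X_{c+1}| \le \sum_h \min(\sigma_c(h),\sigma_{c+1}(h))$; hence the optimal length is at least the minimum cost. For the matching upper bound, from an optimal flow I read the trajectory $\sigma_0,\dots,\sigma_{2q+1}$, realize it type-by-type with the greedy core rule, take unions to obtain checkpoints $X_0,\dots,X_{2q+1}$, expand each phase by \cref{lem:nd-sameshape}, and observe that each transition changes exactly one vertex (count $-1$ in the losing type, $+1$ in the gaining type, everything else untouched), so it is a single $\TJ(\phi)$-move; all intermediate sets are feasible by \cref{lem:nd-shape}. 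The network has $\bigO(tq)$ nodes and $\bigO(t^2 q)$ edges with integer capacities bounded by $n$, so computing a minimum-cost flow and reconstructing the sequence runs in time polynomial in the input size, giving fixed-parameter tractability in $t + \numq(\phi)$.

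The step I expect to be the main obstacle is the transition gadget: I must verify that a single $\TJ(\phi)$-move changing $\shape_i$ to $\shape_{i+1}$ corresponds \emph{exactly} to the move edges and boundary caps dictated by the adjacency cases, so the flow can neither realize an invalid transition cheaply nor be forced to spend more than one move there. Making this correspondence airtight --- in particular handling the $\bottop$-boundary equalities in A2/A3 and B2/B3 and the freedom in choosing the $\bottop$ partner when $\shape_i$ and $\shape_{i+1}$ differ in only one coordinate --- is the delicate part; the per-type greedy realization and the flow/length identity are comparatively routine once this is pinned down.
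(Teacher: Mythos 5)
Your proposal is correct and takes essentially the same route as the paper: both reduce the task to a minimum-cost flow/circulation on a layered digraph with one node per (type, position in the shape sequence), unit costs on inter-type move arcs, demands/capacities pinning the per-type counts dictated by each shape, and \cref{lem:nd-sameshape,lem:nd-shape,obs:nd-rshape} to translate between flows and actual $\TJ(\phi)$-sequences. The only organizational difference is that you double the layers to isolate each single transition move behind an adjacency-derived gadget (the step you rightly flag as delicate), whereas the paper uses one bidirectional clique per consecutive shape pair and discharges the same concern by an explicit ordering argument on the moves (its $i^{\text{out}}$/$i^{\text{in}}$ case analysis), so the worry you raise is real but resolvable exactly along the paper's lines.
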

\begin{proof}
We reduce the problem to \textsc{Minimum-Cost Circulation} defined as follows.
Let $D = (X,A)$ be a directed graph.
We define $\delta^{\mathrm{in}}(v) = \{a \in A \mid a = (u,v) \in A\}$ and
$\delta^{\mathrm{out}}(v) = \{a \in A \mid a = (v,u) \in A\}$.
A function $f \colon A \to \mathbb{R}$ is a \emph{circulation}
if $f(\delta^{\mathrm{in}}(v)) = f(\delta^{\mathrm{out}}(v))$ for each $v \in X$,
where $f(A') = \sum_{a \in A'} f(a)$ for $A' \subseteq A$.
A circulation $f$ is an \emph{integer circulation} if $f(a)$ is an integer for each $a \in A$.
Given a \emph{cost function} $w \colon A \to \mathbb{Q}$,
the \emph{cost} of a circulation $f$ is defined as $\mathrm{cost}(f) = \sum_{a \in A} w(a) f(a)$.
Now, given a directed graph $D = (X,A)$, 
a \emph{demand} function $d \colon A \to \mathbb{Q}$,
a \emph{capacity} function $c \colon A \to \mathbb{Q}$, and
a cost function $w \colon A \to \mathbb{Q}$,
\textsc{Minimum-Cost Circulation} asks to find a circulation $f$
minimizing $\mathrm{cost}(f)$ under the condition that $d(a) \le f(a) \le c(a)$ for each $a \in A$.
It is known that 
\textsc{Minimum-Cost Circulation} can be solved in strongly polynomial time, and
if the demand $d$ and the capacity $c$ take integer values only,
then a minimum-cost integer circulation is found~\cite{Tardos85}.

Now we construct an instance of \textsc{Minimum-Cost Circulation}
from the graph $G = (V, E, \mathcal{C})$,
its type partition $\langle V_{1}, \dots, V_{t} \rangle$,
and the shape sequence $\shape_{0}, \dots, \shape_{q}$.
We first construct $D = (X,A)$.
The digraph $D$ contains two special vertices $s$ and $s'$,
and $q$ sets $L_{0}, L_{1}, \dots, L_{q-1}$ of vertices
such that $L_{j} = \{v_{1}^{j}, \dots, v_{t}^{j}\}$ for $0 \le j \le q-1$.
Each $L_{j}$ is a bidirectional clique (i.e., there is an arc for each ordered pair of vertices in $L_{j}$).
For $1 \le j \le q-1$, $D$ contains the matching $\{(v_{i}^{j-1}, v_{i}^{j}) \mid 1 \le i \le t\}$ from $L_{j-1}$ to $L_{j}$.
There are arcs from $s$ to all vertices in $L_{0}$
and from all vertices in $L_{q-1}$ to $s'$.
Additionally, $D$ contains the arc $(s',s)$.
Each arc $a$ in each clique $L_{j}$ has demand $d(a) = 0$, capacity $c(a) = \infty$, and cost $w(a) = 1$.
All other arcs have cost $0$.
We set $d((s',s)) = c((s',s)) = k$.
For $i \in [t]$, we set 
$d((s, v_{i}^{0})) = c((s, v_{i}^{0})) = |V_{i} \cap S|$ ($= \shape_{0}(i)$) and
$d((v_{i}^{q-1}, s')) = c((v_{i}^{q-1}, s')) = |V_{i} \cap S'|$ ($= \shape_{q}(i)$).
For $i \in [t]$ and $j \in [q-1]$, we set
\begin{align*}
  d((v_{i}^{j-1}, v_{i}^{j}))
  &=
  \begin{cases}
    \numq(\phi) &  \shape_{j}(i) = \bottop, \\
    \shape_{j}(i) & \text{otherwise},
  \end{cases}
  &
  c((v_{i}^{j-1}, v_{i}^{j}))
  &=
  \begin{cases}
    |V_{i}| - \numq(\phi) &  \shape_{j}(i) = \bottop, \\
    \shape_{j}(i) & \text{otherwise}.
  \end{cases}
\end{align*}

We show that 
there exists a $\TJ(\phi)$-sequence of length at most $p$ with the shape sequence $\shape_{0}, \dots, \shape_{q}$ from $S$ to $S'$
if and only if 
the instance $\langle D, d, c, w \rangle$ of \textsc{Minimum-Cost Circulation}
admits an integer circulation $f$ of cost at most $p$.
This completes the proof
since \textsc{Minimum-Cost Circulation} is solvable in strongly polynomial time
and the size of $D$ depends only on $t$ and the number of shapes.

\paragraph{The only-if direction}
Assume that 
$S_{0}, \dots, S_{p}$ is a $\TJ(\phi)$-sequence from $S = S_{0}$ to $S' = S_{p}$ with the shape sequence $\shape_{0}, \dots, \shape_{q}$.
Let $h(0), \dots, h(q)$ be the first indices such that $S_{h(j)}$ has shape $\shape_{j}$.
Note that $h(0) = 0$ and $h(q) = p$ by \cref{obs:nd-rshape}.
We construct a circulation $f$ as follows.

For $i \in [t]$ and $j \in [q-1]$, we set
$f((v_{i}^{j-1}, v_{i}^{j})) = |V_{i} \cap S_{h(j)}|$.
Since $S_{h(j)}$ has shape $\shape_{j}$, we have
$d((v_{i}^{j-1}, v_{i}^{j})) \le f((v_{i}^{j-1}, v_{i}^{j})) \le c((v_{i}^{j-1}, v_{i}^{j}))$.
For $i,i' \in [t]$ with $i \ne i'$ and for $0 \le j \le q - 1$,
we set $f((v_{i}^{j}, v_{i'}^{j}))$ to the number of $\TJ(\phi)$-moves from $V_{i}$ to $V_{i'}$
in the $\TJ(\phi)$-sequence $S_{h(j)}, \dots, S_{h(j+1)}$.
For the other arcs $a$, we set $f(a) = d(a)$ ($=c(a)$).

To see that $f$ is a circulation, 
we need to ensure that $f(\delta^{\mathrm{in}}(v)) = f(\delta^{\mathrm{out}}(v))$ for each $v \in X$.
This holds for the special vertices $s$ and $s'$ since each arc $a$ incident to them satisfies that $d(a) = c(a)$.
For $v_{i}^{j}$, observe that it has
an incoming arc $a_{\text{in}}$ from $s$ or $v_{i}^{j-1}$ such that $f(a_{\text{in}}) = |V_{i} \cap S_{h(j)}|$
and
an outgoing arc $a_{\text{out}}$ to $s'$ or $v_{i}^{j+1}$ such that $f(a_{\text{out}}) = |V_{i} \cap S_{h(j+1)}|$.
Thus we have
\begin{align*}
  f(\delta^{\mathrm{in}}(v_{i}^{j}))
  &= 
  |V_{i} \cap S_{h(j)}| + \sum_{i' \in [t] \setminus \{i\}} f((v_{i'}^{j}, v_{i}^{j})),
  \\
  f(\delta^{\mathrm{out}}(v_{i}^{j}))
  &= 
  |V_{i} \cap S_{h(j+1)}| + \sum_{i' \in [t] \setminus \{i\}} f((v_{i}^{j}, v_{i'}^{j})).
\end{align*}
On the other hand, the definition of $f$ in $L_{j}$ implies that 
\[
  |V_{i} \cap S_{h(j)}| + \sum_{i' \in [t] \setminus \{i\}} f((v_{i'}^{j}, v_{i}^{j})) - \sum_{i' \in [t] \setminus \{i\}} f((v_{i}^{j}, v_{i'}^{j}))
  =
  |V_{i} \cap S_{h(j+1)}|.
\]
Hence, $f(\delta^{\mathrm{in}}(v_{i}^{j})) = f(\delta^{\mathrm{out}}(v_{i}^{j}))$ holds.

To see that $\mathrm{cost}(f) = p$,
recall that only the arcs in the cliques $L_{j}$ have positive costs
and that each step in the $\TJ(\phi)$-sequence $S_{0}, \dots, S_{p}$ contributes exactly $1$ to $f$ in a clique $L_{j}$.

\paragraph{The if direction}
Assume that $\langle D, d, c, w \rangle$ admits an integer circulation $f$ of cost at most~$p$.
We assume that $f$ is of minimum cost. 

Let $j$ be an index such that $0 \le j \le q-1$.
Let $R = S$ if $j = 0$; otherwise let $R$ be a set of shape $\shape_{j}$ such that 
$|V_{i} \cap R| = f((v_{i}^{j-1},v_{i}^{j}))$ for each $i \in [t]$.
We show that there is a set $R'$ of shape $\shape_{j+1}$ such that 
\begin{itemize}
  \item $R' = S'$ if $j+1 = q$, otherwise $|V_{i} \cap R'| = f((v_{i}^{j},v_{i}^{j+1}))$ for each $i \in [t]$, and
  \item there is a $\TJ(\phi)$-sequence $\langle R_{0}, \dots, R_{\ell'}, \dots, R_{\ell} \rangle$ with $\ell = \sum_{a \in A(D[L_{j}])} f(a)$
  from $R_{0} = R$ to $R_{\ell} = R'$, where 
  each $R_{i}$ with $0 \le i \le \ell'$ has shape $\shape_{j}$ and
  each $R_{i}$ with $\ell'+1 \le i \le \ell$ has shape $\shape_{j+1}$.
\end{itemize}
Concatenating such sequences, we can obtain a $\TJ(\phi)$-sequence from a set of shape $\shape_{0}$
to a set of shape $\shape_{q}$ (i.e., from $S$ to $S'$ by \cref{obs:nd-rshape})
with length 
\[
  \sum_{j=0}^{q-1} \, \sum_{a \in A(D[L_{j}])} f(a) \le p.
\]

Observe that in $D[L_{j}]$, no vertex $v_{i}^{j}$ has both outgoing and incoming arcs with positive $f$ values.
To see this, suppose that $f((v_{i}^{j}, v_{i'}^{j})) = a > 0$ and
$f((v_{i''}^{j}, v_{i}^{j})) = b > 0$ for some $i'$ and $i''$.
Then, by subtracting $\min\{a,b\}$ from both $f((v_{i}^{j}, v_{i'}^{j}))$ and $f((v_{i''}^{j}, v_{i}^{j}))$,
and by adding $\min\{a,b\}$ to $f((v_{i''}^{j}, v_{i'}^{j}))$, we obtain a circulation with strictly smaller cost.
This contradicts the assumption that $f$ is of minimum cost.
Let $L_{j}^{\text{out}}$ and $L_{j}^{\text{in}}$
be the sets of vertices in $L_{j}$ with outgoing arcs and incoming arcs in $D[L_{j}]$, respectively, with positive $f$ values.

We initialize a mapping $g$ with the restriction of $f$ to $A(D[L_{j}])$.
Starting with $R_{0} = R$, we obtain $R_{i+1}$ from $R_{i}$
by picking up some arc $(v_{i}^{j}, v_{i'}^{j})$ with $g((v_{i}^{j}, v_{i'}^{j})) > 0$,
making a $\TJ(\phi)$-move from $V_{i}$ to $V_{i'}$,
and updating $g$ by subtracting $1$ from $g((v_{i}^{j}, v_{i'}^{j}))$.
After executing such steps $\sum_{a \in A(D[L_{j}])} f(a)$ times, we obtain $R'$.
Thus it suffices to show that we can execute such steps in a right ordering
so that the requirements for the shapes are satisfied.
Observe that for each $i$, $|R_{x} \cap V_{i}|$ changes monotonically for $0 \le x \le \ell$:
it is decreasing if $v_{i}^{j} \in L_{j}^{\text{out}}$,
increasing if $v_{i}^{j} \in L_{j}^{\text{in}}$, and
it never changes if $v_{i}^{j} \notin L_{j}^{\text{out}} \cup L_{j}^{\text{in}}$.
Thus, if $\shape_{j}(i) = \shape_{j+1}(i)$, then for each $R_{x}$ in the $\TJ(\phi)$-sequence,
it holds that $\shape_{R_{x}}(i) = \shape_{j}(i)$ ($=\shape_{j+1}(i)$), where $\shape_{R_{x}}$ is the shape of $R_{x}$.
Thus, when constructing the $\TJ(\phi)$-sequence, we only have to take care of indices $h$ with $\shape_{j}(h) \ne \shape_{j+1}(h)$.

First assume that there is only one index $h$ with $\shape_{j}(h) \ne \shape_{j+1}(h)$.
In this case, we can make the moves in any order.
Only the $h$th component of the shape changes during the $\TJ(\phi)$-sequence,
and when it changes, the shape becomes equal to $\shape_{j+1}$ and never changes after that.

Next assume that $\shape_{j}$ and $\shape_{j+1}$ disagree at exactly two indices.
We set $i^{\text{out}}$ to the index satisfying one of A1, A2, A3 in the definition of adjacency
and $i^{\text{in}}$ to the one satisfying one of B1, B2, B3.
From the definition of adjacency, $v_{i^{\text{out}}}^{j} \in L_{j}^{\text{out}}$ and $v_{i^{\text{in}}}^{j} \in L_{j}^{\text{in}}$.
We first execute moves that do no change the shape.
After exhaustively executing such moves, we obtain a set $R_{x}$ with shape $\shape_{j}$ satisfying that 
\begin{align*}
  |V_{i^{\text{out}}} \cap R_{x}| 
  &= 
  \begin{cases}
    \shape_{j+1}(i^{\text{out}}) + 1 & \shape_{j+1}(i^{\text{out}}) \ne \bottop, \\
    |V_{i^{\text{out}}}| - \numq(\phi) + 1 & \shape_{j+1}(i^{\text{out}}) = \bottop,
  \end{cases}
  \\
  |V_{i^{\text{in}}} \cap R_{x}| 
  &= 
  \begin{cases}
    \shape_{j+1}(i^{\text{in}}) - 1 & \shape_{j+1}(i^{\text{in}}) \ne \bottop, \\
    \numq(\phi)-1 & \shape_{j+1}(i^{\text{in}}) = \bottop.
  \end{cases}
\end{align*}
Observe that at this point, each positive arc in $g$ is
an outgoing arc of $v_{i^{\text{out}}}^{j}$ or an incoming arc of $v_{i^{\text{in}}}^{j}$.

Now we make a move from $V_{i^{\text{out}}}$ to $V_{i^{\text{in}}}$ and obtain a set $R_{x+1}$ with shape $\shape_{j+1}$
no matter if $g((v_{i^{\text{out}}}^{j}, v_{i^{\text{in}}}^{j})) > 0$ or not.
If $g((v_{i^{\text{out}}}^{j}, v_{i^{\text{in}}}^{j})) > 0$,
then this is just a valid move. We decrease $g((v_{i^{\text{out}}}^{j}, v_{i^{\text{in}}}^{j}))$ by $1$ in this case.
If $g((v_{i^{\text{out}}}^{j}, v_{i^{\text{in}}}^{j})) = 0$,
then there exist $h^{\text{out}}$ and $h^{\text{in}}$ such that
$g((v_{i^{\text{out}}}^{j}, v_{h^{\text{in}}}^{j})) > 0$ and $g((v_{h^{\text{out}}}^{j}, v_{i^{\text{in}}}^{j})) > 0$
as moves from $V_{i^{\text{out}}}$ and to $V_{i^{\text{in}}}$ still have to be made.
We decrease both
$g((v_{i^{\text{out}}}^{j}, v_{h^{\text{in}}}^{j}))$ and $g((v_{h^{\text{out}}}^{j}, v_{i^{\text{in}}}^{j}))$ by $1$
and increase $g((v_{h^{\text{out}}}^{j}, v_{h^{\text{in}}}^{j}))$ by $1$.
Intuitively, we make a move from $V_{i^{\text{out}}}$ to $V_{i^{\text{in}}}$ and promise to make a move from $V_{h^{\text{out}}}$ to $V_{h^{\text{in}}}$
instead of making moves from $V_{i^{\text{out}}}$ to $V_{h^{\text{in}}}$ and from $V_{h^{\text{out}}}$ to $V_{i^{\text{in}}}$.

After obtaining the set $R_{x+1}$ with shape $\shape_{j+1}$, we execute moves following $g$ in an arbitrary order.
This never changes the shape of the set and obtains $R'$.
\end{proof}

\subsection{Extension to \MSOR{2}}
\label{mso2ND}
Here we try to extend \cref{thm:msor1-nd} to \MSOR{2}.
We first observe that it is easy if the parameter is the vertex cover number.
\begin{proposition}
[{\cite[Lemma~6 (rephrased)]{Lampis12}}]
\label{pro:vc-mso2}
Given an \mso{2} sentence $\phi$, a colored graph $G$, and a vertex cover $C$ of $G$,
one can compute in polynomial time
an \mso{1} sentence $\phi'$ and a colored graph $G'$ with $|C|$ additional colors such that
$\vc(G') = \vc(G)$,
$|\phi'|$ depends only on $|\phi|$ and $|C|$,  and
$G \models \phi$ if and only if $G' \models \phi'$.
\end{proposition}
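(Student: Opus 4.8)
The plan is to encode the edge quantification of \mso{2} by vertex quantification, using crucially that the given cover $C = \{c_1,\dots,c_r\}$ (with $r = |C|$) meets every edge, so $V \setminus C$ is independent and each edge has at least one endpoint in $C$. First I would take $G'$ to be $G$ together with $r$ new singleton colors $D_1, \dots, D_r$ with $D_i = \{c_i\}$; this leaves the graph structure untouched, so $\vc(G') = \vc(G)$, and it adds exactly $|C|$ colors while letting a formula name each cover vertex individually (via $D_i(x)$).

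Next I would fix the encoding. An edge set $F \subseteq E$ is represented by an $r$-tuple $(Y_1,\dots,Y_r)$ of vertex sets, where $Y_i = \{\, w \in V : \{c_i,w\} \in F \,\}$. I would call such a tuple \emph{valid} if $w \in Y_i$ implies $\{c_i,w\}\in E$ and, to treat edges inside $C$ without double counting, $c_j \in Y_i \Leftrightarrow c_i \in Y_j$ for all $i \ne j$; the predicate $\mathrm{valid}(Y_1,\dots,Y_r)$ is \mso{1}-expressible with size $\bigO(r^2)$ using the colors $D_i$. Because every edge meets $C$, the map $F \mapsto (Y_1,\dots,Y_r)$ is a bijection from subsets of $E$ onto valid tuples, and a single edge corresponds to the special case of a two-element set, namely a pair $(i, w)$ with $\{c_i, w\}\in E$.

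With the encoding fixed, I would define a recursive translation $\tau$ from \mso{2} formulas over $G$ to \mso{1} formulas over $G'$. Vertex and vertex-set variables, equality, adjacency, colors, and the Boolean connectives are carried over unchanged. Each edge-set variable $F$ becomes a block $Y_1^F,\dots,Y_r^F$, with $\exists F\,\psi$ going to $\exists Y_1^F\cdots\exists Y_r^F\,(\mathrm{valid}(Y_1^F,\dots,Y_r^F)\wedge\tau(\psi))$ and $\forall F\,\psi$ to the $\Rightarrow$-relativized form; each edge variable $e$ becomes a pair (cover index $i$, endpoint $y_e$), so $\exists e$ expands to $\bigvee_{i=1}^r \exists y_e\,(E(c_i,y_e)\wedge\cdots)$ and $\forall e$ to the dual, where $c_i$ is accessed through $D_i$. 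The incidence atom $I(e,x)$ translates, for $e \sim (i,y_e)$, to $D_i(x)\vee (x = y_e)$, and the membership atom $e \in F$ to $Y_i^F(y_e)$; both are well defined because the validity constraint makes the value independent of the chosen representation of a cover-internal edge. Since $r$ is bounded, $|\tau(\phi)|$ depends only on $|\phi|$ and $r = |C|$, and $G'$ and $\phi' = \tau(\phi)$ are computable in polynomial time.

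Finally I would verify correctness by structural induction: under the bijection above (edge sets $\leftrightarrow$ valid tuples, edges $\leftrightarrow$ admissible pairs), every subformula and assignment satisfies $G \models \psi$ iff $G' \models \tau(\psi)$; the atomic and Boolean cases are the checks just described, the vertex quantifier cases are immediate, and the edge/edge-set quantifier cases follow because every edge (set) has exactly one valid encoding and the translated atoms are representation-independent. At the level of sentences this gives $G \models \phi$ iff $G' \models \phi'$. The main obstacle is precisely this representation-independence for edges inside the cover: the naive tuple encoding double-counts such edges, and the consistency clause $c_j \in Y_i \Leftrightarrow c_i \in Y_j$ --- together with a representation-independent translation of $I(e,x)$ and $e \in F$ --- is exactly what repairs it and makes $\tau$ a faithful, bijective simulation.
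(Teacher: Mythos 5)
The paper does not actually prove this proposition — it is imported (rephrased) from Lemma~6 of Lampis~\cite{Lampis12} — so there is no in-paper argument to compare against. Your construction is precisely the standard one behind that lemma: singleton colors naming the cover vertices, edge sets encoded as $|C|$-tuples of vertex sets anchored at cover vertices, edge variables expanded into a cover index plus an endpoint, and a consistency clause making cover-internal edges representation-independent; this is correct, with the only atom left implicit being edge equality $e = e'$, which is handled the same way as $I(e,x)$ via endpoint comparison.
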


One can easily extend \cref{pro:vc-mso2} to the corresponding proposition about 
an \mso{2} formula with one free vertex-set variable $X$ by considering $X$ as a new color.
This observation and \cref{thm:msor1-nd} together imply the following.
(Recall that $\nd(G) \le 2^{\vc(G)} + \vc(G)$ for every graph $G$.)
\begin{corollary}
\label{cor:msor2-vc}
\MSOR{2} parameterized by $\vc + |\phi|$ is fixed-parameter tractable.
Furthermore, for a yes instance of \MSOR{2}, finding a shortest $\TJ$-sequence is fixed-parameter tractable with the same parameter.
\end{corollary}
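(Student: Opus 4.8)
The plan is to reduce \MSOR{2} to \MSOR{1} so that \cref{thm:msor1-nd} can be invoked directly. The bridge is \cref{pro:vc-mso2}, extended to a formula with one free vertex-set variable. Given $\phi(X)$, I would treat the free variable $X$ as an additional color and apply (the sentence version of) \cref{pro:vc-mso2}. The crucial point to verify is that the graph transformation $G \to G'$ produced there depends only on $G$ and the given vertex cover $C$, and not on which vertices are placed in the color $X$; granting this, one obtains a \emph{single} colored graph $G'$ and a \emph{single} \mso{1} formula $\phi'(X)$, on the same vertex set $V(G') = V(G)$ and with $|C|$ extra encoding colors, such that $G \models \phi(S)$ if and only if $G' \models \phi'(S)$ for \emph{every} $S \subseteq V$. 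Here $\vc(G') = \vc(G)$ and $|\phi'|$ is bounded by a function of $|\phi|$ and $|C| = \vc(G)$; recall that a minimum vertex cover may be assumed to be part of the input when $\vc$ is a parameter.

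Next I would transfer the reconfiguration structure. Since $V(G') = V(G)$, every vertex subset of $G$ is a vertex subset of $G'$, and a single $\TJ$-move---exchanging one vertex for another---is literally the same operation in $G$ and $G'$. Consequently a sequence $S_0, \dots, S_\ell$ of vertex subsets is a $\TJ(\phi)$-sequence in $G$ if and only if it is a $\TJ(\phi')$-sequence in $G'$: the exchange condition is identical, and the per-step feasibility $G \models \phi(S_i) \Leftrightarrow G' \models \phi'(S_i)$ holds by the equivalence above. In particular $|S| = |S'|$, $G' \models \phi'(S)$, and $G' \models \phi'(S')$, so $\langle \phi', G', S, S' \rangle$ is a valid \MSOR{1} instance, and it is a yes-instance exactly when $\langle \phi, G, S, S' \rangle$ is a yes-instance of \MSOR{2}. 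Moreover the correspondence is length-preserving, so shortest sequences map to shortest sequences.

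Finally I would apply \cref{thm:msor1-nd} to $\langle \phi', G', S, S' \rangle$. Its parameter is $\nd(G') + |\phi'|$, and since $\nd(G') \le 2^{\vc(G')} + \vc(G') = 2^{\vc(G)} + \vc(G)$ and $|\phi'|$ is bounded by a function of $|\phi| + \vc(G)$, this parameter is bounded by a computable function of $\vc(G) + |\phi|$. Hence \MSOR{2} is fixed-parameter tractable parameterized by $\vc + |\phi|$, and because \cref{thm:msor1-nd} also produces a shortest $\TJ(\phi')$-sequence, the length-preserving correspondence yields a shortest $\TJ(\phi)$-sequence for any yes-instance of \MSOR{2}.

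The step I expect to require the most care is the extension of \cref{pro:vc-mso2} to a free set variable, and in particular the two structural claims it hinges on: that the transformation leaves the vertex set unchanged ($V(G') = V(G)$) and that it is oblivious to the interpretation of $X$. If the transformation instead contracted or deleted independent-set vertices (as kernelization-style reductions sometimes do), or if $G'$ varied with $S$, then the step-by-step correspondence between $\TJ(\phi)$- and $\TJ(\phi')$-sequences would fail, and the reduction would no longer commute with $\TJ$-moves. Everything else is routine bookkeeping of parameters.
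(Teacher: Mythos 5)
Your proposal is correct and matches the paper's argument: the paper likewise extends \cref{pro:vc-mso2} by treating the free set variable $X$ as a new color (yielding a single $G'$ on the same vertex set with $|C|$ extra colors, so $\TJ$-moves transfer verbatim) and then invokes \cref{thm:msor1-nd} together with $\nd(G) \le 2^{\vc(G)} + \vc(G)$. You have simply spelled out the obliviousness and length-preservation details that the paper leaves implicit.
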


Unfortunately, such an extension of \cref{thm:msor1-nd} with its full generality is not possible under some reasonable assumption.
We use the following hardness result on \mso{2} \textsc{Model Checking} on complete graphs.
Given an \mso{2} sentence $\psi$ and a graph $G$,
\mso{2} \textsc{Model Checking} asks whether $G \models \psi$.
\begin{proposition}
[\cite{Lampis14}]
Unless $\mathrm{E} = \mathrm{NE}$,\footnote{%
Recall that $\mathrm{E} = \mathrm{DTIME}(2^{\bigO(n)})$ and $\mathrm{NE} = \mathrm{NTIME}(2^{\bigO(n)})$.
The same hardness result with a stronger assumption $\mathrm{EXP} \ne \mathrm{NEXP}$ was shown earlier by 
Courcelle, Makowsky, and Rotics~\cite{CourcelleMR00}.} 
\mso{2} \textsc{Model Checking} on $n$-vertex uncolored complete graphs
cannot be solved in time $\bigO(n^{f(|\phi|)})$ for any function~$f$.
\end{proposition}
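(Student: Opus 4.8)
The statement is quoted from~\cite{Lampis14}, so my plan is to reconstruct its proof by contraposition: assuming some algorithm $A$ decides $K_{n} \models \phi$ for uncolored complete graphs in time $\bigO(n^{f(|\phi|)})$, I would derive $\mathrm{E} = \mathrm{NE}$. Since $\mathrm{E} \subseteq \mathrm{NE}$ always holds, it suffices to place an arbitrary $B \in \mathrm{NE}$ into $\mathrm{E}$. The bridge is the standard exponential-padding bijection $1^{m} \leftrightarrow \mathrm{bin}(m)$: associate with $B$ the tally language $T_{B} = \{\, 1^{m} : \mathrm{bin}(m) \in B \,\} \subseteq \{1\}^{*}$. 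On an input $1^{m}$, whose length is exponential in $|\mathrm{bin}(m)|$, one can first write $\mathrm{bin}(m)$ and then run the $\mathrm{NE}$-machine for $B$, all in nondeterministic time polynomial in $m$; hence $T_{B} \in \mathrm{NP}$. Conversely, a deterministic polynomial-time decision procedure for $T_{B}$ yields, on input $\mathrm{bin}(m)$, a deterministic $2^{\bigO(|\mathrm{bin}(m)|)}$-time decision procedure for $B$, i.e.\ $B \in \mathrm{E}$. Thus it is enough to use $A$ to decide an arbitrary tally $\mathrm{NP}$ language in deterministic polynomial time.

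Fix a polynomial-time verifier $V$ witnessing $T \in \mathrm{NP}$ for a tally $T$. I would map the instance $1^{m}$ to a single complete graph $K_{N}$ with $N = \mathrm{poly}(m)$ together with one \emph{fixed} \mso{2} sentence $\phi_{V}$ whose length depends only on $V$, not on $m$. The sentence $\phi_{V}$ existentially quantifies an edge set forming a Hamiltonian path, which breaks the $S_{N}$-symmetry of $K_{N}$ and supplies a linear order (hence a successor relation and, via \mso{2}-definable reachability, the full order) on the vertices. Using constant-length tuples of vertices---which are \mso{2}-interpretable over the ordered clique and furnish an ordered domain of size $N^{\bigO(1)}$---it further quantifies $\bigO(1)$ edge sets encoding the guessed certificate and the whole polynomial-size computation tableau of $V$. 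The body of $\phi_{V}$ asserts that these objects describe a valid accepting run: a universal window check for local transition consistency, acceptance in the final configuration, and the correct initial configuration on a definable input region. Here the tally structure of $T$ is decisive: the input $1^{m}$ is homogeneous, so ``the input region is all ones, with the start state at its first cell'' is a uniform statement that requires no per-bit specification and---crucially---no mechanism for the fixed formula to read the value $m$ out of the otherwise featureless structure. By construction $K_{N} \models \phi_{V}$ if and only if $V$ accepts $1^{m}$, i.e.\ $1^{m} \in T$.

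Running $A$ on $(\phi_{V}, K_{N})$ then decides $1^{m} \in T$ in time $\bigO(N^{f(|\phi_{V}|)})$. Since $|\phi_{V}| = \bigO(1)$ the exponent $f(|\phi_{V}|)$ is a constant and, with $N = \mathrm{poly}(m)$, this is polynomial in $m$; hence $T \in \mathrm{P}$. As $T$ was an arbitrary tally language in $\mathrm{NP}$, every such language lies in $\mathrm{P}$; by the padding bijection of the first paragraph this yields $\mathrm{NE} \subseteq \mathrm{E}$, hence $\mathrm{E} = \mathrm{NE}$, contradicting the assumption $\mathrm{E} \neq \mathrm{NE}$ and establishing the contrapositive.

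The main obstacle is the construction of the constant-size sentence $\phi_{V}$, i.e.\ an \mso{2}-encoding of a polynomial-time computation on the completely featureless graph $K_{N}$. Two points carry the argument. First, because $K_{N}$ has the full symmetric group as its automorphism group, no vertex, order, or coordinate is available a priori; everything must be \emph{guessed} as an edge set (the Hamiltonian path) and its good behaviour asserted in \mso{2}, after which constant-length vertex tuples give an \mso{2}-interpretation of an ordered domain of size $N^{\bigO(1)}$ in which the entire tableau, the certificate, and the input region all live and are definable. Second, the sentence must remain independent of the instance; this is exactly what reducing from a \emph{tally} language buys us, since a homogeneous all-ones input removes any need to encode or extract the numeric value $m$ inside a fixed formula. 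Verifying that local transition consistency, the boundary conditions, the definability of the input region, and the interpretation of tuple-indexed cells are all genuinely expressible by one bounded-size \mso{2} sentence is the technical heart, and is where I would spend the bulk of the effort.
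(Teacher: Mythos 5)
First, note that the paper does not prove this proposition at all: it is imported verbatim from Lampis~\cite{Lampis14} (with the earlier, weaker version due to Courcelle, Makowsky, and Rotics~\cite{CourcelleMR00}), so the only meaningful comparison is with the cited source. Your high-level architecture matches the known proof strategy: both arguments are padding arguments in which the clique size $N$ is the sole information carrier, a linear order is existentially guessed as an edge set (Hamiltonian path), and a nondeterministic computation tableau is guessed and locally verified by a fixed \mso{2} sentence, so that a hypothetical $\bigO(N^{f(|\phi|)})$ model checker collapses $\mathrm{NE}$ into $\mathrm{E}$. Your detour through tally languages ($T_B=\{1^m : \mathrm{bin}(m)\in B\}$, with $T_B\in\mathrm{NP}$ and $T_B\in\mathrm{P}\Leftrightarrow B\in\mathrm{E}$) is a correct and standard repackaging of the same padding; Lampis instead encodes the binary input directly into $N$ and has the formula recover its bits via guessed binary-counter relations, which is the same idea in different clothing. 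So the skeleton of your reconstruction is sound.

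There is, however, one step that fails as written: the claim that ``constant-length tuples of vertices \ldots furnish an ordered domain of size $N^{\bigO(1)}$'' over which the tableau can be quantified. Encoding a tableau over the tuple domain requires quantifying \emph{sets of $k$-tuples}, i.e., $k$-ary relations on the vertex set, and \mso{2} supplies only unary relations (vertex sets) and binary relations (edge sets, orientable once the order is guessed). Monadic quantification over an interpreted tuple domain is non-monadic over the base structure, so this mechanism is simply not available. The standard repair, which your freedom in choosing $N=\mathrm{poly}(m)$ already permits, is to pad so that the verifier's time and space are both at most $N$; then the tableau is an $N\times N$ grid indexed by \emph{pairs}, and constantly many edge sets encode it. Relatedly, your remark that the tally input ``requires no per-bit specification'' is true but hides the residual issue of \emph{defining the input region}: with $N=m^c$, the set of the first $m$ positions is not directly definable, and one must either pad the tally language so the input occupies the whole tape, or have the simulated machine itself recover $m$ from the tape length (or use guessed counter relations and a guessed-and-verified arithmetic tableau). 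These are exactly the details you defer to as ``the technical heart,'' and they are standard, but the tuple-based shortcut you propose for them would not compile in \mso{2} and must be replaced as above.
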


\begin{theorem}
\label{thm:nd-msor2}
Unless $\mathrm{E} = \mathrm{NE}$, 
\MSOR{2} on $n$-vertex uncolored graphs of neighborhood diversity~$2$
cannot be solved in time $\bigO(n^{f(|\phi|)})$ for any function~$f$.
\end{theorem}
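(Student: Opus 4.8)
The plan is to reduce \mso{2} \textsc{Model Checking} on uncolored complete graphs to \MSOR{2} on graphs of neighborhood diversity~$2$, so that the lower bound of the preceding proposition transfers directly. Given an \mso{2} sentence $\psi$ and the complete graph $K_n$, I would construct a graph $G$ as the disjoint union of a clique $A$ on $n$ vertices and an independent set $B$ on $n$ vertices, with no edges between $A$ and $B$. Then $G[A] \cong K_n$, every edge of $G$ lies inside $A$, and the two twin classes of $G$ are exactly $A$ (true twins, all with closed neighborhood $A$) and $B$ (false twins, all isolated); hence $\nd(G) = 2$ and $G$ is uncolored. The sets $A$ and $B$ are definable without colors, since $A$ is the set of non-isolated vertices and $B$ the set of isolated vertices.

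The reconfiguration instance would use $S = A$ and $S' = B$, both of size $n$, together with the formula $\phi(X) := (X = A) \lor (X = B) \lor \psi^{(A)}$, where $X = A$ and $X = B$ abbreviate the first-order statements that $X$ is exactly the set of non-isolated (respectively isolated) vertices, and $\psi^{(A)}$ is the relativization of $\psi$ to $A$ (each vertex quantifier guarded by ``non-isolated,'' each set/edge quantifier restricted to subsets of $A$ and edges inside $A$). Since all edges of $G$ lie inside $A$ and the $B$-vertices are isolated, $G \models \psi^{(A)}$ if and only if $G[A] = K_n \models \psi$. The disjuncts $X = A$ and $X = B$ guarantee $G \models \phi(S)$ and $G \models \phi(S')$, so this is a valid \MSOR{2} instance, and $|\phi| = \bigO(|\psi|)$ because relativization inflates the formula only by a constant factor.

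For correctness I would argue the following equivalence. If $K_n \models \psi$, then $\psi^{(A)}$ holds in $G$, so $\phi(X)$ is satisfied by every vertex subset; hence one can transform $S = A$ into $S' = B$ by moving the $n$ tokens one at a time, every intermediate set being feasible, giving a yes-instance. Conversely, if $K_n \not\models \psi$, then feasibility reduces to $(X = A) \lor (X = B)$, so the only feasible size-$n$ sets are $A$ and $B$ themselves; as $|A \setminus B| = n \ge 2$, no single $\TJ(\phi)$-move connects them and there is no feasible intermediate set, giving a no-instance. Thus the constructed instance is a yes-instance of \MSOR{2} exactly when $K_n \models \psi$. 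Finally, $G$ has $2n$ vertices, so an algorithm solving \MSOR{2} on neighborhood-diversity-$2$ graphs in time $\bigO(N^{f(|\phi|)})$ would decide $K_n \models \psi$ in time $\bigO((2n)^{f(\bigO(|\psi|))}) = \bigO(n^{g(|\psi|)})$ for some function $g$, contradicting the proposition unless $\mathrm{E} = \mathrm{NE}$.

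The main obstacle I anticipate is ruling out any ``shortcut'' transformation when $\psi$ fails. Because $A$ induces a complete graph, any two size-$n$ subsets of $V(G)$ having the same intersection sizes with $A$ and with $B$ are related by an automorphism of $G$, so no \mso{2} formula can single out particular subsets; one therefore cannot gate reconfiguration by distinguishing individual sets and must instead use $\psi^{(A)}$ as a single global switch that either makes every set feasible or leaves only the two extreme configurations $A$ and $B$ feasible. Once the gadget is arranged this way, the remaining work—verifying that the relativization $\psi^{(A)}$ faithfully encodes $K_n \models \psi$ and that $|\phi|$ stays linear in $|\psi|$—is routine.
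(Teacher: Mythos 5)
Your reduction is correct, but it is a genuinely different construction from the paper's. The paper stays inside a single connected, nearly complete graph: it takes $K_{n+4}$, deletes a triangle on three vertices $\{x,y,z\}$ (these three form one twin class, the rest the other), sets $S = V \setminus \{y,z\}$ and $S' = V\setminus\{a,b\}$ for two ordinary vertices $a,b$, and uses a formula of the form ``if $X$ is an almost-clique, then $\psi$ holds on $X$ minus its unique non-adjacent pair.'' Both endpoints fail the almost-clique guard, so they are vacuously feasible, and any $\TJ(\phi)$-sequence is forced through a configuration containing exactly two of $\{x,y,z\}$, which \emph{is} an almost-clique and therefore triggers the check of $\psi$ on an $n$-clique. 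Your construction instead uses a disconnected graph (clique $A$ plus independent set $B$), whitelists the two endpoints by first-order definitions ($X$ equals the set of non-isolated, resp.\ isolated, vertices), and wires $\psi$ in as an $X$-independent global disjunct $\psi^{(A)}$, so that either every set is feasible or only $A$ and $B$ are; since $|A \setminus B| = n \ge 2$, the latter case admits no move. Your argument is simpler and, as a bonus, your graph has twin cover number $0$ (it is a disjoint union of twin-class cliques), which would strengthen the paper's Corollary~\ref{cor:tc-msor2} from twin cover~$3$ to~$0$; the paper's version has the feature that the truth of $\phi(X)$ genuinely depends on the current token configuration and that the bottleneck gadget lives in a connected graph, but neither is needed for the theorem as stated. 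The only points worth making explicit in a write-up are the standard relativization lemma for \mso{2} (that $G \models \psi^{(A)}$ iff $G[A] = K_n \models \psi$, with only linear blow-up in $|\psi|$) and the trivial assumption $n \ge 2$.
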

\begin{proof}
Let $\psi$ be an \mso{2}-sentence and $K_{n}$ be an uncolored complete graph of $n$ vertices.
We construct an instance $\langle \phi, G, S, S' \rangle$ of \MSOR{2} as follows.
The graph $G$ is obtained from $K_{n+4}$ by removing the three edges in a triangle formed by $x,y,z \in V(K_{n+4})$.
Let $a$ and $b$ be arbitrary two vertices in $V(G) \setminus \{x,y,z\}$.
We set $S = V(G) \setminus \{y,z\}$ and $S' = V(G) \setminus \{a,b\}$.
To define the formula $\phi$, we first modify the sentence $\psi$ by 
adding a new free vertex-set variable $X$,
asking all vertex variables and vertex-set variables in $\psi$
to be elements and subsets of $X$, respectively, and 
asking all edge variables and edge-set variables in $\psi$
to be elements and subsets of $E(G[X])$, respectively.
We call the obtained \mso{2} formula $\psi'$.
Note that \cref{prop:isomorphic}
implies that for every $n$-vertex clique $X$ of $G$,
$\psi'(X)$ is true if and only if $K_{n} \models \psi$.
We also need the following formula expressing that 
the vertex set $G[X]$ has all but one possible edges (i.e., $X$ is an \emph{almost-clique}):
\[
  \textsf{almost-clique}(X) = \dot{\exists} u, v \in X \colon
  \lnot E(u,v) \land 
  (\dot{\forall} p, q \in X \colon \lnot E(p,q) \Rightarrow \{p,q\} = \{u,v\}),
\]
where $\{p,q\} = \{u,v\}$ is a syntax sugar for $(p=u \land q = v) \lor (p=v \land q = u)$.
Now we define $\phi$ as follows:
\[
  \phi(X) = \textsf{almost-clique}(X) \Rightarrow 
  \dot{\exists} u,v \in X \colon \lnot E(u,v) \land \psi'(X \setminus \{u,v\}),
\]
where $\psi'(X \setminus \{u,v\})$
can be expressed as
$\exists Y \subseteq X \colon \psi'(Y) \land (\forall w \in X \colon \lnot w \in Y \rightarrow (w = u) \lor (w = v))$.
Note that $G \models \phi(S)$ and $G \models \phi(S')$
since $S$ is a clique and $S'$ misses three edges.

We show that $K_{n} \models \psi$
if and only if $\langle \phi, G, S, S' \rangle$ is a yes-instance of \MSOR{2}.
This implies the theorem as $\nd(G) = 2$ ($\{x,y,z\}$ and $V(G) \setminus \{x,y,z\}$ are the twin classes).

To show the only-if direction, assume that $K_{n} \models \psi$.
We set $S'' = S-a+y$ ($= S' + b -z$).
Observe that $S''$ is an almost-clique with the missing edge $\{x,y\}$
and $S'' \setminus \{x,y\}$ is a clique of size $n$.
Hence, $G \models \phi(S'')$ holds.
Therefore, $\langle S, S'', S' \rangle$ is a $\TJ(\phi)$-sequence.

To show the if direction, assume that there is a $\TJ(\phi)$-sequence from $S$ to $S'$.
Since $|S \cap \{x,y,z\}| = 1$ and $|S' \cap \{x,y,z\}| = 3$,
the $\TJ(\phi)$-sequence contains a set $S''$ with $|S'' \cap \{x,y,z\}| = 2$.
Observe that the set $S''$ is an almost-clique and $S'' \setminus \{x,y,z\}$ is a clique of size $n$.
Since $G \models \phi(S'')$, we have $G \models \psi'(S'' \setminus \{x,y,z\})$,
and thus $K_{n} \models \psi$.
\end{proof}

The proof of \cref{thm:nd-msor2} also shows that \MSOR{2} is hard for graphs of twin cover number~$3$.
A vertex set $C \subseteq V$ of a graph $G = (V,E)$ is a \emph{twin cover}~\cite{Ganian11}
if $V \setminus C$ is a disjoint union of cliques such that each clique is a set of twins in $G$.
The \emph{twin cover number} of a graph is the minimum size of a twin cover in the graph.
The graph $G$ in the proof has a twin cover $\{x,y,z\}$,
and thus \cref{thm:nd-msor2} implies the following corollary.
\begin{corollary}
\label{cor:tc-msor2}
Unless $\mathrm{E} = \mathrm{NE}$, 
\MSOR{2} on $n$-vertex uncolored graphs of twin cover number~$3$
cannot be solved in time $\bigO(n^{f(|\phi|)})$ for any function $f$.
\end{corollary}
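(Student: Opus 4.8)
The plan is to reuse, without modification, the reduction built in the proof of \cref{thm:nd-msor2} and simply to re-examine the graph it produces through the lens of twin cover rather than neighborhood diversity. Recall that from an \mso{2}-sentence $\psi$ and a clique $K_{n}$ that reduction constructs $G$ by deleting the three triangle edges on $x,y,z$ from $K_{n+4}$, together with a formula $\phi$ and sets $S,S'$, and establishes that $K_{n} \models \psi$ if and only if $\langle \phi, G, S, S' \rangle$ is a yes-instance of \MSOR{2}. Since this equivalence and the bound $|\phi| = g(|\psi|)$ for some function $g$ are already proved there, the only thing left to do is to bound the twin cover number of $G$.

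The key step is therefore to check that $C = \{x,y,z\}$ is a twin cover of $G$. I would argue as follows: because the only edges removed from $K_{n+4}$ lie inside the triangle on $x,y,z$, no vertex $w \in V(G) \setminus \{x,y,z\}$ loses any incident edge, so $N[w] = V(G)$ for every such $w$. Hence any two vertices of $V(G) \setminus \{x,y,z\}$ have equal closed neighborhoods and are adjacent, i.e.\ $V(G) \setminus C$ is a single clique of pairwise twins, which is exactly what is required for $C$ to be a twin cover. This yields twin cover number at most three; in fact it is exactly three, since for each $v \in \{x,y,z\}$ the edges joining $v$ to the clique are non-twin edges, and covering all of them without placing $v$ in the cover would force the entire clique into the cover, so $v$ must lie in every bounded-size twin cover.

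Granting this, the corollary is immediate: the reduction sends \mso{2} \textsc{Model Checking} on $n$-vertex uncolored complete graphs to \MSOR{2} on $(n+4)$-vertex graphs of twin cover number three while keeping $|\phi|$ a function of $|\psi|$ only, so an $\bigO(n^{f(|\phi|)})$-time algorithm for the latter would yield one for the former, contradicting the assumed lower bound unless $\mathrm{E} = \mathrm{NE}$. I do not expect any genuine obstacle, as this is a purely structural observation about an already-analyzed reduction; the only point deserving care is verifying that deleting the triangle on $x,y,z$ leaves the closed neighborhoods of all other vertices untouched, so that they genuinely form a clique of twins.
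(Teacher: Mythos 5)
Your proposal is correct and matches the paper's own argument: the paper likewise derives the corollary by observing that the graph $G$ constructed in the proof of \cref{thm:nd-msor2} has $\{x,y,z\}$ as a twin cover, since every vertex outside $\{x,y,z\}$ retains closed neighborhood $V(G)$ and so $V(G)\setminus\{x,y,z\}$ is a clique of twins. Your additional verification that the twin cover number is exactly three is a small bonus the paper leaves implicit.
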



\section{Fixed-parameter algorithm parameterized by the solution size and treedepth}
\label{sec:fpt}

In this section, we show that \MSOR{2} is fixed-parameter tractable
when parameterized simultaneously by treedepth, the length of the \mso{2} formula, and the size of input sets $S$ and $S'$.

\begin{theorem}
\label{thm:msor2-td+k}
\MSOR{2} parameterized by $\td + k + |\phi|$ is fixed-parameter tractable, where $k$ is the size of input sets.
Furthermore, for a yes instance of \MSOR{2}, finding a shortest $\TJ$-sequence is fixed-parameter tractable with the same parameter.
\end{theorem}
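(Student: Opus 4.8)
The plan is to mirror the strategy of \cref{sec:shortestND} used for neighborhood diversity, replacing the flat twin-type structure by the hierarchical subtree-type structure of a treedepth decomposition. First I would fix a treedepth decomposition $F$ of depth $\td(G)$ and, exactly as in \cref{sec:shortestND}, add $S$ and $S'$ as colors so that the initial and target configurations are rigidly identified by their descriptions. The role played by \cref{prop:Lampis12} there will now be played by its treedepth analogue \cref{prop:many-sets}: setting a threshold via $q = \numq(\phi)$, it guarantees a bound $N$ such that if a node of $F$ carries more than $N$ pairwise \mso{2}-$q$-equivalent pendant subtrees, deleting one of them preserves the truth of every rank-$q$ \mso{2} sentence. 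This composition/pruning ingredient is what lets us work with a bounded set of types.

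Next I would define the \emph{shape} of a size-$k$ set $X$ as a token-annotated, depth-capped type tree: recursively, the type of a subtree rooted at $v$ records the color of $v$, whether $v \in X$, and the multiset of types of its child-subtrees, where the multiplicity of each \emph{unmarked} type is capped at $N$ (keeping a little headroom above the threshold, in the same spirit as the $\bottop$ values are used for \nd). The crucial counting observation is that the token-carrying part of such a tree is a subtree with at most $k$ leaves and depth at most $\td(G)$, hence at most $k\cdot\td(G)$ branching nodes, to each of which a capped multiset over a bounded type alphabet is attached; consequently the number of shapes is bounded by a computable function of $\td(G)$, $k$, $\numq(\phi)$, and the number of colors. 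This is precisely where $k$ must enter the parameter, in contrast to \cref{thm:msor1-nd}, where the flat structure needs no bound on $k$. The composition theorem then yields the exact analogue of \cref{lem:nd-shape}: two size-$k$ sets of the same shape are simultaneously feasible or infeasible, since after pruning to the capped representatives the two annotated graphs become isomorphic and \cref{prop:isomorphic} applies.

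With shapes in hand I would reproduce the three reconfiguration lemmas. The analogue of \cref{lem:nd-sameshape} states that two feasible sets of the same shape are joined by a $\TJ(\phi)$-sequence staying within that shape: equal shapes mean the marked subtrees agree up to a type-preserving relabelling, so a single token can be jumped from an occupied slot to an unoccupied type-equivalent slot while preserving the shape, and iterating this moves $X$ to $Y$ in $|X \setminus Y|$ steps; the headroom kept above $N$ guarantees that an unoccupied equivalent slot always exists. I would then define a shape adjacency capturing a single token jump that genuinely changes the shape, via a case analysis on which capped coordinates change (the direct analogue of the A1--A3/B1--B3 bookkeeping), build the shape graph, and prove, exactly as in \cref{lem:nd-shape-graph}, that a $\TJ(\phi)$-sequence from $S$ to $S'$ exists if and only if their shapes lie in the same connected component. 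For the token-sliding variant the same machinery applies once adjacency is restricted to jumps along edges of $G$; since edges respect the ancestor/descendant relation of $F$ and equivalent subtrees are symmetric, sliding between type-equivalent slots remains available.

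Finally, for the shortest-sequence claim I would follow \cref{lem:nd-path} and \cref{lem:nd-shortest}: a shortest sequence may be assumed to induce a simple path in the shape graph, so it suffices to optimize over the boundedly many shape paths and, for each, to compute the minimum number of token jumps realizing it, which reduces to a min-cost flow/transshipment problem laid out on the hierarchical type structure rather than a single layer of twin classes. I expect the main obstacle to be exactly this hierarchical bookkeeping: proving the same-shape interchange lemma and the shape-adjacency characterization rigorously when tokens sit deep in the decomposition and several tokens may have to be rerouted through type-equivalent subtrees simultaneously, while checking that the capped type alphabet carries enough headroom for every required intermediate move. Phrasing \cref{prop:many-sets} uniformly enough to drive all of these steps is the other delicate point, but once it is available the reduction to bounded reachability and to min-cost flow proceeds along the lines already developed for neighborhood diversity.
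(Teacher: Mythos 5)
Your proposal diverges from the paper's proof in a way that exposes a real gap. The paper does \emph{not} generalize the shape machinery of \cref{sec:shortestND} to treedepth. Instead it proves a distance-preserving kernelization: \cref{lem:only-k-sets} shows (by a swapping/isomorphism argument) that a shortest $\TJ(\phi)$-sequence need only touch $k$ out of any larger family of disjoint same-type subtrees avoiding $S \cup S'$; combining this with \cref{prop:many-sets} gives \cref{lem:removing-a-set}, which lets one \emph{delete} a redundant subtree from the graph without changing $\dist_{\phi,G}(S,S')$; applying this bottom-up along the treedepth decomposition (\cref{lem:nd-kernel}) shrinks $G$ to a graph whose size is bounded by a function of $\td + k + |\phi|$, after which the bounded-size instance is solved exhaustively. (The reduction from \mso{2} to \mso{1} is handled up front by edge subdivision, which raises treedepth by at most one.) Your bottom-up pruning via \cref{prop:many-sets} and your observation that $k$ must enter the parameter are both on target, but you use the equivalence of subtrees to define shapes of configurations rather than to shrink the instance.

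The concrete gap is your analogue of \cref{lem:nd-sameshape}. In the neighborhood-diversity setting a type is a flat set of pairwise interchangeable vertices, so swapping one token between two same-type vertices preserves the signature and every intermediate set keeps the same shape. In the hierarchical setting the relevant type of a token is the annotated type of the entire subtree containing it \emph{together with the other tokens in that subtree}. If a subtree $T_1$ carries three tokens under $X$ and an equivalent but distinct subtree $T_2$ carries three tokens under $Y$, then $X$ and $Y$ have the same shape, yet transferring the tokens one jump at a time passes through configurations in which the tokens are split between $T_1$ and $T_2$; these have a \emph{different} multiset of annotated subtree types, hence a different shape, and nothing guarantees their feasibility. (If instead you coarsen the shape to vertex-level types so that such moves are shape-preserving, then the analogue of \cref{lem:nd-shape} fails, since $\phi$ can distinguish how tokens are grouped into subtrees.) This is not a technicality: the PSPACE-hardness construction of \cref{thm:pspcae-c_td=3} on forests of depth~$3$ is built precisely on formulas that force equivalent subtrees to be emptied and filled atomically, so no same-shape interchange lemma can hold in general. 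Without that lemma, connectivity in your shape graph does not characterize reachability, and the subsequent reduction to min-cost flow has no foundation. The kernelization route sidesteps the issue entirely because, once the graph has bounded size, one can search the full reconfiguration graph directly.
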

As we show in \cref{sec:pspace-c},
having the size of input sets is necessary since otherwise it is PSPACE-complete.

It is known (see e.g.,~\cite{CourcelleE12}) that given a colored graph $G$ and an \mso{2} sentence $\phi$,
one can compute in polynomial time a colored graph $G'$ and an \mso{1} sentence $\phi'$
such that
\begin{itemize}
  \item $G \models \phi$ if and only if $G' \models \phi'$;
  \item $G'$ is obtained from $G$ by subdividing each edge,
  and consider the set of new vertices introduced by the subdivisions as a new color;
  \item the length of $\phi'$ is bounded by a function of $|\phi|$.
\end{itemize}
Observe that $\td(G') \le \td(G) +1$.\footnote{%
Starting with a treedepth decomposition $F$ of $G$ with depth at most $d$,
we construct a treedepth decomposition $F'$ of $G'$ with depth at most $d+1$
by adding the vertex corresponding to each edge $\{u,v\} \in E(G)$
as a leaf attached to one of $u$ and $v$ that is a descendant of the other.}
Thus, to prove \cref{thm:msor2-td+k}, it suffices to 
show that \MSOR{1} is fixed-parameter tractable parameterized by the claimed parameter.

Now we generalize the type of a vertex used in \cref{sec:vc-nd} to the type of a vertex set.
For a colored graph $G = (V,E,\mathcal{C})$ and vertex sets $X, X' \subseteq V$,
we say that $X$ and $X'$ have the same \emph{type} if there is an isomorphism $\eta$ from $G$ to itself
such that 
$\eta(X) = X'$, $\eta(X') = X$, and
$\eta(v) = v$ for every $v \notin X \cup X'$.
Note that from the definition of isomorphisms between colored graphs,
$\mathcal{C}(v) = \mathcal{C}(\eta(v))$ holds for every $v \in V$.
Note also that singletons $\{x\}, \{x'\} \subseteq V$ have the same type
if and only if the vertices $x$ and $x'$ have the same type.

The next lemma says that if there are many disjoint vertex sets of the same type,
then we can avoid most of them when finding $\TJ(\phi)$-sequences.
\begin{lemma}
\label{lem:only-k-sets}
Let $\langle \phi, G, S, S' \rangle$ be a yes-instance of \MSOR{1} with $|S| = |S'| = k$.
Let $C_{1},\dots,C_{t}$ be a family of disjoint vertex sets with the same type
not intersecting $S \cup S'$.
If $t > k$, then for every $I \subseteq [t]$ with $|I| = k$,
there is a shortest $\TJ(\phi)$-sequence $S_{0},\dots,S_{\ell}$ from $S_{0} = S$ to $S_{\ell} = S'$
such that 
$C_{i} \cap \bigcup_{0 \le j \le \ell} S_{j} \ne \emptyset$ only if $i \in I$.
\end{lemma}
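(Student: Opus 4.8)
The plan is to prove \cref{lem:only-k-sets} by a relabeling (exchange) argument driven by the automorphisms that exist because the sets $C_1,\dots,C_t$ all share a common type. The mechanism I would use is that, for any two indices $a,b$, the assumption that $C_a$ and $C_b$ have the same type yields an automorphism of the colored graph $G$ that swaps $C_a$ with $C_b$ and fixes every vertex outside $C_a\cup C_b$. Composing such swaps, every permutation $\pi$ of $[t]$ is realized by an automorphism $\rho_\pi$ of $G$ that permutes $C_1,\dots,C_t$ according to $\pi$ and fixes every vertex outside $\bigcup_i C_i$; in particular $\rho_\pi$ fixes $S$, $S'$, and the entire graph lying outside the family, since these avoid the $C_i$. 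By \cref{prop:isomorphic} (applied to $\rho_\pi$ as an isomorphism of $G$ onto itself), applying $\rho_\pi$ to a feasible set again gives a feasible set, so these automorphisms map $\TJ(\phi)$-sequences to $\TJ(\phi)$-sequences.

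Starting from an arbitrary shortest $\TJ(\phi)$-sequence $S_0,\dots,S_\ell$, I would construct a new sequence $\hat S_j=\rho_{\pi_j}(S_j)$ by maintaining, as $j$ increases, an injection $\pi_j$ from the set of \emph{occupied} indices (those $i$ with $S_j\cap C_i\neq\emptyset$) into the fixed target set $I$. Since the $C_i$ are pairwise disjoint and $|S_j|=k$, at most $k$ sets are occupied at any moment, so such an injection can be maintained. Concretely, when the $j$-th move sends a token from $u$ to $v$, I keep $\pi_{j+1}=\pi_j$ on every index carrying a token that survives the move, and I only (re)assign a slot of $I$ when $v$ enters a set $C_{i_v}$ that was previously empty. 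At that instant the other $k-1$ tokens of $S_{j+1}$ occupy at most $k-1$ sets, whose images under $\pi_j$ use at most $k-1$ of the $k$ elements of $I$, so a free slot always exists (and one may simply reuse the slot vacated when $u$ leaves a set that thereby becomes empty).

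Because $\rho_{\pi_{j+1}}$ and $\rho_{\pi_j}$ agree on all surviving tokens and differ only around the single moving token, each step $\hat S_j\to\hat S_{j+1}$ changes at most one token; hence it is either a genuine $\TJ(\phi)$-move or a no-op with $\hat S_{j+1}=\hat S_j$, and feasibility of every $\hat S_j$ is guaranteed by \cref{prop:isomorphic}. Deleting the consecutive duplicates produced by any no-ops yields a $\TJ(\phi)$-sequence from $\hat S_0=\rho_{\pi_0}(S)=S$ to $\hat S_\ell=\rho_{\pi_\ell}(S')=S'$ of length at most $\ell$; since $\ell$ was minimum, this length equals $\ell$ and the sequence is again shortest. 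By construction each $\hat S_j$ meets $C_i$ only for $i$ in the image of $\pi_j$, which lies in $I$, giving exactly the claimed confinement.

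The step I expect to require the most care is the verification that the relabeled sequence is a legitimate $\TJ(\phi)$-sequence, i.e.\ that reassigning slots never forces two tokens to change at once. The delicate situation is when the freed slot must be reused and the relabeled move collapses to $\hat S_{j+1}=\hat S_j$; this corresponds precisely to shuffling a token between two interchangeable occurrences of the same type, and the point is that such a collapse is harmless because it only shortens the sequence after deduplication, consistently with minimality. Making the per-step automorphisms fully precise — in particular fixing coherent internal bijections $C_i\to C_{\pi_j(i)}$ so that surviving tokens keep their images from one step to the next — is the main technical obstacle, although conceptually it amounts to nothing more than careful slot management.
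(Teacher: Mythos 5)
Your proposal is correct and matches the paper's proof in its essential mechanism: both exploit the automorphisms that swap two same-type sets $C_a,C_b$ while fixing every vertex outside them, invoke \cref{prop:isomorphic} to preserve feasibility and length, and use $|S_j|=k<t$ to guarantee a free index in $I$. The only difference is bookkeeping — the paper applies one swap at a time via an extremal argument (choosing a shortest sequence maximizing the first step that touches a forbidden $C_p$), whereas you maintain a running injection of occupied indices into $I$ — which is a reorganization of the same exchange argument rather than a different route.
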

\begin{proof}
Without loss of generality, assume that $I = [k]$.
Let $S_{0},\dots,S_{\ell}$ be a shortest $\TJ(\phi)$-sequence from $S_{0} = S$ to $S_{\ell} = S'$.
If this sequence has no intersection with sets $C_{k+1}, \dots, C_{t}$, we are done.
Assume that $r$ is the first index such that $C_{p} \cap S_{r} \ne \emptyset$ for some $p \notin [k]$.
Note that $p$ is unique as a $\TJ(\phi)$-move adds only one new vertex.
We assume that the shortest $\TJ(\phi)$-sequence is chosen so that the index $r$ is maximized.
Observe that $r \ge 1$ as $S_{0}$ intersects no $C_{i}$.

Now we construct a new $\TJ(\phi)$-sequence $S_{0},\dots,S_{r-1}, S_{r}', \dots, S'_{\ell}$ from $S_{0} = S$ to $S'_{\ell} = S'$.
Since $|S_{r}| = k$, there exists an index $q \in [k]$ such that $C_{q} \cap S_{r} = \emptyset$.
We name the vertices in $C_{p}$ and $C_{q}$ as $C_{p} = \{u_{1},\dots,u_{c}\}$ and $C_{q} = \{v_{1},\dots,v_{c}\}$
so that there is an isomorphism $f$ from $G$ to itself
that maps $u_{i}$ to $v_{i}$ for $i \in [c]$, $v_{i}$ to $u_{i}$ for $i \in [c]$, and the other vertices to themselves.
This is possible since $C_{p}$ and $C_{q}$ have the same type. 
For $r \le j \le \ell$, we define $S'_{j}$ by \emph{swapping} $C_{p}$ and $C_{q}$ as follows:
\[
 S'_{j} = (S_{j} \setminus (C_{q} \cup C_{p})) \cup \{u_{i} \mid v_{i} \in S_{j}\} \cup \{v_{i} \mid u_{i} \in S_{j}\}.
\]
Note that the assumption $S_{\ell} \cap (C_{q} \cup C_{p}) = \emptyset$ implies that $S'_{\ell} = S_{\ell}$.
By the assumption $G \models \phi(S_{j})$, the existence of $f$, and \cref{prop:isomorphic},
it holds that $G \models \phi(S'_{j})$ for $r \le j \le \ell$.
Thus the sequence $S'_{r},\dots,S'_{\ell}$ is a $\TJ(\phi)$-sequence.

We now show that $|S_{r-1} \setminus S'_{r}| = |S'_{r} \setminus S_{r-1}| = 1$.
If $|S_{r-1} \cap C_{q}| = 1$, then $S_{r}$ is the set obtained from $S_{r-1}$ by removing the unique element in $C_{q}$
and adding some element in $C_{p}$ for the first time. In this case, $S_{r-1} = S'_{r}$ holds,
and thus $S_{0},\dots,S_{r-1}, S'_{r+1}, \dots, S'_{\ell}$ is a $\TJ(\phi)$-sequence from $S$ to $S'$.
This contradicts the assumption that $S_{0},\dots,S_{\ell}$ is a shortest one.
Since $S_{r} \cap C_{q} = \emptyset$, we have $S_{r-1} \cap C_{q} = \emptyset$.
Let $S_{r-1} \setminus S_{r} = \{x\}$ and $S_{r} \setminus S_{r-1} = \{u_{i}\}$.
Since $x \notin C_{p} \cup C_{q}$,
we have that $S_{r-1} \setminus S'_{r} = \{x\}$ and $S'_{r} \setminus S_{r-1} = \{v_{i}\}$ as desired.

The discussions so far show that $S_{0},\dots,S_{r-1}, S'_{r}, \dots, S'_{\ell}$ is a (shortest) $\TJ(\phi)$-sequence from $S$ to $S'$.
However, the first index $r'$ (if any exists) such that $S'_{r'} \cap C_{i}$ for some $i \notin [k]$ is larger than $r$
as $C_{i} \cap S'_{r} = \emptyset$ for every $i \notin [k]$.
This contradicts the assumption on $r$.
\end{proof}

Next we further argue that if there are a much larger number of disjoint vertex sets of the same type, 
then we can safely remove some of them. 
Note that this claim is stronger than \cref{lem:only-k-sets} in some sense.
Since the formula $\phi(X)$ may depend on the whole structure of $G$ (i.e., not only on $G[X]$),
``not using it in a sequence'' and ``removing it from the graph'' are different.

We need the following proposition, which is a generalization of \cref{prop:Lampis12}.
\begin{proposition}
[\cite{LampisM21}]
\label{prop:many-sets}
Let $G$ be a colored graph and $\phi$ be an \mso{1} formula with one free set variable.
Assume that $G$ contains $t > 2^{p \cdot \numq(\phi)}$
disjoint size-$p$ vertex sets with the same type.
Let $G'$ be the graph obtained from $G$ by removing one of the $t$ sets.
Then, for every subset $X \subseteq V$ disjoint from the $t$ sets,
$G' \models \phi(X)$ if and only if $G \models \phi(X)$.
\end{proposition}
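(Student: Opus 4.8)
The plan is to prove the statement via Ehrenfeucht--Fra\"iss\'e (EF) games for \mso{1}, generalizing the single-vertex argument behind \cref{prop:Lampis12}. First I would eliminate the free variable: since $X$ is disjoint from the $t$ copies, I treat $X$ as a new color, turning $\phi(X)$ into an \mso{1} sentence $\phi_{X}$ with $\numq(\phi_{X}) = \numq(\phi)$; crucially, every vertex of every copy lies outside $X$, so the $t$ copies remain of the same type after recoloring and carry identical colorings in $G$ and in $G'$. It then suffices to show $G \models \phi_{X}$ if and only if $G' \models \phi_{X}$, where $G'$ is $G$ with one copy deleted.

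Write $\numq(\phi) = 2^{q_{s}} q_{v}$ with $q_{s}$ and $q_{v}$ the numbers of set- and vertex-quantifiers. I would invoke the standard fact that if Duplicator wins the \mso{1} EF game with $q_{s}$ set-rounds and $q_{v}$ point-rounds on $G$ versus $G'$, then $G$ and $G'$ agree on every \mso{1} sentence using at most $q_{s}$ set quantifiers and $q_{v}$ vertex quantifiers, in particular on $\phi_{X}$. So the entire task reduces to exhibiting a winning strategy for Duplicator.

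For the strategy, Duplicator plays the identity on all vertices outside the copies and exploits the homogeneity of the copies. The key bookkeeping is the \emph{signature} of a copy: after the set-rounds have fixed sets $A_{1},\dots,A_{i}$, each copy is classified by the membership pattern of its $p$ vertices across $A_{1},\dots,A_{i}$, giving at most $(2^{i})^{p} \le 2^{p q_{s}}$ signature classes. Since the copies are pairwise interchangeable by automorphisms (\cref{prop:isomorphic}), only the \emph{number} of untouched copies in each signature class matters. Duplicator maintains the invariant that, within each class, the counts of unpebbled copies on the two sides are either equal or both exceed the number of remaining point-rounds; a point-round consumes at most one copy and a set-round refines each class into at most $2^{p}$ subclasses, so the invariant propagates. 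Deleting one copy from $G$ perturbs a single class count by one, which is absorbed as long as every class stays large; this is exactly what $t > 2^{p \cdot \numq(\phi)}$ guarantees (indeed, the threshold the recursion actually needs is roughly $2^{p q_{s}}(q_{v}+1)$, which is comfortably dominated by $2^{p\,\numq(\phi)}$, consistent with the fact that for $p=1$ the resulting bound is deliberately weaker than the optimal one in \cref{prop:Lampis12}).

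The main obstacle is the set-quantifier rounds: a single set move may select an arbitrary subset of the copies' vertices, so Duplicator cannot simply mirror Spoiler's choice across the missing copy. The resolution is precisely the signature abstraction above, which reduces an adversarial set choice to matching per-class counts up to a threshold; verifying that this matching can always be realized, and that the resulting correspondence remains a partial isomorphism after the interleaved point-rounds, is the technical heart and is where the explicit threshold on $t$ is consumed. I would structure the write-up as an induction on $q_{s}+q_{v}$ carrying the count invariant, handling vertex-rounds as in \cref{prop:Lampis12} and set-rounds by the refinement step.
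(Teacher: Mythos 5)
The paper does not actually prove \cref{prop:many-sets}: it is imported from \cite{LampisM21}, and the only argument the paper supplies is the one-sentence remark that the free set variable can be absorbed as a new color because $X$ is disjoint from the $t$ copies. Your first paragraph reproduces exactly that reduction (including the correct observation that the recoloring preserves the common type of the copies, since the swapping automorphisms fix everything outside the two copies being exchanged). Everything after that is a from-scratch reconstruction of the cited result, which the paper never writes down. Your reconstruction follows the natural route --- an Ehrenfeucht--Fra\"iss\'e game in which Duplicator plays the identity off the copies and tracks, per membership-signature class, only the multiset of untouched copies --- and this is in the same spirit as the single-vertex argument behind \cref{prop:Lampis12}, so the overall plan is sound. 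What your version buys is self-containedness and an explanation of where the generous threshold $2^{p\cdot\numq(\phi)}$ comes from; what it costs is that you must supply the standard-but-not-quite-off-the-shelf fact that the EF game with separately budgeted set- and point-rounds (interleaved in an order chosen by Spoiler) captures agreement on sentences counted by $\numq$.

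One concrete imprecision: the invariant as you state it (``within each class, the counts of unpebbled copies on the two sides are either equal or both exceed the number of remaining point-rounds'') does not propagate through a set-round. If a class has counts $N$ and $N-1$ with both exceeding $r$, Spoiler's set choice may split it into $2^{p}$ subclasses each of size at most $r+1$ on the $G$ side, and then Duplicator has nowhere to place the deficit of one while keeping both subclass counts above $r$. The invariant must be level-dependent: with $s$ set-rounds and $r$ point-rounds remaining, unequal classes need both counts above roughly $2^{ps}(r+1)$, which is what makes the initial requirement $2^{pq_{s}}(q_{v}+1)$ (and a fortiori $2^{p\cdot\numq(\phi)}$) suffice. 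Your own back-of-envelope threshold shows you know this, but the written invariant should be the graded one, otherwise the ``so the invariant propagates'' step is false as stated. With that repair the argument goes through.
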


The proposition above was originally stated for \mso{1} sentences in \cite{LampisM21}
but can be easily modified to this form by considering the free set variable as a new color.

Now we can prove the key lemma.
\begin{lemma}
\label{lem:removing-a-set}
Let $\langle \phi, G, S, S' \rangle$ be an instance of \MSOR{1} with $|S| = |S'| = k$.
Let $C_{1},\dots,C_{t}$ be a family of disjoint size-$p$ vertex sets with the same type
not intersecting $S \cup S'$.
If $t > k + 2^{p \cdot \numq(\phi)}$, then for every $C \in \{C_{1},\dots,C_{t}\}$,
$\dist_{\phi,G}(S,S') = \dist_{\phi,G-C}(S,S')$.
\end{lemma}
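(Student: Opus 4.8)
The plan is to prove the two inequalities $\dist_{\phi,G}(S,S') \le \dist_{\phi,G-C}(S,S')$ and $\dist_{\phi,G-C}(S,S') \le \dist_{\phi,G}(S,S')$ separately, reducing each direction to the two tools already established: \cref{lem:only-k-sets}, which lets us confine a shortest sequence to a bounded sub-family of the $C_i$'s, and \cref{prop:many-sets}, which certifies that deleting one set of the family does not change the truth value of $\phi$ on any set disjoint from the family. Throughout I would fix $C = C_\tau$ for some $\tau \in [t]$ and note that the family $C_1,\dots,C_t$ remains a family of disjoint size-$p$ sets of the same type, disjoint from $S \cup S'$, both in $G$ and after deletions, so both propositions apply to whatever sub-family survives.

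For the direction $\dist_{\phi,G-C}(S,S') \le \dist_{\phi,G}(S,S')$, I would start from a shortest $\TJ(\phi)$-sequence in $G$. Because $t > k + 2^{p\cdot\numq(\phi)} > k$, \cref{lem:only-k-sets} (applied with the index set $I$ chosen to be any $k$ indices avoiding $\tau$, which is possible since $t-1 \ge k$) yields a shortest sequence $S_0,\dots,S_\ell$ in $G$ of the same length whose sets never intersect $C$. Each such $S_j$ is disjoint from the whole family, so in particular disjoint from the $t-1$ remaining sets after deleting $C$; since $t-1 > 2^{p\cdot\numq(\phi)}$, \cref{prop:many-sets} (read with $G$ as the larger graph containing $t-1$ sets, obtained by deleting $C$, i.e.\ $G-C$ has $t-1$ sets and $G$ has $t$) gives $G-C \models \phi(S_j) \iff G \models \phi(S_j)$ for every $j$. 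Thus the same sequence is a valid $\TJ(\phi)$-sequence in $G-C$, so $\dist_{\phi,G-C}(S,S') \le \ell = \dist_{\phi,G}(S,S')$.

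For the reverse direction $\dist_{\phi,G}(S,S') \le \dist_{\phi,G-C}(S,S')$, I would take a shortest sequence in $G-C$, where the surviving family has $t-1 > k$ sets. Applying \cref{lem:only-k-sets} inside $G-C$ confines a shortest such sequence to any $k$ of the surviving sets, so again every set $S_j$ in it is disjoint from the entire family $C_1,\dots,C_t$ (including $C$). Now \cref{prop:many-sets} applied in $G$ (which has the full family of $t > 2^{p\cdot\numq(\phi)}$ sets) shows $G \models \phi(S_j) \iff G-C \models \phi(S_j)$ for each $j$, so the sequence lifts verbatim to $G$, giving the inequality. Combining the two directions yields equality. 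Note the sequences in the two directions need not coincide; we only compare lengths, which suffices for the distance statement.

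The main obstacle I anticipate is getting the feasibility-transfer bookkeeping exactly right: \cref{prop:many-sets} requires that the set $X$ being tested be disjoint from \emph{all} $t$ (or $t-1$) sets of the family, so the crucial payoff of \cref{lem:only-k-sets} is precisely that it produces a sequence avoiding the family entirely, not merely avoiding $C$. I would therefore be careful to invoke \cref{lem:only-k-sets} with the correct count ($t$ or $t-1$ sets, both exceeding $k$) and to confirm the threshold $t > k + 2^{p\cdot\numq(\phi)}$ simultaneously beats $k$ (for the confinement step) and $2^{p\cdot\numq(\phi)}$ (for the deletion step) in both graphs. The rest is routine.
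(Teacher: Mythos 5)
Your overall skeleton (reduce both inequalities to \cref{lem:only-k-sets} plus \cref{prop:many-sets}) matches the paper's, but there is a genuine gap at exactly the point you flag as the "crucial payoff": you assert that \cref{lem:only-k-sets} yields a shortest sequence that is \emph{disjoint from the whole family} $C_{1},\dots,C_{t}$. It does not. The lemma only guarantees a shortest sequence that intersects $C_{i}$ \emph{only if} $i \in I$; the sequence may (and in general must) still place tokens inside the $k$ designated sets $\{C_{i} : i \in I\}$ — that is precisely why the lemma is stated with a surviving index set $I$ of size $k$ rather than asserting the family can be avoided outright. Consequently, your invocation of \cref{prop:many-sets} with the full family of $t$ (or $t-1$) sets fails: the sets $S_{j}$ of the sequence need not be disjoint from all members of that family, so the hypothesis "$X$ disjoint from the $t$ sets" is not met, and the claimed equivalence $G \models \phi(S_{j}) \Leftrightarrow G-C \models \phi(S_{j})$ does not follow. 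The same error recurs in your reverse direction.

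The repair is forced by the arithmetic of the hypothesis, and is what the paper does: choose $I = [k]$ (so that $C = C_{t} \notin \{C_{i} : i \in I\}$) and apply \cref{prop:many-sets} not to the full family but to the \emph{sub-family of avoided sets} $\{C_{k+1},\dots,C_{t}\}$. This sub-family still consists of disjoint size-$p$ sets of the same type, it contains the set $C$ being deleted, it has $t-k > 2^{p\cdot\numq(\phi)}$ members (this is exactly why the threshold is $k + 2^{p\cdot\numq(\phi)}$ and not merely $\max\{k, 2^{p\cdot\numq(\phi)}\}$), and every set in the confined sequence is disjoint from all of its members. With that substitution, feasibility transfers between $G$ and $G-C$ for every set avoiding $C_{k+1},\dots,C_{t}$, and since shortest sequences in both graphs may be assumed to avoid these sets, the two distances coincide. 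Your two-directional bookkeeping is otherwise fine, but as written the key feasibility-transfer step does not go through.
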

\begin{proof}
Without loss of generality, assume that $C = C_{t}$.
Let $G' = G - C_{t}$.
By \cref{lem:only-k-sets}, we can focus on shortest $\TJ(\phi)$-sequences (both in $G$ and in $G'$)
that may intersect $C_{1}, \dots, C_{k}$ but do not intersect $C_{k+1}, \dots, C_{t}$.
Since $|\{C_{k+1}, \dots, C_{t}\}| = t - k > 2^{p \cdot \numq(\phi)}$, 
\cref{prop:many-sets} implies that 
for every subset $X \subseteq V \setminus \bigcup_{k+1 \le j \le t} C_{j}$,
we have $G \models \phi(X)$ if and only if $G' \models \phi(X)$.
This implies that a sequence of vertex sets not intersecting $\bigcup_{k+1 \le j \le t} C_{j}$
is a $\TJ(\phi)$-sequence in $G$ if and only the sequence is a $\TJ(\phi)$-sequence in $G'$.
\end{proof}

The next lemma completes the proof of \cref{thm:msor2-td+k}
as it means that we have a kernel of \MSOR{1} parameterized by $\td(G)+k+|\phi|$
that preserves the minimum length of a $\TJ(\phi)$-sequence.

\begin{lemma}
\label{lem:nd-kernel}
Let $\langle \phi, G, S, S' \rangle$ be an instance of \MSOR{1} with $|S| = |S'| = k$.
In polynomial time,  one can compute a subgraph $H$ of $G$ 
such that $\dist_{\phi,G}(S,S') = \dist_{\phi,H}(S,S')$ and
the size of $H$ depends only on $\td(G) + k + |\phi|$.
\end{lemma}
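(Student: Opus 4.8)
The plan is to build $H$ by repeatedly deleting vertex sets that are redundant according to \cref{lem:removing-a-set}, until no large family of disjoint same-type sets remains, and then to argue that what survives is bounded in terms of $\td(G)+k+|\phi|$. The natural structure to exploit is a treedepth decomposition $F$ of $G$ of depth $d=\td(G)$, which is given with the input. The key combinatorial feature of treedepth is that every root-to-leaf path in $F$ has length at most $d$, and any connected subgraph of $G$ lives inside a single root-to-leaf \emph{branch} of $F$ (its vertices lie on one path from a root). I would first process $F$ in a bottom-up fashion: at each vertex $v$ of $F$, consider the subtrees hanging off $v$ (equivalently, the connected components of $G$ below $v$ once the ancestors are fixed as "context"). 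Because the context above $v$ has size at most $d$, the type of such a subtree—measured relative to how it attaches to its $\le d$ ancestors and which colors appear—is determined by a bounded amount of information. Hence the number of possible types of these subtrees is bounded by a function of $d$ and the number $c$ of colors, which is itself part of the parameter.

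The heart of the argument is then a counting/pruning step. Fix $p$ equal to (an upper bound on) the size of one such subtree after recursion, so that $2^{p\cdot\numq(\phi)}$ is a parameter-bounded quantity. Whenever, at some node $v$, the number of pairwise-isomorphic (same-type) child subtrees exceeds the threshold $k+2^{p\cdot\numq(\phi)}$ of \cref{lem:removing-a-set}, those subtrees form a family $C_1,\dots,C_t$ of disjoint same-type vertex sets disjoint from $S\cup S'$ (I would first discard any subtree meeting $S\cup S'$, of which there are at most $2k$, so only a bounded number of branches ever touch the solutions). Then I may delete one of them without changing $\dist_{\phi,G}(S,S')$, and repeat. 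After exhaustive deletion, each node retains at most $k+2^{p\cdot\numq(\phi)}$ children of each type, and at most $2k$ additional children touching $S\cup S'$, so the branching at every node is bounded by a parameter. Combined with depth $\le d$, this gives a bound on $|V(H)|$ depending only on $\td(G)+k+|\phi|$. Each deletion preserves the reconfiguration distance by \cref{lem:removing-a-set}, so the final $H$ satisfies $\dist_{\phi,G}(S,S')=\dist_{\phi,H}(S,S')$; and the whole pruning runs in polynomial time since computing the type partition of children and checking the threshold are polynomial, with only polynomially many deletions.

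The main obstacle I expect is defining "type of a subtree" so that same-type subtrees really are interchangeable in the strong sense demanded by \cref{prop:many-sets} and \cref{lem:removing-a-set}, namely that there is an automorphism of $G$ swapping two such sets and fixing everything outside them. This requires that the type capture not just the isomorphism class of the subtree in isolation but also the full attachment pattern to the (at most $d$) ancestors on its branch—and that two subtrees of the same type attach identically, so that swapping them extends to a global automorphism. Getting this bookkeeping right, and ensuring the recursion maintains a uniform per-subtree size bound $p$ (so the threshold $2^{p\cdot\numq(\phi)}$ stays parameter-bounded as we move up the decomposition), is the delicate part; the rest is a standard bounded-depth, bounded-branching size estimate once interchangeability is established.
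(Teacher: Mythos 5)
Your plan is correct and follows essentially the same route as the paper's proof: process a treedepth decomposition bottom-up, classify the child subtrees at each node by their colored-isomorphism type together with their attachment pattern to the at most $d$ ancestors, and apply \cref{lem:removing-a-set} whenever one type class exceeds the threshold, yielding bounded branching at bounded depth. The one point you flag as delicate—keeping the per-subtree size bound uniform as you ascend—is handled in the paper exactly as you anticipate, by an induction on height with mutually recursive bounds $c(h+1)$ (on the number of children) and $n(h+1)=n(h)\cdot c(h+1)+1$ (on subtree size), where the threshold at height $h+1$ is computed from $n(h)$.
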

\begin{proof}
Let $F$ be a treedepth decomposition of depth $\td(G)$.
If $F$ is not connected, then we add a new vertex $r$ and add edges from the new vertex to the roots of trees in $F$ 
and set $r$ to the new root. We call the resultant tree $T$.
If $F$ is connected, then we just set $T = F$ and call its root $r$.
Let $d$ be the depth of $T$. Note that $d \le \td(G) + 1$.

A node in $T$ has \emph{height} $h$ if
the maximum distance to a descendant is $h$, where the height of a leaf is $0$.
Let $c(0) = 0$, $n(0) = 1$, and for $h \ge 0$, let
\begin{align*}
  c(h+1) &= (k + 2^{n(h) \cdot \numq(\phi)}) \cdot 2^{|\phi| \cdot n(h)} \cdot 2^{(n(h) +d -h)^{2}} + 2k,
  \\
  n(h+1) &= n(h) \cdot c(h+1) + 1.
\end{align*}
In the next paragraph, we show that after exhaustively applying \cref{lem:removing-a-set} in a bottom-up manner along $T$,
each node of height $h$ has at most $c(h)$ children
and each subtree rooted at a node of height $h$ contains at most $n(h)$ nodes.
This implies that $H$ has at most $n(d)$ vertices, 
where $n(d)$ depends only on $\td(G)$, $k$, and $|\phi|$.
If $h = 0$, then the claim is trivial. Assume that the claim holds for some $h \ge 0$.
It suffices to prove the upper bound $c(h+1)$ for the number of children as the upper bound $n(h+1)$ follows immediately.
Suppose to the contrary that a node $v$ of height $h+1$ has more than $c(h+1)$ children.
Since $|S \cup S'| \le 2k$, more than $c(h+1) - 2k$ subtrees rooted at the children of $v$ have no intersection with $S \cup S'$.
Let $S_{1}, \dots, S_{p}$ be such subtrees.
By the induction hypothesis, $|V(S_{i})| \le n(h)$ holds for $i \in [p]$.
Let $R$ be the vertices on the $v$--$r$ path in $T$ (including $v$ and $r$).
Observe that, in $H$, the vertices in $V(S_{i})$ may have neighbors only in $V(S_{i}) \cup R$.
Thus the number of different types of $V(S_{1}), \dots, V(S_{p})$
is at most $2^{|\phi| \cdot n(h)} \cdot 2^{(n(h)+d-h)^{2}}$,
where $2^{|\phi| \cdot n(h)}$ is the number of possible ways for coloring $n(h)$ vertices with subsets of at most $|\phi|$ colors
and $2^{(n(h)+d-h)^{2}}$ is an upper bound on the number of different ways that
$n(h)$ vertices form a graph and have additional neighbors in $d-h$ vertices.
Since $p > c(h+1) - 2k = (k + 2^{n(h) \cdot \numq(\phi)}) \cdot 2^{|\phi| \cdot n(h)} \cdot 2^{(n(h) +d -h)^{2}}$,
there is a subset $I \subseteq [p]$ such that $|I| > k + 2^{n(h) \cdot \numq(\phi)}$ and 
all vertex sets $V(S_{i})$ with $i \in [I]$ have the same type.
This is a contradiction as \cref{lem:removing-a-set} can be applied here.

Finally, let us see how fast we can apply \cref{lem:removing-a-set} exhaustively in a bottom-up manner.
The description above immediately gives a fixed-parameter algorithm parameterized by $\td(G)+k+|\phi|$,
which is actually sufficient for our purpose.
A polynomial-time algorithm can be achieve in pretty much the same way as presented in~\cite{DemaineEHJLUU19}
for a reconfiguration problem of paths parameterized by $\td$.
The idea is to use a polynomial-time algorithm for labeled-tree isomorphism to classify subtrees into different types.
Only the differences here are that the graph is colored and the parameters involved are larger.
As the colors of the vertices can be handled by a labeling algorithm
and the involved parameters do not matter when they are too large (i.e., if it is $|V|$ or more),
we still obtain a polynomial-time algorithm.
\end{proof}

\subsection{A by-product: Cluster deletion number}

By using \cref{lem:removing-a-set} and \cref{thm:msor2-td+k},
 we can show a similar result for cluster deletion number.
For a graph $G$, a vertex subset $D$ of $G$ is a \emph{cluster deletion set}
if $G - D$ is a disjoint union of complete graphs.
The \emph{cluster deletion number} of $G$, denoted $\cd(G)$, 
is the minimum size of a cluster deletion set of $G$.
Since finding a minimum cluster deletion set is fixed-parameter tractable parameterized by 
$\cd(G)$~\cite{HuffnerKMN10}, we assume that such a set is given when $\cd(G)$ is part of the parameter.
Note that a twin cover is a cluster deletion set,
and thus $\cd(G) \le \tc(G)$ holds for every graph $G$.

\begin{corollary}
\label{cor:msor1-cd+k}
\MSOR{1} parameterized by $\cd + k + |\phi|$ is fixed-parameter tractable, where $k$ is the size of input sets.
Furthermore, for a yes instance of \MSOR{1}, finding a shortest $\TJ(\phi)$-sequence is fixed-parameter tractable with the same parameter.
\end{corollary}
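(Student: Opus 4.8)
The plan is to produce, in polynomial (indeed fixed-parameter) time, an induced subgraph $G'$ of $G$ with $\dist_{\phi,G'}(S,S') = \dist_{\phi,G}(S,S')$ whose treedepth is bounded by a function of $\cd + k + |\phi|$, and then to invoke \cref{thm:msor2-td+k}. Fix a minimum cluster deletion set $D$ with $|D| = \cd(G) =: d$, which we may assume is given. Then $G - D$ is a disjoint union of cliques. The obstruction to bounding the treedepth directly is that a single clique can be arbitrarily large (and $\td(K_m) = m$), so the first task is to shrink every clique using \cref{lem:removing-a-set}.

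First I would observe that two vertices $u,v$ lying in the same clique $Q$ of $G - D$ are twins exactly when they have the same neighborhood in $D$, since $N[u] = Q \cup (N(u) \cap D)$ and $N[v] = Q \cup (N(v) \cap D)$. Hence, if moreover $\mathcal{C}(u) = \mathcal{C}(v)$, the transposition exchanging $u$ and $v$ and fixing every other vertex is a color-preserving automorphism of $G$, so the singletons $\{u\}$ and $\{v\}$ have the same type in the sense of \cref{sec:fpt}. Grouping the vertices of $Q$ by color set and by neighborhood in $D$ therefore partitions $Q$ into at most $2^{|\phi|}\cdot 2^{d}$ classes of same-type singletons. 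Applying \cref{lem:removing-a-set} with $p = 1$ repeatedly, I can delete a vertex (outside $S \cup S'$) from any such class that still has more than $k + 2^{\numq(\phi)}$ members, each deletion preserving the minimum $\TJ(\phi)$-sequence length between $S$ and $S'$. After exhausting this, every clique has at most $2^{|\phi|+d}\bigl(k + 2^{\numq(\phi)}\bigr) + 2k$ vertices, a bound $B = B(\cd,k,|\phi|)$ depending only on the parameters. Writing $G'$ for the resulting induced subgraph, \cref{lem:removing-a-set} gives $\dist_{\phi,G'}(S,S') = \dist_{\phi,G}(S,S')$.

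Next I would bound $\td(G')$. Since $G' - D$ is a disjoint union of cliques of size at most $B$, a treedepth decomposition is obtained by arranging $D$ as a path descending from the root and attaching below it, for each clique $Q_j$, a path on $V(Q_j)$; every edge of $G'$ then joins an ancestor to a descendant, so $\td(G') \le d + B$, again a function of $\cd + k + |\phi|$ only. As every \mso{1} formula is an \mso{2} formula, \MSOR{1} on $G'$ is an instance of \MSOR{2}, so \cref{thm:msor2-td+k} solves it and finds a shortest $\TJ(\phi)$-sequence in $G'$ in fixed-parameter time with respect to $\td(G') + k + |\phi|$. Because each reduction step only removed vertices and, by \cref{prop:many-sets}, left the feasibility of every set disjoint from the removed vertices unchanged, such a sequence in $G'$ is also a shortest $\TJ(\phi)$-sequence in $G$.

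I expect the main obstacle to be the clique-shrinking step rather than the treedepth estimate: one must identify the correct equivalence—being twins with identical color set and identical attachment to $D$—and verify that such pairs genuinely have the same \emph{set}-type, so that \cref{lem:removing-a-set} applies; once the cliques are small, the rest is immediate. I would also note that the reduction runs in polynomial time, since the relevant classes are found simply by comparing color sets and neighborhoods in the size-$d$ set $D$.
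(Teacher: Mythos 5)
Your proposal is correct and follows essentially the same route as the paper's proof: shrink each clique of $G-D$ by grouping its vertices outside $S \cup S'$ into same-type singleton classes (determined by color set and neighborhood in $D$), repeatedly apply \cref{lem:removing-a-set} with $p=1$ until every class has at most $k + 2^{\numq(\phi)}$ members, bound the treedepth of the result by $\cd(G)$ plus the maximum clique size, and invoke \cref{thm:msor2-td+k}. The bounds you obtain match the paper's, and your extra justification that same-neighborhood, same-color vertices in a clique yield same-type singletons is a detail the paper leaves implicit.
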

\begin{proof}
Let $\langle \phi, G, S, S' \rangle$ be an instance of \MSOR{1} with $|S| = |S'| = k$
and $D$ be a cluster deletion set of $D$ with size $d \coloneqq \cd(G)$.

Let $C$ be a clique in $G - D$ such that $|C| > 2^{d+|\phi|} (k+2^{\numq(\phi)}) + 2k$,
and thus $|C \setminus (S \cup S')| > 2^{d+|\phi|} (k+2^{\numq(\phi)})$.
Since each vertex in $C$ has a subset of at most $|\phi|$ colors
and $2^{d}$ possible ways to have neighbors in $D$,
the vertices of $C \setminus (S \cup S')$,
as singletons, can be partitioned into at most $2^{d + |\phi|}$ types.
As $|C \setminus (S \cup S')| > 2^{d+|\phi|} (k+2^{\numq(\phi)})$,
at least one of the types includes more than $k+2^{\numq(\phi)}$ vertices.
For such a type, we apply \cref{lem:removing-a-set} with $p = 1$
and remove a vertex from $C \setminus (S \cup S')$.
Let $G'$ be the graph obtained by exhaustively applying this reduction.
We have $|C| \le 2^{d+|\phi|} (k+2^{\numq(\phi)}) + 2k$ for all cliques $C$ in $G'-D$.

Now observe that $G'$ has treedepth at most $d + 2^{d+|\phi|} (k+2^{\numq(\phi)}) + 2k$
since removing the $d$ vertices in $D$ cannot decrease the treedepth by more than $d$
and each clique $C$ in $G' - D$ has treedepth at most $|C|$.
Hence, \cref{thm:msor2-td+k} proves our claim.
\end{proof}


\section{PSPACE-completeness on forests of depth $3$}
\label{sec:pspace-c}

In this section, we complement \cref{thm:msor2-td+k}
by showing that if the size of input sets is not part of the parameter, then the problem becomes PSPACE-complete.

For a set $U$, a subset family $\mathcal{C} \subseteq 2^{U}$ is an \emph{exact cover} if
the elements of $\mathcal{C}$ are pairwise disjoint and $\bigcup_{C \in \mathcal{C}} C = U$.
For two exact covers $\mathcal{C}_{1}, \mathcal{C}_{2}$ of $U$,
we say that $\mathcal{C}_{1}$ can be obtained from $\mathcal{C}_{2}$ by a \emph{merge}
(and $\mathcal{C}_{2}$ can be obtained from $\mathcal{C}_{1}$ by a \emph{split})
if $\mathcal{C}_{1} \setminus \mathcal{C}_{2} = \{D_{1}\}$ and $\mathcal{C}_{2} \setminus \mathcal{C}_{1} = \{D_{2}, D_{3}\}$ 
for some $D_{1}$, $D_{2}$, $D_{3}$. 
Note that $D_{1} = D_{2} \cup D_{3}$ and $D_{2} \cap D_{3} = \emptyset$
as $\mathcal{C}_{1}$ and $\mathcal{C}_{2}$ are exact covers.

Given a set $U$, a family $\mathcal{D} \subseteq 2^{U}$,
and two exact covers $\mathcal{C}, \mathcal{C}' \subseteq \mathcal{D}$ of $U$,
\textsc{Exact Cover Reconfiguration} asks whether 
there exists a sequence $\mathcal{C}_{0}, \dots, \mathcal{C}_{\ell}$ of exact covers of $U$
from $\mathcal{C} = \mathcal{C}_{0}$ to $\mathcal{C}' = \mathcal{C}_{\ell}$ such that 
$\mathcal{C}_{i} \subseteq \mathcal{D}$ for all $i$
and $\mathcal{C}_{i}$ is obtained from $\mathcal{C}_{i-1}$ by a split or a merge for each $i \in [\ell]$.
It is known that \textsc{Exact Cover Reconfiguration} is PSPACE-complete~\cite{CardinalDEHW20}.

In this section, we prove the following hardness result
by reducing \textsc{Exact Cover Reconfiguration} to \MSOR{1}.
(Recall that \MSOR{1} belongs to PSPACE.)

\begin{theorem}
\label{thm:pspcae-c_td=3}
For some fixed $\phi$,
\MSOR{1} is PSPACE-complete on uncolored forests of depth~$3$.
\end{theorem}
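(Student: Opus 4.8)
The plan is to reduce from \textsc{Exact Cover Reconfiguration}, which is PSPACE-complete, to \MSOR{1} on depth-$3$ forests; membership in PSPACE was already noted, so only hardness needs the reduction. Given an instance $\langle U, \mathcal{D}, \mathcal{C}, \mathcal{C}' \rangle$, the first issue to resolve is that split and merge change the number of blocks, whereas the token jumping rule keeps the number of tokens fixed. I would therefore track tokens at the level of elements rather than blocks: introduce one \emph{slot} vertex $v_{D,u}$ for every incidence $u \in D$ with $D \in \mathcal{D}$, and let a token on $v_{D,u}$ mean ``element $u$ is currently covered by block $D$.'' A configuration then carries exactly $|U|$ tokens throughout, so I would set $k = |U|$. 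An exact cover $\mathcal{C}$ corresponds to the clean set $S_{\mathcal{C}} = \{ v_{D,u} : D \in \mathcal{C},\ u \in D\}$, and I would take $S = S_{\mathcal{C}}$ and $S' = S_{\mathcal{C}'}$.

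Next I would lay out the forest so that a single fixed \mso{1} formula can police local validity. Each block $D$ becomes a tree rooted at a vertex $R_D$ (level~$1$) whose children are the slots $v_{D,u}$ (level~$2$), and I would hang a private pendant leaf (level~$3$) under each slot so that slots are distinguishable by degree from auxiliary guard vertices. Grouping by block is then the parent relation under $R_D$, which lets a fixed $\phi$ enforce block \emph{atomicity} (the slots of a block are either all chosen or all unchosen) by a sibling constraint, so that a legitimate exact cover corresponds to a feasible set in which the chosen blocks are atomic, pairwise disjoint, and cover $U$. A merge $D_2, D_3 \to D_1$ with $D_1 = D_2 \sqcup D_3$ is then realised by moving, one token at a time, each slot token from $D_2$ or $D_3$ onto the matching slot of $D_1$; $\phi$ must be relaxed just enough to admit these in-progress configurations (and their split-direction mirrors), and the correspondence ``$\TJ(\phi)$-reachable iff split/merge-reachable'' would be proved by translating a shortest token-jump sequence into a split/merge sequence and back.

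The hard part will be designing the single fixed, uncolored $\phi$ together with the forest so that the two cross-cutting groupings of the incidence---by block and by element---are both enforced, even though a forest is acyclic and can realise only one of them as a parent relation. Concretely, atomicity is easy to read off the block-rooted trees, but \emph{disjointness} (no element covered twice) and the \emph{validity of a merge triple} $D_1 = D_2 \sqcup D_3$ both require relating slots that share the same element $u$ across different blocks, and wiring those relations in as edges would reintroduce cycles. The crux is thus to linearise the incidence into a depth-$3$ forest and to tune $\phi$ so that it accepts \emph{exactly} the clean exact covers and the transitional configurations of a single in-progress merge or split---never a mixture of two different moves, and never a configuration that would let the token jumps shortcut to an exact cover unreachable by splits and merges. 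Securing this tightness, so that no illegitimate reconfiguration slips through while every legitimate one survives, is where the real work lies; once it is in place, mapping $\mathcal{C}$ and $\mathcal{C}'$ to $S$ and $S'$ and verifying $\td(G) \le 3$ are routine.
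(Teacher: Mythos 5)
Your high-level plan matches the paper's: reduce from \textsc{Exact Cover Reconfiguration}, build one depth-$3$ tree per block with tokens living on its leaves, and let a fixed \mso{1} formula police clean configurations plus the transitional states of a single split/merge. But the proposal stops exactly at the step that carries the whole reduction, and the encoding you chose makes that step fail rather than merely remain open. With one slot vertex $v_{D,u}$ per incidence and a private pendant leaf under each slot, all slots are structurally indistinguishable in an uncolored forest, so no fixed formula can tell whether two slots in different trees refer to the same element $u$. Consequently your $\phi$ can only check \emph{cardinalities}: it would accept a ``merge'' of $D_2,D_3$ into any block $D_1$ with $|D_1|=|D_2|+|D_3|$, not just $D_1=D_2\sqcup D_3$, and it cannot detect that the resulting family fails to cover some element exactly once. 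The reduction as described is therefore unsound, not just incomplete.

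The paper resolves this with two ideas absent from your proposal. First, element identity is encoded in \emph{unary}: $U$ is taken to be a set of integers $\ge 3$, and the element $d\in D$ is represented inside $T_D$ by a star with exactly $d$ leaves, all of which carry tokens when $d$ is covered by $D$. Since token count is conserved and the formula forces a transfer to empty one marked star while filling another (both becoming clean simultaneously), the target star must have the same number of leaves, i.e., represent the same element; iterating star by star forces $D_2\subseteq D_1$ and $D_3=D_1\setminus D_2$. Second, the configuration is serialized by a pool of eight extra tokens on isolated vertices, which are parked on ``antennae'' of exactly three trees and on exactly two star centers to mark which trees and which stars participate in the current split/merge; the formula freezes everything unmarked and only releases the markers when the configuration is clean again. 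Without some mechanism of both kinds --- structural encoding of element identity and a marking/locking scheme --- the ``tightness'' you correctly identify as the crux cannot be secured.
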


\subsection{Construction}
Let $\langle U, \mathcal{D}, \mathcal{C}, \mathcal{C}' \rangle$
be an instance of \textsc{Exact Cover Reconfiguration}.
We construct an equivalent instance of \MSOR{1}.
Without loss of generality, we assume that $U$ is a set of positive integers greater than or equal to $3$.\footnote{%
We want to identify elements of $U$ with positive integers but do not want to use $1$ and $2$ for some technical reasons,
which will be clear in the proof.}

For each set $D \in \mathcal{D}$, we construct a tree $T_{D}$ as follows (see \cref{fig:treeD}).
The tree $T_{D}$ contains a central vertex called the \emph{root}.
For each $d \in D$, the root has a child with $d$ grandchildren.
We call the subtree rooted at a child of the root a \emph{star} and each leaf in a star a \emph{star leaf}.
Additionally, the root has two more children that have degree~$1$. They are called the \emph{antennae}.

\begin{figure}[htb]
  \centering
  \includegraphics[scale=1]{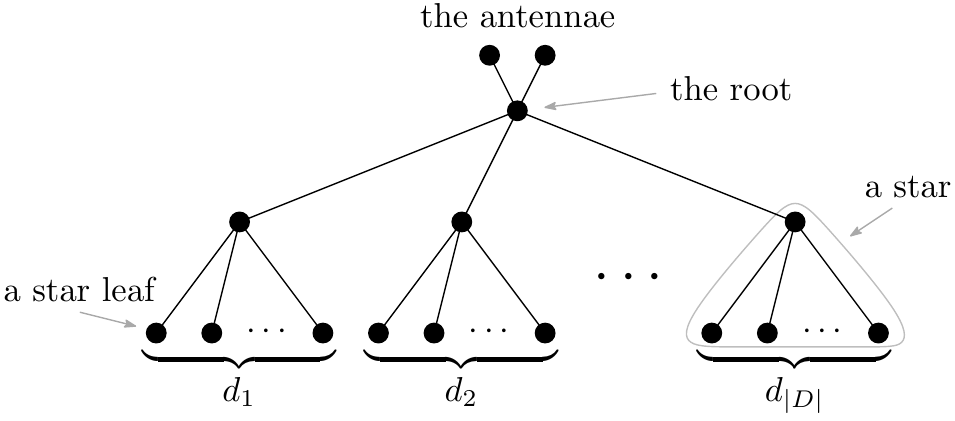}
  \caption{The tree $T_{D}$ with $D = \{d_{1}, d_{2}, \dots, d_{|D|}\}$.}
  \label{fig:treeD}
\end{figure}

The entire forest $F$ consists of trees $T_{D}$ for all $D \in \mathcal{D}$
and eight isolated vertices. By $I$, we denote the set of the isolated vertices.
Clearly, $F$ has treedepth~$3$.
The initial set $S$ consists of all vertices in $I$
and all star leaves of $T_{D}$ for all $D \in \mathcal{C}$.
Similarly, 
the target set $S'$ consists of all vertices in $I$
and all star leaves of $T_{D}$ for all $D \in \mathcal{C}'$.
Note that $|S| = |S'| = |U| + 8$.

For a set $R \subseteq V(F)$ and a set $D \in \mathcal{D}$,
the tree $T_{D}$ is \emph{full} (\emph{empty}) \emph{under} $R$
if $R$ contains all (no, resp.) star leaves of $T_{D}$,
and $T_{D}$ is \emph{clean} if it is full or empty.
We also say that, for a set $R \subseteq V(F)$ and a set $D \in \mathcal{D}$,
a star in $T_{D}$ is \emph{full} (\emph{empty})
if $R$ contains all (no, resp.) star leaves of the star,
and the star is \emph{clean} if it is full or empty.

A tree $T_{D}$ is \emph{marked} by a vertex set if both antennae are included in the vertex set.
A star in $T_{D}$ is \emph{marked} by a vertex set if the center (i.e., the unique non-leaf vertex) of the star is included.
We use the eight additional vertices to mark three trees and two stars.

We construct an \mso{1} formula $\phi(X)$ expressing that $X$ satisfies one of the following two conditions.
\begin{enumerate}
  \item Exactly eight vertices in $X$ are not star leaves and 
  all trees $T_{D}$ are clean.

  \item Exactly eight vertices in $X$ are not star leaves,
  exactly three trees $T_{D_{1}}$, $T_{D_{2}}$, $T_{D_{3}}$ are marked, all other trees are clean, 
  and the following conditions are satisfied. 
  \begin{itemize}
    \item One of the trees, say $T_{D_{1}}$, is not clean and another one, say $T_{D_{3}}$, is clean.
    \item Exactly two stars are marked, one in $T_{D_{1}}$ and the other in $T_{D_{2}}$.
    The marked star in $T_{D_{1}}$ is clean if and only if so is the marked star in $T_{D_{2}}$.
    \item All unmarked stars are clean.
  \end{itemize}
\end{enumerate}
Constructing such $\phi(X)$ is tedious but not difficult.
The expression is given in \cref{ssec:phi-for-hardness}.

We say that a vertex set $X$ is \emph{clean} if the first condition above is satisfied.
Since exactly eight vertices in $X$ are not star leaves in both case,
every $\TJ(\phi)$-move involves either two star leaves or two vertices that are not star leaves.
The definition of $\phi$ also implies that a token on an antenna can move only if the set is clean.
Also, a token on a star center can move only if the star is clean.

In what follows, we show that the constructed instance $\langle \phi, F, S, S' \rangle$ is a yes-instance of \MSOR{1}
if and only if
$\langle U, \mathcal{D}, \mathcal{C}, \mathcal{C}' \rangle$ is a yes-instance of \textsc{Exact Cover Reconfiguration}.

\subsection{The if direction}
Assume that $\langle U, \mathcal{D}, \mathcal{C}, \mathcal{C}' \rangle$ is a yes-instance of \textsc{Exact Cover Reconfiguration}.
Let $\mathcal{C}_{0}, \dots, \mathcal{C}_{\ell}$ be a reconfiguration sequence
from $\mathcal{C} = \mathcal{C}_{0}$ to $\mathcal{C}' = \mathcal{C}_{\ell}$ by splits and merges.
For $0 \le i \le \ell$, let $S_{i}$ be the set consists of
the eight vertices in $I$ and all star leaves in $T_{D}$ for all $D \in \mathcal{C}_{i}$.
Observe that $S_{0} = S$, $S_{\ell} = S'$, and each $S_{i}$ is clean.
Thus, it suffices to show that 
there is a $\TJ(\phi)$-sequence from $S_{i-1}$ to $S_{i}$ for all $i \in [\ell]$.

Let $i \in [\ell]$. Since $\TJ(\phi)$-sequences are reversible,
it suffices to consider the case where $\mathcal{C}_{i}$ is obtained from $\mathcal{C}_{i-1}$ by a split.
Let $\mathcal{C}_{i-1} \setminus \mathcal{C}_{i} = \{D_{1}\}$ and $\mathcal{C}_{i} \setminus \mathcal{C}_{i-1} = \{D_{2}, D_{3}\}$.
A $\TJ(\phi)$-sequence from $S_{i-1}$ to $S_{i}$ can be constructed as follows.
(Recall that $D_{1} = D_{2} \cup D_{3}$ and $D_{2} \cap D_{3} = \emptyset$.)
\begin{enumerate}
  \setcounter{enumi}{-1}
  \item Choose arbitrary six tokens on $I$ and call them the \emph{antenna tokens}, and the remaining two the \emph{star tokens}.
  \item Move the antenna tokens to the antennae of $T_{D_{1}}$, $T_{D_{2}}$, $T_{D_{3}}$.
  \item For each $d \in D_{2}$ and $j \in \{1,2\}$, let $S_{d}^{(j)}$ be the star in $T_{D_{j}}$ corresponding to~$d$.
  \label{itm:D2}
  \begin{enumerate}
    \item Move the star tokens to the centers of $S_{d}^{(1)}$ and $S_{d}^{(2)}$.
    \item Move the tokens on the star leaves of $S_{d}^{(1)}$ to the star leaves of $S_{d}^{(2)}$.
  \end{enumerate}

  \item Replace $D_{2}$ with $D_{3}$ and execute Step~\ref{itm:D2}.
  
  \item Move the antenna tokens and the star tokens to $I$.
\end{enumerate}
We can see that $S_{i}$ is obtained after the modifications above, and all intermediate sets satisfy~$\phi$.
Hence, this direction holds.

\subsection{The only-if direction}
Assume that $\langle \phi, F, S, S' \rangle$ is a yes-instance of \MSOR{1}.
Let $S_{0},\dots,S_{\ell}$ be a shortest $\TJ(\phi)$-sequence from $S = S_{0}$ to $S' = S_{\ell}$.
For each $S_{i}$, let $\mathcal{S}_{i} = \{D \in \mathcal{D} \mid T_{D} \ \text{is full under} \ S_{i}\}$.
Observe that $\mathcal{S}_{0} = \mathcal{C}$ and $\mathcal{S}_{\ell} = \mathcal{C}'$.
Since $d > 1$ for each $d \in U$,
a single $\TJ(\phi)$-move cannot make a full star empty.
This implies the following.
\begin{observation}
\label{obs:both-clean}
If $S_{i}$ and $S_{i+1}$ are clean, then $\mathcal{S}_{i} = \mathcal{S}_{i+1}$.
\end{observation}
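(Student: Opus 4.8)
The plan is to exploit the fact that every set in the sequence carries exactly eight non-star-leaf vertices, which forces a single token jump between two clean sets to be of a very restricted kind. First I would invoke the remark recorded just before the statement: since both $S_{i}$ and $S_{i+1}$ satisfy the first clause of $\phi$, each contains precisely eight vertices that are not star leaves, so the $\TJ(\phi)$-move from $S_{i}$ to $S_{i+1}$ either exchanges two star leaves or exchanges two non-star-leaf vertices. These two cases I would handle separately.

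The easy case is when the exchanged vertices are both non-star-leaves: then $S_{i}$ and $S_{i+1}$ contain exactly the same star leaves, so for every $D \in \mathcal{D}$ the tree $T_{D}$ is full under $S_{i}$ if and only if it is full under $S_{i+1}$, giving $\mathcal{S}_{i} = \mathcal{S}_{i+1}$ immediately. The crux is therefore to rule out the other case, where a star leaf $u$ is removed and a star leaf $v$ is added. Here I would argue that such a move is incompatible with $S_{i+1}$ being clean. Let $T_{D}$ be the tree containing $u$. Since $u \in S_{i}$ is a star leaf and $T_{D}$ is clean under $S_{i}$, the tree $T_{D}$ is not empty under $S_{i}$, hence it is full; in particular $v \notin S_{i}$ forces $v$ to lie in a different tree, so deleting $u$ is the only change affecting $T_{D}$. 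Because each element of $U$ is at least $3$ (in any case larger than~$1$), every star, and hence $T_{D}$, has at least two star leaves, so after removing $u$ the tree $T_{D}$ still retains a star leaf while no longer containing all of them. Thus $T_{D}$ is neither full nor empty under $S_{i+1}$, contradicting the cleanness of $S_{i+1}$.

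This contradiction leaves only the first case, which yields $\mathcal{S}_{i} = \mathcal{S}_{i+1}$ as desired. The only genuine subtlety — and thus the step deserving the most care — is the observation that each tree carries several star leaves, which is precisely where the assumption $U \subseteq \{3, 4, \dots\}$ is used; this is the same fact phrased informally as ``a single $\TJ(\phi)$-move cannot make a full star empty.'' Everything else follows directly from unwinding the definitions of \emph{clean} and of $\mathcal{S}_{i}$, so I expect no further difficulty.
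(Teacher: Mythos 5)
Your proof is correct and takes essentially the same route as the paper, which disposes of this observation with the single remark that, since each $d\in U$ exceeds $1$, a single $\TJ(\phi)$-move cannot turn a full star into an empty one; your case analysis just spells out why a move touching a star leaf would leave the affected tree neither full nor empty, contradicting the cleanness of $S_{i+1}$, so only moves between non-star-leaves remain and $\mathcal{S}_{i}=\mathcal{S}_{i+1}$ follows. No gaps.
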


The next lemma is the main technical ingredient in this direction.
\begin{lemma}
\label{lem:nonclean-interval}
If $S_{i}$ and $S_{i'}$ are clean, $i + 1 < i'$, and $S_{j}$ is not clean for all $j$ with $i < j < i'$,
then $\mathcal{S}_{i'}$ can be obtained from $\mathcal{S}_{i}$ by a split or a merge.
\end{lemma}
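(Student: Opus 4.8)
The goal is to show that a maximal "non-clean interval" of a $\TJ(\phi)$-sequence corresponds to exactly one split or merge of the associated exact covers. My plan is to track carefully what the formula $\phi$ permits during the interval $i < j < i'$ where every $S_j$ is not clean. The key structural facts I would lean on are the two observations already in the paper: that during a non-clean set exactly eight vertices are off the star-leaves, that three trees are marked and two stars are marked, and crucially that a token can leave an antenna only when the current set is clean, and a token can leave a star center only when that star is clean. First I would argue that the set of marked trees $\{T_{D_1},T_{D_2},T_{D_3}\}$ and the set of marked stars are \emph{invariant} across the whole interval: since the antennae and star centers cannot be vacated while the set is non-clean, no marking can change between $S_{i+1}$ and $S_{i'-1}$, and the transitions at the two endpoints (entering and leaving cleanliness) pin down these markings.

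\textbf{Reducing to tree-level bookkeeping.} Once the marked trees and stars are fixed throughout the interval, every $\TJ(\phi)$-move strictly inside the interval must be a move of two star-leaf tokens (the eight non-leaf tokens are frozen). By condition~2 of $\phi$, among the three marked trees one ($T_{D_3}$) is clean and stays clean, and the two marked stars (one in $T_{D_1}$, one in $T_{D_2}$) have synchronized cleanliness. I would then observe that all unmarked stars in marked trees, and all unmarked trees, are clean throughout, so their fullness status is constant during the interval; hence they contribute nothing to the change from $\mathcal{S}_i$ to $\mathcal{S}_{i'}$. The only fullness changes can happen in $T_{D_1}$ and $T_{D_2}$, and only through the single pair of synchronized marked stars. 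This forces a very rigid picture: star leaves can only migrate between the marked star of $T_{D_1}$ and the marked star of $T_{D_2}$, and the synchronization clause of $\phi$ means one of these stars empties exactly as the other fills.

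\textbf{Extracting the split/merge.} Comparing the clean endpoints $S_i$ and $S_{i'}$, I would argue that $T_{D_3}$ has the same fullness at both ends (it is clean and frozen, so by an Observation~\ref{obs:both-clean}-style argument it is unchanged), while $T_{D_1}$ and $T_{D_2}$ are the only trees whose membership in $\mathcal{S}$ can differ. Because each marked star corresponds to a single element $d\in U$ and the tree $T_D$ realizes the set $D$ through its stars, the net effect of the whole interval is to transfer the contents of $T_{D_1}$ into $T_{D_2}$ (or vice versa) one star at a time, with $T_{D_3}$ serving as the stationary partner that certifies the disjoint-union relation $D_1 = D_2\cup D_3$ (resp.\ $D_2\cup D_3 = D_1$). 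Translating "one tree becomes full while two become empty" (or the reverse) into the set language yields precisely $\mathcal{S}_i\setminus\mathcal{S}_{i'}=\{D_1\}$ and $\mathcal{S}_{i'}\setminus\mathcal{S}_i=\{D_2,D_3\}$ or its mirror, which is exactly a split or a merge.

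\textbf{The main obstacle.} The delicate part will be step two: proving that the marked stars really enforce a \emph{one-to-one} transfer of leaves and that fullness of $T_{D_1}$ can change only in lockstep with $T_{D_3}$ being the already-clean "receiving" tree. In particular I expect the synchronization clause — that the two marked stars are clean simultaneously — to be what prevents an intermediate set from being clean (which would contradict maximality of the interval) while still forcing the leaf counts in $T_{D_1}$ and $T_{D_2}$ to move in tandem. I would need to verify that no "illegal" configuration (e.g.\ a half-full marked star that cannot be completed because its partner is frozen) can arise, using the fact that $S_{i'}$ \emph{is} clean to conclude both marked stars end clean, and then running the synchronization constraint backward to see the transfer was a clean handoff of a single element $d$.
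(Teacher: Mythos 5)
There is a genuine gap, and it sits at the heart of your step two. You claim that the marked stars are invariant across the interval and that ``every $\TJ(\phi)$-move strictly inside the interval must be a move of two star-leaf tokens (the eight non-leaf tokens are frozen).'' This misreads what $\phi$ permits: a token on a star center is frozen only while \emph{that star} is not clean, not while the whole set is not clean (only the antenna tokens require global cleanliness to move). In the paper's proof the two star tokens repeatedly hop to new star centers as soon as the pair of marked stars becomes clean (one emptied, one filled), and this hopping is what lets the \emph{entire} contents of $T_{D_1}$ be transferred one element $d$ at a time. If the marked stars really were fixed, only a single star's worth of leaves could ever move, $T_{D_1}$ could never become empty (unless $|D_1|=1$), and the interval could not terminate in a clean set at all. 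Your own text is internally inconsistent on this point: you later say the contents are transferred ``one star at a time,'' which is impossible under your frozen-markings assumption.

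A second, related error: you assert that $T_{D_3}$ ``has the same fullness at both ends'' and serves only as a stationary certificate of $D_1 = D_2 \cup D_3$. In fact $T_{D_3}$ must flip from empty to full (in a split) or full to empty (in a merge): a simple token count shows that if $|D_1|$ leaf tokens leave $T_{D_1}$ and only $|D_2|$ arrive in $T_{D_2}$, the remaining $|D_1|-|D_2|$ must land in $T_{D_3}$. The paper's argument establishes this by showing that once $T_{D_2}$ is full and $T_{D_1}$ is partially emptied, the only way to reach a clean set without undoing progress is to empty $T_{D_1}$ into $T_{D_3}$, which forces $D_3 = D_1 \setminus D_2$. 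Finally, you never invoke the fact that the ambient sequence is \emph{shortest}; the paper uses this repeatedly to rule out the degenerate outcome in which the tokens simply return to $S_i$ (giving $\mathcal{S}_{i'}=\mathcal{S}_i$, which is neither a split nor a merge). Without that, the case analysis does not close.
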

\begin{proof}
Since $S_{i}$ is clean and $S_{i+1}$ is not,
the $\TJ(\phi)$-move from $S_{i}$ to $S_{i+1}$ involves two star leaves.
In $S_{i}$, the following conditions are satisfied:
\begin{itemize}
  \item all trees $T_{D}$ are clean;
  \item exactly three trees $T_{D_{1}}$, $T_{D_{2}}$, $T_{D_{3}}$ are marked;
  \item exactly two of them, say $T_{D_{1}}$ and $T_{D_{2}}$, contain one marked star each.
\end{itemize}
The second and third conditions above hold
as the next set $S_{i+1}$ satisfies $\phi$ but is not clean.

For $j \in \{1,2\}$, let $R_{j}$ be the marked star in $T_{D_{j}}$ under $S_{i}$.
We can see that $S_{i+1}$ is obtained from $S_{i}$
by moving a token in one of the stars $R_{1}$ and $R_{2}$ to the other.
By symmetry, assume that a token is moved from $R_{1}$ to $R_{2}$.
This implies that $T_{D_{1}}$ is full and $T_{D_{2}}$ is empty under $S_{i}$
and in particular that $R_{1}$ is full and $R_{2}$ is empty under~$S_{i}$.

Let us consider what kind of moves the sequence may take in the next steps.
Since $R_{1}$ and $R_{2}$ are not clean under $S_{i+1}$,
the definition of $\phi$ allows us to move only the tokens in the star leaves of $R_{1}$ and $R_{2}$ until both of them become clean.
Furthermore, the last condition of $\phi$ asks to make them clean at the same time.
This implies that $R_{1}$ and $R_{2}$ have to correspond to the same integer $d \in U$.
If we make $R_{1}$ full and $R_{2}$ empty, then we obtain $S_{i}$ again.
This contradicts the assumption that the sequence is shortest.
Therefore, we can conclude that the next $d-1$ steps move the tokens on star leaves in $R_{1}$ to $R_{2}$,
and make $R_{1}$ empty and $R_{2}$ full.
Now we can move tokens at the centers of $R_{1}$ and $R_{2}$ to star centers in the same trees.

Until one of $T_{D_{1}}$ and $T_{D_{2}}$ becomes clean,
we have to repeat the same steps of moving tokens
from the star leaves in a star to the star leaves in another star corresponding to the same integer.
If we make $T_{D_{1}}$ full (and thus $T_{D_{2}}$ empty), then we obtain $S_{i}$ again, contradicting the assumption.
Thus, we make $T_{D_{1}}$ empty or $T_{D_{2}}$ full.
Since $D_{1} \ne D_{2}$, it is impossible to make $T_{D_{1}}$ empty and $T_{D_{2}}$ full at the same time.
Thus we have the following two cases to consider:
\begin{enumerate}
  \item $T_{D_{1}}$ is empty and $T_{D_{2}}$ is not full;
  \item $T_{D_{1}}$ is not empty and $T_{D_{2}}$ is full.
\end{enumerate}
We claim that the first case corresponds to a merge and the second to a split.
In the following, we only consider the second case (corresponding to a split)
as the other case is symmetric and allows almost the same proof.

In the current situation,
$T_{D_{1}}$ itself is not clean but its stars are clean,
$T_{D_{2}}$ is full, and $T_{D_{3}}$ is clean.
Recall that the sequence eventually reaches the clean set $S_{i'}$ and
that unless all trees are clean,
no tokens on the antennae of $T_{D_{1}}$, $T_{D_{2}}$, $T_{D_{3}}$ can move.
Thus, in the next steps, we have to make all $T_{D_{1}}$, $T_{D_{2}}$, $T_{D_{3}}$ clean.
If $T_{D_{3}}$ is full now, then this can be possible only by moving back the tokens in $T_{D_{2}}$ to $T_{D_{1}}$ (as $D_{3} \ne D_{2}$).
This contradicts the assumption that the sequence is shortest, and thus we conclude that $T_{D_{3}}$ is empty at this moment.
Similarly, we cannot make $T_{D_{2}}$ empty as it is only possible by making $T_{D_{1}}$ full and leaving $T_{D_{3}}$ empty.
Hence, the only option is to make $T_{D_{1}}$ empty and $T_{D_{3}}$ full, while leaving $T_{D_{2}}$ full.
We can observe that this is possible only if $D_{1} \setminus D_{2} = D_{3}$
by applying almost the same argument for the first steps for making $T_{D_{2}}$ full.
Therefore, it holds that $\mathcal{S}_{i'} = (\mathcal{S}_{i} \setminus \{D_{1}\}) \cup \{D_{2}, D_{3}\}$,
i.e., $\mathcal{S}_{i'}$ can be obtained from $\mathcal{S}_{i}$ by a split.
\end{proof}

Let $S_{i_{0}}, \dots, S_{i_{p}}$
be the sequence obtained from $S_{0}, \dots, S_{\ell}$ by skipping non-clean sets.
By \cref{obs:both-clean,lem:nonclean-interval},
in the corresponding sequence $\mathcal{S}_{i_{0}}, \dots, \mathcal{S}_{i_{p}}$,
consecutive families $\mathcal{S}_{i_{j-1}}$ and $\mathcal{S}_{i_{j}}$ are either the same or in the split-merge relation
for every $j \in [p]$. Furthermore, $\mathcal{S}_{i_{0}} = \mathcal{S}_{0} = \mathcal{C}$
and $\mathcal{S}_{i_{p}} = \mathcal{S}_{\ell} = \mathcal{C}'$.
Therefore, there is a reconfiguration sequence from $\mathcal{C}$ to $\mathcal{C}'$ by splits and merges.
This completes the proof of \cref{thm:pspcae-c_td=3}.

\subsection{Constructing $\phi(X)$}
\label{ssec:phi-for-hardness}

Recall that the \mso{1} formula $\phi(X)$ expresses the property that $X \subseteq V(F)$ satisfies
one of the following two conditions.
\begin{enumerate}
  \item Exactly eight vertices in $X$ are not star leaves and 
  all trees $T_{D}$ are clean.

  \item Exactly eight vertices in $X$ are not star leaves,
  exactly three trees $T_{D_{1}}$, $T_{D_{2}}$, $T_{D_{3}}$ are marked, all other trees are clean, 
  and the following conditions are satisfied.
  \begin{itemize}
    \item One of the trees, say $T_{D_{1}}$, is not clean and another one, say $T_{D_{3}}$, is clean.
    \item Exactly two stars are marked, one in $T_{D_{1}}$ and the other in $T_{D_{2}}$.
    The marked star in $T_{D_{1}}$ is clean if and only if so is the marked star in $T_{D_{2}}$.
    \item All unmarked stars are clean.
  \end{itemize}
\end{enumerate}

We call a set satisfying the first condition \emph{clean}
and a set satisfying the second condition \emph{almost clean}.
In the following, we define formulas $\textsf{clean}(X)$ and $\textsf{almost-clean}(X)$
expressing clean sets and almost clean sets, respectively, and set
\[
  \phi(X) = \textsf{clean}(X) \lor \textsf{almost-clean}(X).
\]

\subsubsection{Auxiliary subformulas}

We first define $u\textsf{-leaf}(v)$ meaning that $v$ is a leaf (a degree-$1$ vertex) attached to $u$
and $\textsf{dist2}(r,s)$ meaning that the distance between $r$ and $s$ is exactly $2$:
\begin{align*}
  u\textsf{-leaf}(v) &= E(u,v) \land (\forall w \colon E(w,v) \Rightarrow w = u),
  \\
  \textsf{dist2}(r,s) &= (r \ne s) \land \lnot E(r,s) \land (\exists u \colon E(r,u) \land E(u,s)).
\end{align*}
Note that $u$ in $u\textsf{-leaf}(v)$ is a free variable too.

Now the property of being the root, center, or a star leaf in a tree $T_{D}$ can be expressed as follows:
\begin{align*}
  \textsf{root}(r)
  &= 
  \dot{\exists} u,v \colon
  r\textsf{-leaf}(u) \land
  r\textsf{-leaf}(v) \land
  (\forall w \colon r\textsf{-leaf}(w) \Rightarrow ((w = u) \lor (w = v))),
  \\
  \textsf{center}(c) &= \exists r \colon \textsf{root}(r) \land E(r,c) \land \lnot r\textsf{-leaf}(c),
  \\
  \textsf{star-leaf}(v) 
  &= 
  \exists r \colon \textsf{root}(r) \land \textsf{dist2}(r,v).
\end{align*}
Here we use the characterizations that
a root has exactly two leaves attached (recall that each star contains at least three star leaves),
a center is a non-leaf vertex adjacent to a root, and 
a star leaf has distance~$2$ to a root.

Assuming that $r$ is a root vertex,
the next formulas express that the tree $T_{D}$ rooted at $r$ is full, empty, or clean in $X$, respectively:
\begin{align*}
  \textsf{full-tree}(r,X)  &= \forall s \colon \textsf{dist2}(r,s) \Rightarrow s \in X, \\ 
  \textsf{empty-tree}(r,X) &= \forall s \colon \textsf{dist2}(r,s) \Rightarrow \lnot (s \in X),\\
  \textsf{clean-tree}(r,X) &= \textsf{full-tree}(r,X) \lor \textsf{empty-tree}(r,X).
\end{align*}

Similarly, assuming that $c$ is a center of some star,
the next formulas express that the star with the center $c$ is full, empty, or clean in $X$, respectively:
\begin{align*}
  \textsf{full-star}(c, X)  &= \forall s \colon c\textsf{-leaf}(s) \Rightarrow s \in X, \\ 
  \textsf{empty-star}(c, X) &= \forall s \colon c\textsf{-leaf}(s) \Rightarrow \lnot (s \in X), \\
  \textsf{clean-star}(c, X) &= \textsf{full-star}(c, X) \lor \textsf{empty-star}(c, X).
\end{align*}

Assuming that $r$ is the root vertex of a tree $T_{D}$,
the next formula means that $T_{D}$ is marked as the two antennae are only leaves attached to $r$:
\[
  \textsf{marked}(r,X) = \dot{\exists} u, v \in X \colon r\textsf{-leaf}(u) \land r\textsf{-leaf}(v).
\]

Finally, the following formula means that $X$ contains exactly eight vertices that are not star leaves.
\begin{align*}
  &\textsf{eight-non-star-leaves}(X)
  = \dot{\exists} v_{1}, \dots, v_{8} \in X \colon  \\
  &\qquad
  \textstyle\bigwedge_{1 \le i \le 8} \lnot \textsf{star-leaf}(v_{i})
  \land (\forall v \colon \lnot \textsf{star-leaf}(v) \Rightarrow \textstyle\bigvee_{1 \le i \le 8} v = v_{i}).
\end{align*}

\subsubsection{Main subformulas}
Given subformulas above, expressing $\textsf{clean}(X)$ is straightforward:
\begin{align*}
  \textsf{clean}(X) 
  =
  \textsf{eight-non-star-leaves}(X)
  \land
  \left(\forall r \colon \textsf{root}(r) \Rightarrow \textsf{clean-tree}(r, X)\right).
\end{align*}

The expression of $\textsf{almost-clean}(X)$ is a little more involved
but seeing the equivalence to the conditions is not difficult:
\begin{align*}
  &\textsf{almost-clean}(X) =
  \textsf{eight-non-star-leaves}(X) \land {} \\
  &\quad
  (\dot{\exists} r_{1}, r_{2}, r_{3} \colon
  \textstyle\bigwedge_{1 \le i \le 3} (\textsf{root}(r_{i}) \land \textsf{marked}(r_{i},X))
  \\
  &\quad \land
  \lnot \textsf{clean-tree}(r_{1}, X) \land
  \left(\forall r \colon \textsf{root}(r) \land \lnot \textsf{clean-tree}(r, X) \Rightarrow (r = r_{1}) \lor (r = r_{2})\right)
  \\
  &\quad\land
  (\dot{\exists} c_{1}, c_{2} \in X \colon 
  \textsf{center}(c_{1}) \land \textsf{center}(c_{2}) 
  \land E(c_{1}, r_{1}) \land E(c_{2}, r_{2}) \\
  & \qquad\qquad \land (\forall c \colon (\textsf{center}(c) \land (c \ne c_{1}) \land (c \ne c_{2})) \Rightarrow \textsf{clean-star}(c, X)) \\
  & \qquad\qquad \land (\textsf{clean-star}(c_{1}, X) \Leftrightarrow \textsf{clean-star}(c_{2}, X)))).
\end{align*}

\section{Remarks on the token sliding setting}
\label{sec:token-sliding}

Here we consider the token sliding variants of \MSOR{1} and \MSOR{2}
that require the exchanged vertices to be adjacent in each reconfiguration step.

\cref{thm:msor1-nd} says \MSOR{1} is fixed-parameter tractable parameterized by $\nd + |\phi|$.
Unfortunately, this result strongly depends on the token jumping setting.
Especially, \cref{lem:nd-sameshape} fails to hold for the token sliding setting.
Minor modifications do not look quite promising.

\cref{thm:msor2-td+k}, which says that \MSOR{2} is fixed-parameter tractable parameterized by $\td + k + |\phi|$,
can be easily modified for the token sliding setting. Actually, the proof works almost as it is.
Although the proof of \cref{lem:only-k-sets} depends on the token jumping setting,
it can be easily adopted to the token sliding setting.
Actually, if we start with a token sliding sequence, we end up with a token sliding sequence after the same modification in the proof.
No other part depends on the reconfiguration rule.
An immediate corollary to this observation is that 
\textsc{Independent Set Reconfiguration} under the token sliding rule is fixed-parameter tractable parameterized by $\td + k$.
This particular result was shown independently by Bartier et al.~\cite{BartierBM22}

The hardness shown in \cref{thm:pspcae-c_td=3} can be easily modified for the token sliding setting if we slightly weaken it.
To the forest $F$ constructed there, we add a universal vertex $u$ adjacent to all other vertices.
Then we define the feasibility of a set $X$ as 
either (i) $X$ satisfies the original formula $\phi$ in $F$, or
(ii) $X$ is obtained from a set $X'$ satisfying $\phi$ in $F$ by exchanging a member of $X'$ with the universal vertex $u$.
Constructing an \mso{1} formula expressing this condition is easy and proving the equivalence is straightforward.
This implies that the token sliding variant of 
\MSOR{1} is PSPACE-complete even on graphs of treedepth~$4$.


\section{Conclusion}
\label{sec:conclusion}

In this paper, we revisited the reconfiguration problems of vertex sets defined by \mso{} formulas,
while putting the length constraint of reconfiguration sequence aside.
We showed that the problem is fixed-parameter tractable parameterized solely by neighborhood diversity
and by the combination of treedepth and the vertex-set size.
The parameterization solely by treedepth would not work
since the problem is PSPACE-complete on forests of depth~$3$ as we showed.

Given the positive result for neighborhood diversity and the known hardness for clique-width
(implied by the one for bandwidth~\cite{Wrochna18}),
a natural target would be an extension to modular-width,
which is a parameter sitting between neighborhood diversity and clique-width (see \cref{fig:parameters}).
It is known that a special case, the independent set reconfiguration,
is fixed-parameter tractable parameterized by modular-width~\cite{BelmonteHLOO20},
but the algorithm in \cite{BelmonteHLOO20} is already quite nontrivial.

Another direction would be strengthening the hardness for treedepth.
In \cref{sec:pspace-c}, we showed the hardness for a quite complicated and rather unnatural formula~$\phi$, 
which simulates the merge and split operations.
Although this rules out the possibility of meta-theorems parameterized by treedepth,
it would be still interesting to investigate the complexity of specific more natural problems.
For example, what is the complexity of the independent set reconfiguration
and the dominating set reconfiguration parameterized solely by treedepth?



\bibliographystyle{siamplain}
\bibliography{mso-reconf}

\end{document}